\renewcommand{\mathbf}[1]{\bm{#1}}
\newtheorem{theorem}{Theorem}[section]
\newtheorem*{theorem*}{Theorem}
\newtheorem{fact}[theorem]{Fact}
\newtheorem{lemma}[theorem]{Lemma}
\newtheorem{corollary}[theorem]{Corollary}
\theoremstyle{definition}
\newtheorem{remark}{Remark}[section]
\crefname{theorem}{Theorem}{Theorems}
\crefname{observation}{Observation}{Observations}
\crefname{claim}{Claim}{Claims}
\crefname{condition}{Condition}{Conditions}
\crefname{example}{Example}{Examples}
\crefname{fact}{Fact}{Facts}
\crefname{lemma}{Lemma}{Lemmas}
\crefname{corollary}{Corollary}{Corollaries}
\crefname{definition}{Definition}{Definitions}
\crefname{remark}{Remark}{Remarks}
\crefname{proposition}{Proposition}{Propositions}
\newcommand{\st}[0]{\ensuremath{\;\mathbf{|}\;}}
\newcommand{\abs}[1]{\ensuremath{\left|#1\right|}}
\newcommand{\norm}[2][]{\ensuremath{\left\Vert #2 \right\Vert_{#1}}}
\newcommand{\ceil}[1]{\ensuremath{\left\lceil#1\right\rceil}}
\renewcommand{\Pr}[2][]{\ensuremath{\mathbb{P}_{#1}\insq{#2}}}
\newcommand{\E}[2][]{\ensuremath{\mathbb{E}_{#1}\insq{#2}}}
\newcommand{\inb}[1]{\left\{#1\right\}}
\newcommand{\inp}[1]{\left(#1\right)}
\newcommand{\insq}[1]{\left[#1\right]}
\newcommand*{\defeq}{\mathrel{\rlap{%
                     \raisebox{0.3ex}{$\m@th\cdot$}}%
                     \raisebox{-0.3ex}{$\m@th\cdot$}}%
                    =}
\newcommand*{\eqdef}{=
  \mathrel{\rlap{%
      \raisebox{0.3ex}{$\m@th\cdot$}}%
    \raisebox{-0.3ex}{$\m@th\cdot$}}%
}
\newcommand{\nfrac}[3][]{\nicefrac[#1]{#2}{#3}}
\newcommand{\R}[0]{\ensuremath{\mathbb{R}}}
\DeclareMathOperator{\polyint}{poly}
\newcommand{\poly}[1]{\ensuremath{\polyint\inp{#1}}}
\renewcommand{\vec}[1]{#1}
\newcommand{\enc}[0]{\ensuremath{\mathcal{E}}}
\newcommand{\dec}[0]{\ensuremath{\mathcal{D}}}
\newcommand{\vblock}[2]{\ensuremath{\textup{\textbf{bl}}(#1, #2)}}
\newcommand{\cblock}[2]{\ensuremath{\textup{\textbf{cl}}(#1, #2)}}
\newcommand{\lclock}[3][]{\ensuremath{\overline{\textup{\textbf{ll}}}_{#1}(#2, #3)}}
\newcommand{\mblock}[2]{\ensuremath{\textup{\textbf{Bl}}(#1, #2)}}
\newcommand{\mcblock}[2]{\ensuremath{\textup{\textbf{Cl}}(#1, #2)}}
\newcommand{\ran}[1]{\ensuremath{\textup{\textbf{Range}}(#1)}}
\newcommand{\tf}[0]{\ensuremath{\operatorname{\textbf{\textup{lg}}}}}
\newcommand{\lot}[1]{\ensuremath{#1}-lower triangular}
\newcommand{\C}[0]{\ensuremath{\mathcal{C}}}
\newcommand{\nset}{\mathbb N}
  \newenvironment{flexeq}
 {%
   \def\eqbreak{}%
   \expandafter\equation
 }
 {%
   \expandafter\endequation
 }
 \newenvironment{flexeq*}
 {%
   \def\eqbreak{}%
   \expandafter\displaymath
 }
 {%
   \expandafter\enddisplaymath
 }
 \newcommand{\eqbreak}{}
  \newenvironment{flexeq}
  {%
    \def\eqbreak{\\}%
    \expandafter\multline
  }
  {%
    \expandafter\endmultline
  }
  \newenvironment{flexeq*}
  {%
    \def\eqbreak{\\}%
    \expandafter\multline
  }
  {%
    \nonumber\expandafter\endmultline
  }
  \newcommand{\eqbreak}{\\}
\begin{document}
\title{Online codes for analog signals}
\thispagestyle{empty}
\author{Leonard J. Schulman \and Piyush Srivastava}
\let\thefootnote\relax\footnotetext{Leonard J. Schulman, California
  Institute of Technology. Email: \texttt{schulman@caltech.edu}.}
\let\thefootnote\relax\footnotetext{Piyush Srivastava, Tata Institute of
  Fundamental Research.  Email: \texttt{piyush.srivastava@tifr.res.in}.}
\footnotetext{Work supported in part by: United States NSF grants 1319745
  and 1618795; a Ramanujan Fellowship for the second author from SERB,
  Indian Department of Science and Technology; and a residency for the first
  author at the Israel Institute for Advanced Studies, supported by a EURIAS
  Senior Fellowship co-funded by the Marie Skłodowska-Curie Actions under
  the 7th Framework Programme.}

\maketitle

\begin{abstract}
This paper revisits a classical scenario in communication theory: a
waveform sampled at regular intervals is to be encoded so as to
minimize distortion in its reconstruction, despite noise. This
transformation must be online (causal), to enable
real-time signaling; and should use no more power than the original
signal. The noise model we consider is 
an ``atomic norm'' convex relaxation of the standard (discrete alphabet) Hamming-weight-bounded 
model: namely, adversarial $\ell_1$-bounded.
In the ``block coding'' (noncausal) setting, such encoding is possible due to the
existence of large almost-Euclidean sections in $\ell_1$ spaces, a
notion first studied in the work of Dvoretzky in 1961. Our main result is that an analogous result is achievable even
causally. Equivalently, our work may be seen as a ``lower
triangular'' version of $\ell_1$ Dvoretzky theorems.  %
In terms of communication, the guarantees are expressed in terms of
certain time-weighted norms: the time-weighted $\ell_2$ norm imposed on
the decoder forces increasingly accurate reconstruction of the
distant past signal, while the time-weighted $\ell_1$ norm on the
noise ensures vanishing interference from distant past noise.
Encoding is linear (hence easy to implement in analog
hardware). Decoding is performed by an LP analogous to those used in
compressed sensing.

{\textsc{Keywords:}}
  \emph{Online coding, Dvoretzky theorems, Analog signals, Random matrices.}

\end{abstract}

\setcounter{tocdepth}{1}
\tableofcontents

\newpage
\setcounter{page}{1}

\section{Introduction}
\label{sec:introduction-1}

\subsection{The problem} 
We study a fundamental scenario of
communication theory. A source is generating a waveform which we
sample at regular intervals. We wish to encode the signal in real
time, and decode the noise-affected transmission in real time, all while minimizing distortion in the reconstruction.

We require that the power (the $\norm[2]{\cdot}$ norm) of the transmission $\enc(x)$ not exceed a constant factor of the power  of the signal $x$; 
for simplicity of operation, we also wish the encoding map $\enc$ to be linear and deterministic. We operate in a worst-case model, namely, the adversary has  advance knowledge of the signal and its encoding.
Furthermore, at a minimum, we wish to have the following kind of decoding guarantee: for any signal $x: \nset \to \R$, and any bounded-power adversary noise $y: \nset \to \R$ (i.e., $\norm[2]{y}<\infty$), the ``limiting decoding'' $\dec(\enc(x)+y)$ should equal $x$.

Actually, much of our effort will be devoted to stronger results quantifying the rate with which the decoder can eliminate noise. 
For this we must examine more closely the strength of the adversary. 
A conventional approach in analog communications would be to allow noise $y$ such that $\norm[2]{y}$ is small compared with $\norm[2]{x}$. This is indeed the standard framework in analog
communication in which one first source-codes the signal using vector
quantization, then channel-codes the now discrete signal using a
finite alphabet, which, in turn, is encoded with a waveform. We would like, however to allow noise of comparable power to the signal, 
$\|y\|_2 \leq O(\|x\|_2)$, and even beyond.

It is immediately apparent, however, is that this kind of $\norm[2]{\cdot}$-bounded adversary is too powerful for the problem we consider:
 the adversary can assign $y=-\enc(x)$ and simply zero-out the transmission.

However,  a power constraint is only one plausible assumption on the noise source. The goal of our work is to show that if instead of the power constraint on the adversary, we make a different but very familiar assumption, we can provide an entirely different approach to this communication problem.

In our setting, where the noise is generated by an adversary, an alternative modeling assumption for noise is one
that has proven
fruitful as a model for \emph{signals} in the compressed sensing literature: it is that the
signal is bounded in (a possibly weighted) $\norm[1]{\cdot}$ norm.
An $\norm[1]{\cdot}$ norm bound (for a signal of given power)
is a kind of sparsity assumption, and sparsity is a natural characteristic
of many signal sources, which is in large part why this approach has
succeeded in compressed sensing~\cite{donoho_compressed_2006}. It is
therefore natural to pose the problem of protecting our signal against
interference by sparse signals generated by an adversary.
Indeed, in the context of digital error-correcting codes, the most
basic and prevalent model has long been of noise limited in Hamming
norm, which is precisely a sparsity assumption.  Relaxations of such
combinatorial sparsity assumptions to convex norms such as
$\norm[1]{\cdot}$ are also used to make them amenable to convex
programming formulations~\cite{CRPW12}.

Methodologically, the approach of considering adversaries bounded in the same norm as the signal has a fundamental limitation: no deterministic coding method can recover the signal to accuracy better than the signal-to-noise ratio. On the other hand, focusing on an adversary bounded in a different norm than the signal (here $\norm[1]{\cdot}$ rather than $\norm[2]{\cdot}$) opens the possibility of achieving in the limit noise-free decoding. That, as well as convergence rates to this limit, is the contribution of this paper: 
power-limited, real-time communications against an $\norm[1]{\cdot}$-bounded adversary.%
\footnote{As noted this cannot be achieved against \textit{general} $\norm[2]{\cdot}$-bounded adversaries. However, if an $\norm[2]{\cdot}$-bounded adversary eventually stops inserting noise, i.e., if $y$ has compact support, our decoding will be successful---reconstruction error will tend to $0$---because in this case the two norms are comparable.}

\subsection{An easier problem: block coding}
Undoubtedly, as for any error-correction problem, the most basic
problem which one must consider here
is that of
\textit{block coding} a signal. That is, the incoming signal is a
vector $\vec{x} \in \R^T$. We transmit at rate $1/\rho$, that is, we
map $\vec{x}$ to $\enc(x) \in \R^{\rho T}$. Our first constraint on
the encoder is an \emph{energy constraint}. If the encoder could
amplify the signal by an arbitrarily large factor, then it could swamp
out any interference by an adversary who is bounded in power or any other norm. Since this is an
unrealistic (and uninteresting) model for the encoder, we
stipulate that the total power of the transmission should be
comparable to the total power of the original message itself. That is, we ask that
\begin{equation*}
  \label{eq:30}
  \norm[2]{\mathcal{E}(\vec{x})_{[\rho t]}} \leq
  \norm[2]{\vec{x}_{[t]}}, \text{ for all $t \in
    [T]$}.\tag{*}
\end{equation*}
(Here $\vec{*}_{[t]}$ denotes the prefix of a vector
consisting of its first $t$ co-ordinates.) The noise source
adds a vector $y \in \R^{T\rho}$ onto $\enc(x)$; the noise $y$ may
depend upon $\enc(x)$. The receiver then applies a decoding map
$\dec\inp{\enc\inp{\vec{x}} + \vec{y}}$. The question is then what can
be achieved in terms of simultaneously
\begin{itemize}
\item Maximizing communication rate (minimizing $\rho$),
\item Minimizing distortion relative to noise, i.e., minimizing the ratio
$\frac{\norm[2]{
  \mathcal{D}(\mathcal{E}(\vec{x}) + \vec{y})
  - \vec{x}
}}{\norm[1]{\vec{y}}}$
\end{itemize}
As we discuss in more detail below, the answer to this question,
although not posed in this language, was given long ago in the work of
Milman~\cite{milman_new_1971}, Ka\v{s}in~\cite{Kashin77}, and Figiel, Lindenstrauss and
Milman~\cite{figiel_dimension_1977}, pursuing the study initiated by
Dvoretzky~\cite{dvoretzky61:_some_banac} of Euclidean
sections in Banach spaces.  Further, the codes so
achieved are \emph{linear}: the encoding operation consists of
multiplying the source vector by an appropriate $\rho T \times T$
matrix $A$, and the distortion ratio achieved is $O(T^{-1/2})$ (see
discussion leading to \cref{block-possible} below):
\begin{equation}
  \label{block-possibleA}
  \norm[2]{
    \mathcal{D}(\mathcal{E}(\vec{x}) + \vec{y})
    - \vec{x}
  } \leq O(T^{-1/2})\norm[1]{\vec{y}}.
\end{equation}

In contrast, our object of study in this paper is the \emph{real-time}
or \emph{causal} encoding and decoding of a source
generated on the fly, as for instance an audio signal, or the signal
from a remote sensor, in a distributed control setting.  While the
guarantees achieved in the offline (block coding) setting do serve as
a guideline for framing what might be achievable in online coding, it
will be clear from later discussions that not everything achievable offline can
be achieved in the online setting.

We now proceed to formulate the appropriate requirements for the
online setting.  The \emph{encoder} $\enc$ is required to be such that
the transmissions $1,\ldots,\rho t$ can depend only on the prefix
$\vec{x}_{[t]}$ of the message that is available to the source at time
$t$: in other words, $\rho \geq 1$ symbols are sent for each symbol of
the message, in a way such that these $\rho$ symbols depend only on
the prefix $\vec{x}_{[t]}$ of the message available at time $t$.  In
particular this enforces that
\begin{equation*}
  \label{eq:31}
  \mathcal{E}(\vec{x})_{[\rho t]} = \mathcal{E}(\vec{x}_{[t]}), \text{
    for all $t \in [T]$}.\tag{**}
\end{equation*}
The \emph{decoder} is now a collection of maps from $\R^{\rho t}$ to
$\R^t$ for each $t \in [T]$; the output of the decoder at time $t$ is
$\dec\inp{\enc\inp{\vec{x}_{[t]}} + \vec{y}}$, where
$\vec{y} \in \R^{\rho t}$ is the unknown error introduced by the
adversary up to time $t$.

As in the offline setting, we would like our encoder to be linear. The
requirement in \cref{eq:31} then implies that the matrix $A$
implementing the encoder needs to be lower triangular in the
rate-adjusted sense that $A_{i,j}=0$ if $i<j\rho$. (This is what we shall mean by ``lower triangular'' from here on.) However, none of
the constructions arising from the work on Euclidean sections cited
above provide a lower triangular $A$. This is to be expected since
our decoding requirement in \cref{block-possibleA}
is itself unreasonable in the online setting: for example, an
adversary who is silent for a while and then inserts a brief burst of
noise can satisfy the $\norm[1]{\cdot}$ bound over the history of the
communication, yet obliterate the last $\rho$ transmissions, which are
the only ones to carry information about the most recent portion of
the signal.

The above objection guides us toward the right decoding requirement in
the online setting. The idea is that the inaccuracy in the decoding of
a prefix $(\vec{x}_1,\ldots\vec{x}_t),$ of the signal should decrease
as time elapses after $t$, provided that the noise (even if
adversarial) is subject to a possibly time-weighted $\norm[1]{\cdot}$-norm
bound. Our aim is that for
that portion of the signal that is in the remote past, our decoding guarantee is
analogous to what can be achieved in the block coding setting. We now develop this idea quantitatively.

\subsection{Two inadequate definitions}
We start with two extreme formalizations, each of which captures one
desirable feature; and then combine these.
The first desideratum is that for any fixed $i$, as time $t$
goes on, our decoding of $x_i$ at time $t$ become ever-more accurate
provided that the noise is below tolerable limits. (And in particular
if the adversary stops injecting noise into the system.) This is
analogous to the decoding guarantee given for discrete alphabets by
tree codes.

We can formulate such a guarantee using a time-weighted norm for the
decoding error.  For a vector $\vec{x} \in \R^T$, we define the
$\norm[\star]{\vec{\cdot}}$ ``decoding norm'', in which the error on inputs from the remote past
is given higher weight than that on recent inputs:
\begin{equation}
  \label{eq:28}
  \norm[\star]{\vec{x}} \defeq
  \norm[\star(T)]{\vec{x}} \defeq
  \sqrt{\frac{1}{T}\sum_{i=1}^T(T - i + 1)x_i^2},
\end{equation}
and we modify the block-code decoding requirement
(\cref{block-possibleA}) to the following:
\begin{flexeq}
\label{eq:29}
\norm[\star(t)]{
  \mathcal{D}(\mathcal{E}(\vec{x}_{[t]}) + \vec{y})
  - \vec{x}_{[t]}
} \leq \frac{\norm[1]{\vec{y}}}{t^{1/2-\delta}}, \quad\eqbreak
\text{for all $t \in [T]$ and any given fixed $\delta \in (0, 1/2)$}.
\end{flexeq}
The flaw in this definition is that once the adversary has ever
injected noise into the system, no decoding is ever possible of
signals in the recent past (i.e., of $x_{t-c}$ at time $t$ for small
$c$), even if say the adversary has ceased to inject any noise after a
fixed time $t_0$. That is, requirement~\eqref{eq:29} fails a second desideratum: that the effects of any noise burst should dissipate over time.

This leads us to the other extreme: a decoding guarantee in which noise from the
distant past is allowed to contribute only vanishingly to the decoding
error.  For this we define the time-weighted ``noise norm''
$\norm[\dagger]{\vec{\cdot}}$:
\begin{equation}
  \label{eq:53}
  \norm[\dagger]{\vec{y}} \defeq
  \norm[\dagger(\rho T)]{\vec{y}} \defeq
  \sum_{i=1}^{\rho T} \abs{y_i} \inp{\frac{\rho T - i + 1}{\rho T}}^{-1/2},
\end{equation}
and impose again the decoding requirement
\begin{flexeq}
  \label{eq:54}
  \norm[2]{
    \mathcal{D}(\mathcal{E}(\vec{x}_{[t]}) + \vec{y})
    - \vec{x}_{[t]}
  } \leq \frac{\norm[\dagger(t)]{\vec{y}}}{t^{1/2-\delta}}, \notbool{j2col}{\quad}{\\}
\text{for all $t \in [T]$ and any given fixed $\delta \in (0, 1/2)$}.
\end{flexeq}
The flaw in this second definition is that it does not provide
gradually-improving decoding of each fixed input character (which was
the motivation for the first definition). For any fixed level of
noise, we have no better decoding guarantee on $x_1$ than on $x_T$ at
time $T$. (In particular, a bounded noise burst at time $T$ is enough
to ruin the decoding of $x_1$.)

\subsection{The satisfactory definition and our main result}
We achieve both desiderata with
a definition which time-weights both the adversary's noise and the decoding
error. Formally, for any $\mu \in [0, 1]$, define
\begin{equation} \begin{aligned}
  \label{eq:55}
  \norm[\star_\mu]{\vec{x}} & \defeq \norm[\star_\mu(T)]{\vec{x}}
  \defeq
  \sqrt{\sum_{i=1}^T\inp{\frac{T - i + 1}{T}}^\mu x_i^2} \\
  \norm[\dagger_{\mu}]{\vec{y}} &\defeq \norm[\dagger_\mu(\rho T)]{\vec{y}}
  \defeq \sum_{i=1}^{\rho T} \inp{\frac{\rho T - i + 1}{\rho
      T}}^{-(1-\mu)/2}\abs{y_i}.
\end{aligned} \end{equation}

This subsumes the earlier cases we considered:
\cref{eq:28} is the case $\mu = 1$ while \cref{eq:53} is the case
$\mu = 0$.

Now, given any choice of $\mu \in (0,1]$ and $\delta \in (0, 1/2)$, we
demand the decoding guarantee (generalizing \cref{eq:29,eq:54}):
\begin{equation*}
  \label{eq:56}
  \norm[\star_\mu(t)]{
    \mathcal{D}(\mathcal{E}(\vec{x}_{[t]}) + \vec{y})
    - \vec{x}_{[t]}
  } \leq
  \frac{\norm[\dagger_\mu(\rho t)]{\vec{y}}}{t^{1/2-\delta}}, \notbool{j2col}{\quad}{}
  \text{for all $t \in [T]$}.
 \tag{***}
\end{equation*}

Note that for any $\mu$, the penalty imposed by the
$\norm[\star_\mu]{\cdot}$-norm for errors made in decoding entries far
away in the past (say at times $s < c t$) is the same (to within a
constant factor $c' = c'(c)$) as that imposed by the
$\norm[2]{\cdot}$-norm.  Similarly, the weight assigned by the
$\norm[\dagger]{\cdot}$ norm to the adversary's noise inserted at
times $s < c t$ is within a constant factor to its unweighted
$\norm[1]{\cdot}$ norm.  However, when we are decoding entries $x_s$
for $s$ close to $t$, for which we do not yet have much information,
these weighted norms allow us to make larger errors in decoding
without much penalty. For $0<\mu \leq 1$, the requirement
\eqref{eq:56} on the decoder guarantees that as time progresses, so
does our ability to attenuate the error introduced by the adversary.
Further, in \Cref{prop:lower-bound}, we show that our requirements
enforce that the scaling of the attenuation factor in \eqref{eq:56} cannot be
$O(t^{-1/2})$ and must be of the form $\omega(t^{-1/2})$. In this the online coding problem differs from the block coding or $\norm[1]{\cdot}$-Dvoretzky problem.

Our main result is that for any fixed $\mu \in (0,1]$ and any
$\delta \in \inp{0, \frac{1}{2}}$, there is a constant-rate, constant-power code achieving
requirement~\eqref{eq:56}.
Our code is linear, and decoding too is efficient: the decoder solves a linear program analogous
to those appearing in the compressed sensing literature.

\noindent \textbf{Notation}.  For a $\rho T \times T$ matrix $A$, we
denote by $A_t$ the $\rho t \times t$ matrix consisting of its top
$\rho t$ rows and leftmost $t$ columns.
\begin{theorem}[\textbf{Informal, see \Cref{thm:encoding-matrix} for a
  formal statement}]
  \label{thm:main-informal}
  For any $\mu \in (0, 1]$ and $\delta \in (0, \frac{1}{2})$, there
  exists a rate parameter $\rho > 0$ for which there exists an encoder
  $\enc$ and a decoder $\dec$ satisfying the energy, error
  attenuation, and causal constraints in
  eqs.~\eqref{eq:30}, \eqref{eq:31}, and \eqref{eq:56}.

  In particular, the encoder \enc\ acts as left multiplication by a
  $\rho T \times T$ matrix $\C$ that is rate-adjusted lower triangular (i.e.,
  $\C_{ij} = 0$ when $i \leq (j-1)\rho$), where $T$ is the total time
  of transmission.  At time $t \leq T$, the decoder $\dec$ acts by
  making an $\norm[\dagger]{\cdot}$-norm projection to $\ran{\C_t}$ and then
  applying $\C_t^{-1}$ (which is well defined on $\ran{\C}$).
\end{theorem}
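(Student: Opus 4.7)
\medskip\noindent\textbf{Proof plan.}
The plan is to realize the encoder matrix $\C$ as a rate-adjusted lower triangular random matrix with independent Gaussian entries whose variances are tuned (likely depending on the column index $j$, or equivalently on the ``depth'' $\lceil i/\rho\rceil$ of active columns at row $i$) so that the operator norm of $\C_t$ viewed as an $\ell_2 \to \ell_2$ map is bounded by a constant uniformly in $t \in [T]$, which yields the energy constraint~\eqref{eq:30}. Causality~\eqref{eq:31} is built in by the sparsity pattern $\C_{ij}=0$ for $i\le(j-1)\rho$. The decoder is the one advertised in the statement: the $\norm[\dagger_\mu]{\cdot}$-projection onto $\ran{\C_t}$ followed by $\C_t^{-1}$, which is well-defined because a Gaussian rate-adjusted lower triangular matrix has full column rank almost surely. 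Writing $\hat{x}$ for the decoded signal and $u \defeq \hat{x}-\vec{x}_{[t]}$, optimality of the $\dagger_\mu$-projection and the triangle inequality give $\norm[\dagger_\mu(\rho t)]{\C_t u}\le 2\,\norm[\dagger_\mu(\rho t)]{\vec{y}}$, so the entire decoding guarantee~\eqref{eq:56} reduces to the deterministic-looking inequality
\begin{equation*}
  \norm[\dagger_\mu(\rho t)]{\C_t u} \;\ge\; 2\,t^{1/2-\delta}\,\norm[\star_\mu(t)]{u}
  \qquad \text{for all } u \in \R^t \text{ and all } t \in [T].
\end{equation*}

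For a fixed prefix length $t$ and a fixed direction $u$, the coordinates of $\C_t u \in \R^{\rho t}$ are independent centered Gaussians whose variances depend on $u$ through the triangular sparsity pattern (roughly, the $i$th has variance proportional to $\sum_{j\le\lceil i/\rho\rceil} u_j^2$), so $\norm[\dagger_\mu(\rho t)]{\C_t u}$ is a weighted sum of independent half-normals. I would combine two standard ingredients: (i) a direct mean estimate $\E{\norm[\dagger_\mu(\rho t)]{\C_t u}} \gtrsim t^{1/2-\delta}\,\norm[\star_\mu(t)]{u}$, and (ii) sub-Gaussian concentration around this mean, which follows from Lipschitz concentration of Gaussians since the map $\C\mapsto\norm[\dagger_\mu]{\C_t u}$ is linear in the Gaussian entries with Lipschitz constant controlled by $\norm[2]{u}$ and the $\dagger_\mu$ weights. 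An $\epsilon$-net of cardinality $(C/\epsilon)^t$ on the $\norm[\star_\mu(t)]{\cdot}$-unit sphere in $\R^t$, combined with (i)--(ii) and a Lipschitz extension off the net, then yields the required inequality uniformly over $u$ for this fixed $t$ with probability at least $1-\exp(-\Omega(t))$; a final union bound over the at most $T$ prefix lengths closes the argument.

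The principal obstacle is the mean estimate (i), which is where the rate-adjusted lower triangular structure and the $\mu$-dependent weighting really interact. Unlike the block-coding Dvoretzky setting, the row variances in $\C_t u$ grow with $i$ as more columns of $\C$ switch on, while the $\dagger_\mu$-weight simultaneously inflates precisely those recent rows; one must verify that these two effects cooperate rather than cancel for \emph{every} direction $u$. The natural attack is a dyadic decomposition of both the source indices $j$ and the output-time indices $i$ into windows of geometrically increasing size, so that within each window the two weight functions are essentially constant and the within-window contribution reduces to a block-coding-style almost-Euclidean-section estimate. Summing contributions across windows produces the $t^{1/2-\delta}$ lower bound, with the $\delta>0$ slack appearing as the apparently unavoidable cost of stitching the dyadic scales together---consistent with the matching lower bound announced in \Cref{prop:lower-bound}. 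The per-column variance scaling and the rate $\rho = \rho(\mu,\delta)$ must then be chosen so that the concentration exponent in (ii) comfortably dominates the $t$-dimensional net entropy uniformly across all dyadic scales and all prefix lengths $t \in [T]$.
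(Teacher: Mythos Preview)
Your reduction of the decoding guarantee to the two-sided norm inequality, your choice of decoder, and your instinct to use a dyadic decomposition are all exactly right and match the paper. But step (ii) of your plan---Lipschitz Gaussian concentration followed by a single $\epsilon$-net on the full $t$-dimensional $\norm[\star_\mu(t)]{\cdot}$-sphere---cannot produce the claimed failure probability $\exp(-\Omega(t))$ uniformly over directions. Consider $u=e_t$: by the lower-triangular pattern only the last $\rho$ coordinates of $\C_t u$ are nonzero, so $\norm[\dagger_\mu(\rho t)]{\C_t u}$ is a function of only $\rho$ independent Gaussians. No concentration inequality can then give a deviation bound better than $\exp(-O(\rho))$, and since $\rho$ is a fixed constant this cannot beat the $\exp(\Theta(t))$ cardinality of your net. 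More generally, for $u$ supported on the last $s$ coordinates the concentration exponent is $O(\rho s)$, so the mismatch between net entropy and concentration is intrinsic to the triangular structure and cannot be repaired by tuning the variance profile or increasing $\rho$.

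The paper's remedy is to run the $\epsilon$-net argument \emph{scale by scale} rather than once on all of $\R^t$: at dyadic scale $j$ one nets over the $2^{j-1}$-dimensional block $\vblock{u}{j}$ together with (a net on) the at-most-$(2^{j-1}-1)$-dimensional range of the preceding blocks, and the $k\cdot 2^{j-1}$ output rows in the matching window furnish concentration $\exp(-\Omega(k\,2^{j}))$, which now does dominate the net. An induction across scales (\Cref{lem:max-norm} feeding \Cref{thm:main-lower-b}) then stitches the estimates together, and the $\delta$-loss arises precisely here. Even after this fix, the worst scale $j=1$ leaves the per-$t$ failure probability at only $\exp(-\Omega(k))$, so a union bound over $t\in[T]$ would force $k=\Omega(\log T)$; the paper avoids this by proving the estimate only for the prefix that drops the last $\mathrm{poly}(\log t)$ coordinates (starting the induction at $j_0=\Theta(\log\log t)$, which boosts the probability to $1-O(t^{-2})$) and then handling the dropped suffix by making the code systematic (\Cref{sec:code-matrix}). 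Your proposal omits both of these structural ingredients.
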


A natural special case of the above is with $\mu=1/2$. In this case the decoder and encoder guarantee
\begin{flexeq*}
\sqrt{\sum_{i=1}^T\left(\frac{T - i + 1}{T}\right)^{1/2} \vert 
\mathcal{D}(\mathcal{E}(\vec{x}_{[t]}) + \vec{y})_i
-x_i\vert^2}
\eqbreak
\leq O(T^{\delta-1/2}) \sum_{i=1}^{\rho T} \left(\frac{\rho T - i +
    1}{\rho T}\right)^{-1/4}\vert{y_i}\vert
\end{flexeq*}

In particular, if both the waveform values $x_i$ and the noise values $y_i$ are
$\Theta(1)$, then the error incurred by the decoder on a given entry of the
signal, decreases to zero (at almost a $T^{-1/2}$ pace) as the communication
continues in time.  {We also note that
 the quantity $(1-\mu)$ appearing in the exponent of the gain factor
  used in the $\norm[\dagger_\mu]{\cdot}$ norm cannot be replaced by any
  strictly smaller quantity; see the remark following
  Theorem~\ref{thm:encoding-matrix} for details.}
\subsection{Block coding and the Dvoretzky theorem}
\label{sec:block-coding-dvor}

We now revisit the relation between Euclidean sections and block
coding briefly alluded to above.
Our goal in this paper may also be framed as showing the existence of
a ``lower triangular'' analogue of a Euclidean section. 
This lower triangular constraint is the main source of technical
difficulty in our work as compared to previous work; in particular,
our method is quite different. The prior work does, however, show
some limits on what can be achieved: specifically, it is enough to imply that
the parameter
$\delta$ in \Cref{thm:main-informal} has to be non-negative.
In recent years, the classic work on Euclidean sections has been
re-interpreted explicitly in coding-theoretic language in a line of
work that seeks to derandomize the original
constructions~\cite{artstein-avidan_logarithmic_2006,lovett_almost_2008,
  guruswami_euclidean_2008,guruswami_almost_2010,IndykS2010}.  We now sketch
these connections.

Dvoretzky~\cite{dvoretzky61:_some_banac} initiated the study of the
existence of large subspaces $S$ of $\R^n$ equipped with an arbitrary
norm which are ``close'' to being Euclidean. Our interest here is in
the case where the norm is an $\norm[p]{\cdot}$-norm with $p = 1$, in
which case the condition of $S$ being close to Euclidean can be
written as
\begin{displaymath}
  \sup_{\vec{x} \in S}\frac{\sqrt{n}\norm[2]{\vec{x}}}{\norm[1]{\vec{x}}}
  \leq \Delta.
\end{displaymath}
Here $\Delta$ is the distortion of the section, and one seeks to make
it as close to $1$ as possible.  The problem of finding Euclidean
sections of large dimension has also been extensively studied,
starting with the work of Figiel, Lindenstrauss and Milman, and of
Ka\v{s}in~\cite{milman_new_1971,Kashin77,figiel_dimension_1977}, and in
the special case $p = 1$ it is known that there exists a constant
$c > 1$ (depending on $\Delta$) such that $(\R^{cn}, \norm[1]{\cdot})$
contains an Euclidean section of dimension $n$.

An equivalent view of Euclidean sections can be obtained in terms of a
modified ``condition number'' of appropriate tall matrices~(see, e.g.,
\cite{figiel_dimension_1977}).  In particular, if there exists a real
$cn \times n$ matrix $A$ of rank $n$ such that
\begin{equation}
  \norm[2\rightarrow 2]{A}\cdot \norm[1\rightarrow 2]{A^{-1}}
  \leq \frac{\Delta}{\sqrt{n}} \label{eq:32}
\end{equation}
then $(\R^{cn}, \norm[p]{\cdot})$ has a Euclidean section of dimension
$n$ (namely, \ran{A}) with distortion at most $\Delta$ (Here, and
subsequently, $A^{-1}$ denotes Moore-Penrose
pseudo-inverse).  It is also not hard to see that the existence of
such a Euclidean section implies the existence of a rank $n$
$cn \times n$ matrix $A$ satisfying \cref{eq:32}.

This representation of an Euclidean section allows us to view it as a
``block'' version of the codes we seek in this paper.  For, let $A$ be a
matrix satisfying the constraint in \cref{eq:32}, and assume without
loss of generality that $\norm[2\rightarrow 2]{A} = 1$ (this can be
ensured since the requirement in \cref{eq:32} is invariant under
scaling $A$ by constants).  Define the encoder \enc\ as left
multiplication by $A$: $\enc(\vec{x}) = A\vec{x}$.  The decoder $\dec$
acts on an input $\vec{y}$ by first finding the point $\vec{y'}$ in
\ran{A}\ that is closest to $\vec{y}$ in the $\norm[1]{\cdot}$-norm
(choosing one arbitrarily if there are several such points), and then
returning $A^{-1}\vec{y'}$.  Since $\norm[2 \rightarrow 2]{A} = 1$,
the energy constraint (\cref{eq:30}) is satisfied automatically.
Using \cref{eq:32} it can also be shown that
\begin{equation}
  \label{block-possible}
  \norm[2]{
    \mathcal{D}(\mathcal{E}(\vec{x}) + \vec{y})
    - \vec{x}
  } \leq \frac{2\Delta}{\sqrt{n}}\norm[1]{\vec{y}}.
\end{equation}
It is this guarantee for block decoding that we compare our
result in \Cref{thm:main-informal} against.

\subsection{Related work}
\label{sec:related-work}
We are following here on two main lines of work in communications. One is the investigation begun by Sahai and Mitter of the
``anytime capacity'' of a communication channel, which they discovered
to be essential to the feasibility of using that channel to control an
unstable plant in real time~\cite{SahaiM06}. Several types of channels and noise have been
studied but the primary concern in that literature is the role of channel noise in a feedback loop,
and to our knowledge there is no result which resembles ours.
The second concerns real-time communication of discrete signals over
discrete channels; one of the results from that literature is that it
is possible to causally encode a signal in such a manner that at all
times $T$, if the noise has so far corrupted only $cT$ characters
($c>0$ sufficiently small), then the decoder can correctly determine
the initial $(1-O(c))T$ characters~\cite{Sch96}. Our main result in this
paper is intended as the appropriate analog of the latter statement for a physical signal
and a physical channel, where ``characters'' are amplitudes of
a  waveform.

Our proof of existence of the code proceeds through an analysis of
certain random matrices with independent but not identically
distributed Gaussian entries.  In this light, our requirements,
especially when rephrased in terms similar to \cref{eq:32}, are
connected to the long line of work on the condition number of almost
square (and even square) random matrices (see, e.g,
\cite{edelman_tails_2005,rudelson_invertibility_2008,rudelson_smallest_2009,tao_inverse_2009}
and the references therein).  Note, however, that we are concerned
here with an analogue of a $\norm[1 \rightarrow 2]{\cdot}$-norm of the
pseudo-inverse of the encoding matrix, while in the work on condition
number the emphasis is on the $\norm[2 \rightarrow 2]{\cdot}$ norm of
the inverse.  Further, much of the work on the condition number has
considered rectangular random matrices with identically distributed
entries (see, however, the work of Cook~\cite{cook_lower_2016} and
Rudelson and Zeitouni~\cite{rudelson_singular_2016} for recent
progress on the lowest singular value of a class of structured
matrices with non-i.i.d. entries) while we are in a very different
regime---the main technical challenge of our work is to deal with the
pseudo-inverse of random lower triangular matrices (whose non-zero
entries are also not identically distributed).  Nevertheless, we
believe that the techniques developed in the work on the condition
number may be relevant for further improvements of our result,
especially on the question of achieving an optimal rate.  We also note
in passing that if one is concerned only with the norm of a matrix
with independent but not necessarily i.i.d.\ entries (rather than the
norm of its pseudo-inverse) then there are results in the literature
providing good asymptotic bounds~(see, e.g.,
\cite{latala_estimates_2005,schutt_expectation_2013,bandeira_sharp_2014,latala_dimension-free_2017}).
Our analysis in fact uses one of these bounds from the work of
Bandeira and van Handel~\cite{bandeira_sharp_2014}.

A rather different notion of online coding
underlies the long and celebrated line of work on fountain
codes~\cite{byers_digital_1998}.
Recall that in our online coding setting, (1) the encoder does not
receive the message symbols as a block but in an online fashion; (2)
the adversary corrupts transmitted symbols rather than erasing them,
so that the receiver does not know if a received symbol is corrupted
or not; (3) both the original message and the transmission have
real numbers as symbols. In fountain codes and continuing work such as LT
codes~\cite{luby_lt_2002} and Raptor
codes~\cite{shokrollahi_raptor_2006} (see also
\cite{maymounkov_online_2002}), the setting is different:  (1) even at time $t=0$, the
encoder has access to the full block of $n$ symbols comprising the
message; (2) the message is to be sent over an erasure channel;
(3) both the source and transmission symbols come from a discrete
alphabet.  The goal is for the code to be online in the sense that the
encoder generates a potentially infinite number of symbols using a
randomized algorithm, in such a way that the generated symbols are
mutually independent random variables, but the receiver is able to
decode the message with high probability as soon as it gets access to
\emph{any} $\Theta(n)$ of the encoder's generated symbols.  Fountain
codes and refinements such as LT and Raptor codes allow for very fast
encoding and decoding while achieving the above goal.

\subsection{Discussion}
\label{sec:discussion}
The most fundamental open question left open by our work is no doubt that of an explicit construction. On the positive side,
 the random matrices used in our constructions have with positive constant
probability
the properties we require. However, a more explicit construction that reduces the
dependence on randomness, and more importantly enables efficient verification of the properties, is desirable. The ideas involved
in the partial derandomizations of Euclidean
sections~\cite{artstein-avidan_logarithmic_2006,lovett_almost_2008,guruswami_euclidean_2008,guruswami_almost_2010, IndykS2010} or in tree code constructions~\cite{EvansKS94,Braverman12,moore2014tree,CohenHS18}
may help toward this goal.

It is also likely that the tradeoff we provide between the rate $1/\rho$ of the code  and the
$\tilde{O}(t^\delta)$ overhead in Theorem~\ref{thm:main-informal} can be improved; such optimization will be important toward practical implementation.

A third and fascinating question is whether the LPs to be solved in each decoding round, can be solved more quickly (at least in an amortized sense) thanks to the ``warm start'' from the only-slightly-different LP solved in the previous round.

\section{Online codes and low distortion matrices}
\label{sec:online-codes-low}

In this section, we provide a more quantitative discussion of the
connection of our work to Euclidean sections of $\ell_1$.  We start
with setting up some preliminary notation, and then state our main
technical theorem (\Cref{thm:encoding-matrix}), which establishes the
existence of a lower triangular analogue of an Euclidean section.  We
then show that this implies the existence of the codes we seek. The
rest of the paper is then devoted to proving
\Cref{thm:encoding-matrix}.

\subsection{Notation}
\label{sec:rate-functions-lotk}
Given a positive integer $k$ a \emph{\lot{k} matrix} $M$ with $T$
columns is a $kT \times T$ matrix in which $M_{ij} = 0$ if
$i \leq (k-1)j$.  For convenience, we also index the rows of such a
matrix by ordered pairs $(i, l)$ where $i \in [T]$ and $l \in [k]$,
and the row indexed $(i, l)$ is the $((k-1)i + l)$-th row from the
top.  The \lot{k} condition can then be stipulated more succinctly as
\begin{equation}
  \label{eq:52}
  M_{(i,l), j} = 0 \text{ when } i > j.
\end{equation}

\subsection{The main theorem and the code}
\label{sec:main-theorem-code}
Note that left multiplication of a message vector $\vec{x}$ by a
\lot{k}\ matrix $M$ satisfies the ``online'' or ``causal'' constraint
referred to in the introduction.  The next theorem shows that
there exists such a \lot{k}\ matrix with properties which imply the
other properties asked of the code in the introduction.

\begin{theorem}[\textbf{The encoding matrix}]
  \label{thm:encoding-matrix} 
  For any $\mu \in (0,1]$ and $\delta \in (0, \frac{1}{2})$, there
  exist positive constants $c_0, k_0$ such that the following is true.
  Let $T \geq 3$ be any integer.  For any rate parameter $k \geq k_0$
  there exists a \lot{k} $kT \times T$ matrix $\mathcal{C}$ satisfying
  the following conditions. (Recall that we denote by $\mathcal{C}_t$
  the $kt \times t$ leading principal submatrix of $\mathcal{C}$.)
  \begin{enumerate}
  \item  \label{item:C-operator-norm}
    \textbf{\textup{Submatrices of $\C$ have small operator norm}}:
    $\norm[2 \rightarrow 2]{\C_t} \leq 1$ for $1 \leq t \leq T$.
  \item  \label{item:C-invert}
    \textbf{\textup{Submatrices of $\C$ are robustly invertible}}: %
    for $1 \leq t \leq T$,
    \begin{displaymath}
      \norm[\dagger_\mu(kt)]{\C_t\vec{x}} \geq {c_0 t^{(1/2 - \delta)}}
      \norm[\star_\mu(t)]{\vec{x}} \text{ for all $\vec{x} \in \R^t$}.
    \end{displaymath}
  \end{enumerate}
\end{theorem}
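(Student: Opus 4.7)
My plan is to prove \Cref{thm:encoding-matrix} probabilistically by exhibiting a random $k$-lower-triangular matrix $\C$ whose nonzero entries $\C_{(i,l),j}$ (for $i \geq j$) are independent centered Gaussians with a carefully chosen variance profile $\sigma_{ij}^2$. The two conditions on $\C$ can then be established by standard random-matrix techniques: condition~(1) via a Gaussian matrix-norm inequality, and condition~(2) via a Dvoretzky-style expectation--concentration--net argument adapted to the weighted norms $\norm[\star_\mu(t)]{\cdot}$ and $\norm[\dagger_\mu(kt)]{\cdot}$.

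For condition~(1), I would calibrate $\sigma_{ij}^2$ so that in every principal submatrix $\C_t$, both the maximum row variance sum $\max_{(i,l)} \sum_j \sigma_{ij}^2$ and the maximum column variance sum $\max_j \sum_{(i,l):\, i \geq j} \sigma_{ij}^2$ are of order $1/k$. Feeding these estimates into the Bandeira--van Handel bound (cited in \Cref{sec:related-work}) yields $\norm[2 \rightarrow 2]{\C_t} = O(1)$ with sub-Gaussian tail; after rescaling $\C$ by a suitable absolute constant and union-bounding over $t = 1, \ldots, T$, condition~(1) holds with probability at least $3/4$.

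For condition~(2), fix $t \leq T$ and consider $F(\vec{x}) \defeq \norm[\dagger_\mu(kt)]{\C_t \vec{x}}$. Because $\C_t \vec{x}$ is a centered Gaussian vector whose coordinates have variances $v_{(i,l)}(\vec{x}) = \sum_{j \leq i} \sigma_{ij}^2 x_j^2$, one has $\E{F(\vec{x})} = \sqrt{2/\pi}\sum_{(i,l)} w_{(i,l)} \sqrt{v_{(i,l)}(\vec{x})}$, where $w_{(i,l)}$ denotes the weight that row $(i,l)$ receives in the $\norm[\dagger_\mu(kt)]{\cdot}$ definition. I would then (i) use the chosen profile to lower bound $\E{F(\vec{x})} \geq c_0 t^{1/2-\delta} \norm[\star_\mu(t)]{\vec{x}}$ uniformly in $\vec{x}$; (ii) view $F$ as a Lipschitz function of the underlying standard Gaussians with Lipschitz constant $L_g(\vec{x}) = \sqrt{\sum_{(i,l)} w_{(i,l)}^2 v_{(i,l)}(\vec{x})}$ and apply Gaussian concentration to get $\Pr{F(\vec{x}) \leq \tfrac{1}{2}\E{F(\vec{x})}} \leq \exp\!\left(-\Omega\!\left(t^{1-2\delta}/L_g(\vec{x})^2\right)\right)$; and (iii) upgrade this pointwise bound to a uniform-in-$\vec{x}$ statement via an $\epsilon$-net of cardinality $\exp(O(t))$ on the $\norm[\star_\mu(t)]{\cdot}$-unit sphere, exploiting the deterministic Lipschitz continuity of $\vec{x} \mapsto F(\vec{x})$ furnished by condition~(1). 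A union bound over $t = 1, \ldots, T$ then yields condition~(2) with probability $\geq 3/4$; intersecting with the event in~(1) produces a realization of $\C$ satisfying both, proving the theorem.

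The hard part is designing a single variance profile $\sigma_{ij}^2$ that makes steps (i)--(iii) work simultaneously from one realization of $\C$ at every scale $t$. The column budget of $O(1/k)$ must be allocated so that (a) enough variance sits near the diagonal $i = j$ to handle ``most recent'' test vectors such as $x_t \vec{e}_t$ with $x_t = t^{\mu/2}$ (for which only the $k$ entries $\C_{(t,l),t}$ carry any information about $x_t$), and (b) the Lipschitz constant $L_g(\vec{x})$ remains small enough relative to $\E{F(\vec{x})}$ that the concentration exponent in (ii) overcomes the $\exp(O(t))$ cost of the net. The time-asymmetry of the two norms---$\norm[\star_\mu(t)]{\cdot}$ down-weights the recent past of $\vec{x}$ while $\norm[\dagger_\mu(kt)]{\cdot}$ up-weights the recent present of the transmission---forces a matching asymmetry in $\sigma_{ij}^2$, and it is the $t^\delta$ slack (shown necessary by \Cref{prop:lower-bound}) that provides the margin needed to close the trade-off.
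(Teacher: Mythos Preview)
Your plan has a genuine gap at step~(iii): the single-scale $\epsilon$-net argument cannot close. Consider the test vector $\vec{x} = t^{\mu/2}\vec{e}_t$, normalized so that $\norm[\star_\mu(t)]{\vec{x}} = 1$. Because $\C$ is \lot{k}, the vector $\C_t\vec{x}$ is supported on only the last $k$ rows, so $F(\vec{x}) = \sum_{l=1}^k w_{(t,l)}\abs{Z_l}$ is a weighted $\ell_1$-norm of a $k$-dimensional Gaussian. For any such sum one has, by Cauchy--Schwarz,
\[
\frac{\E{F(\vec{x})}^2}{L_g(\vec{x})^2}
= \frac{2}{\pi}\cdot\frac{\bigl(\sum_{l} w_{(t,l)}\sigma_l\bigr)^2}{\sum_l w_{(t,l)}^2\sigma_l^2}
\leq \frac{2k}{\pi},
\]
\emph{independently of the variance profile}. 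Hence the Gaussian-Lipschitz concentration in (ii) gives at best $\Pr{F(\vec{x}) \leq \tfrac12\E{F(\vec{x})}} \leq \exp(-O(k))$ for this net point, which cannot be union-bounded against an $\exp(\Omega(t))$-sized net unless $k = \Omega(t)$, destroying the constant rate. The $t^\delta$ slack does not help here: the obstruction is combinatorial (only $k$ rows carry any information about $x_t$), not a matter of tuning variances.

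The paper circumvents this in two essential ways that your proposal lacks. First, instead of a single net on $\R^t$, it uses a \emph{dyadic block decomposition}: the columns of $\C_t$ are partitioned into blocks of sizes $1,2,4,\ldots$, and for the block of width $2^{j-1}$ the relevant net has size $\exp(O(2^j))$ while the concentration exponent is $\Omega(k\cdot 2^j)$, so the two match at every scale. An inductive argument (\Cref{lem:max-norm}, \Cref{thm:main-lower-b}) then propagates the per-block guarantee to the full matrix, but this still yields only a $1 - \exp(-\Omega(k))$ success probability at each fixed $t$, which is not summable over $t \leq T$. Second, to fix this, the paper does \emph{not} ask the random matrix to handle the most recent $\poly{\log t}$ coordinates at all: it proves invertibility only for the prefix $\cblock{\vec{x}}{\tf(t), j_0(t)}$ with $j_0(t) = \Theta(\log\log t)$ (which boosts the success probability to $1 - O(t^{-2})$, now summable), and then makes the code \emph{systematic} by appending the raw symbol $x_t$ as a $(k{+}1)$-st transmission at each time step (\Cref{sec:code-matrix}). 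The systematic copy handles the recent past deterministically, bypassing exactly the $\vec{e}_t$-type vectors on which your concentration argument fails.
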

The proof of this theorem will be through an analysis of certain \lot{k}\ random
matrices with independent but \emph{not} identically distributed Gaussian
entries.  In the next section (\Cref{sec:preliminaries-and-overview}), we start
with a simplified overview of the proof, before proceeding with the complete
proof in \Cref{sec:progr-atten-gauss}.  Here, we will show how the theorem
immediately yields a code satisfying the conditions outlined in the
introduction.  {But, first, we make a couple of remarks on the
  choice of the norms $\norm[\star_\mu]{\cdot}$ and $\norm[\dagger_\mu]{\cdot}$,
  and on the comparison between the respective robust invertibility guarantees
  that can be made in the online and block coding settings.}
\begin{remark}
  {We argue, by considering the action of the code on a unit pulse $e_t$ at
    time $t$, $(1 \leq t \leq T)$, that the quantity $(1-\mu)$
    appearing in the exponent of the gain factor used in the
    $\norm[\dagger_\mu]{\cdot}$ norm cannot be replaced by any strictly smaller
    quantity independent of $\delta$. To see this, observe that when $x = e_t$,
    the right hand side of item~\ref{item:C-invert} of the theorem is
    $\Theta(t^{(1-\mu)/2 - \delta})$.  On the other hand, due to the online
    encoding requirement~\eqref{eq:30}, $\C_te_t$ must be a vector in $\R^{k t}$
    in which only the last $k$ entries may be non-zero.  Further, the power
    constraint requirement~\eqref{eq:31} implies that these non-zero entries are
    $O(1)$.  It follows that if the quantity $(1-\mu)$ in the definition of the
    $\norm[\dagger_\mu({k t})]{\cdot}$-norm is replaced by $\tau$, the left hand
    side of item~\ref{item:C-invert} is at most $O(t^{\tau/2})$.  Thus for the
    inequality in item~\ref{item:C-invert} to be possible for all $\delta > 0$,
    one requires that $\tau \geq (1-\mu)$.}
\end{remark}
\begin{remark}
  The robust invertibility guarantee obtained for the encoding
  matrices constructed in \Cref{thm:encoding-matrix} falls short of
  the guarantee obtainable in the block coding setting (\cref{eq:32}),
  in the sense that we lose an extra $\Theta(t^\delta)$ factor in the
  online setting, albeit with the option to choose $\delta > 0$ as
  close to zero as we please at the cost of a deterioration in the
  rate of the code. A natural question therefore is whether it is
  possible to get rid of this loss and obtain a guarantee as strong as
  the block coding setting in the online setting as well. In the
  following theorem (proved in \Cref{sec:lower-bound}), we show
  that it is \emph{not} possible to obtain the guarantee of
  (\cref{eq:32}) in the online coding setting, and a loss of an
  $\omega_t(1)$ factor in the robust invertibility criterion must be
  incurred if the power constraint is to be satisfied.
  \begin{theorem}
    \label{prop:lower-bound}
    Fix $\mu \in (0, 1]$, $c_0 > 0$ and a positive integer $k$.  There
    exists a constant $\tau = \tau(\mu, c_0, k)$ such that the following
    is true.  If $\C$ is a $kT \times T$ \lot{k} matrix such that for
    all $t \in [T]$ the submatrix $\C_t$ of $\C$ satisfies
    \begin{equation*}
      \norm[\dagger_\mu(kt)]{\C_t\vec{x}} \geq {c_0 t^{1/2}}
      \norm[\star_{\mu}(t)]{\vec{x}} \text{ for all $\vec{x} \in \R^t$},
    \end{equation*}
    then there exists a non-zero $\vec{x} \in \R^T$ for which
    \begin{displaymath}
      \norm[2]{\C\vec{x}}^2 \geq \tau\sum_{i=1}^T\frac{1}{i}
      \geq \inp{\tau\log T}\cdot\norm[2]{\vec{x}}^2.
    \end{displaymath}
    \end{theorem}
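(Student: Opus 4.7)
The plan is to exhibit the single test vector $\vec{x} = e_1 \in \R^T$ and show $\norm[2]{\C e_1}^2 \geq \tau \sum_{i=1}^T 1/i$; the second inequality in the conclusion then follows automatically from $\norm[2]{e_1}^2 = 1$ together with $\sum_{i=1}^T 1/i \geq \log T$.

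Write $g_i := |\C_{i,1}|$, which vanishes for $i < k$ by the \lot{k} structure. I first specialize the hypothesis to $\vec{x} = e_1 \in \R^t$ at each $t \in [T]$. Since $\norm[\star_\mu(t)]{e_1} = ((t-1+1)/t)^{\mu/2} = 1$ for every $t$, this yields the family of weighted $\ell_1$ lower bounds on the first column of $\C$:
\[
\sum_{i=1}^{kt} w_i^{(t)} g_i \;\geq\; c_0 \sqrt{t}, \qquad w_i^{(t)} := \inp{\frac{kt-i+1}{kt}}^{-(1-\mu)/2}, \qquad t \in [T].
\]

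Next I aggregate these $T$ inequalities into a single constraint on $g$. Multiplying the $t$-th inequality by $\beta_t := t^{-3/2}$ and summing over $t$ gives
\[
\sum_{i=1}^{kT} h_i g_i \;\geq\; c_0 \sum_{t=1}^T \beta_t \sqrt{t} \;=\; c_0 \sum_{t=1}^T \frac{1}{t}, \qquad h_i := \sum_{t=\lceil i/k\rceil}^T \beta_t w_i^{(t)},
\]
and Cauchy--Schwarz then yields $\norm[2]{\C e_1}^2 = \sum_i g_i^2 \geq \inp{c_0 \sum_{t=1}^T 1/t}^2 / \sum_i h_i^2$.

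The crux of the argument is an upper bound on $\sum_i h_i^2$. Setting $\alpha := (1-\mu)/2 \in [0, 1/2)$ and $t_0(i) := \lceil i/k \rceil$, I split the sum defining $h_i$ into a ``resonant'' regime $t \in [t_0, 2t_0]$ and a ``tail'' $t > 2t_0$. In the tail, $(kt-i+1)/kt \geq 1/2$ so $w_i^{(t)} \leq 2^\alpha$, and the contribution is at most $2^\alpha \sum_{t > 2t_0} t^{-3/2} = O(t_0^{-1/2})$. In the resonant regime, substituting $s = t - t_0$ and using $\sum_{s=1}^{t_0} (ks)^{-\alpha} = O(t_0^{1-\alpha} k^{-\alpha}/(1-\alpha))$ bounds the contribution by $O((1-\alpha)^{-1} t_0^{-1/2})$. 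Combining, $h_i = O((1-\alpha)^{-1} \sqrt{k/i})$, and hence $\sum_i h_i^2 = O(k (1-\alpha)^{-2} \log T)$ (absorbing $\log k$ into the constant). Plugging back and using $(1-\alpha)^2 \geq 1/4$ on $\mu \in (0, 1]$ yields $\norm[2]{\C e_1}^2 \geq \tau \sum_{i=1}^T 1/i$ for some $\tau = \tau(\mu, c_0, k) = \Omega(c_0^2/k)$. The main obstacle is the resonance estimate: the choice $\beta_t = t^{-3/2}$ is essentially sharp, since a slower decay such as $\beta_t = 1/t$ would pick up an extra logarithmic factor in the resonance region $t \approx i/k$ and degrade the final bound.
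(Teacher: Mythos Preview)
Your proof is correct and follows essentially the same route as the paper: both take $\vec{x}=e_1$, aggregate the constraints with weights, and reduce to bounding the coefficients $h_i$ (the paper's $\Lambda_i$) by $O(1/\sqrt{t_0(i)})$. Your weighted-sum-plus-Cauchy--Schwarz argument is exactly the weak duality bound for the quadratic program the paper writes down, and after accounting for the normalisation $w_i^{(t)}=(kt)^{(1-\mu)/2}f_{it}$ your choice $\beta_t=t^{-3/2}$ coincides with the paper's dual variables $\lambda_t\propto t^{-1-\mu/2}$; the ``resonance/tail'' split you use matches the paper's estimate of $\Lambda_i$.
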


(An open question left by our work is to narrow the gap between our upper bound of $O(t^\delta)$ and our lower bound of $\Omega(\sqrt{\log t})$, on the norm loss due to the causal-coding restriction.)
\end{remark}

We now show how \Cref{thm:encoding-matrix} immediately yields a code
satisfying the conditions outlined in the introduction.  Let $T$ be
the total time of transmission, and for $\delta \in (0, 1/2)$ let
$\mathcal{C}$ be a \lot{k}\ matrix with the rate parameter $k$ as in
the theorem.  The encoder $\mathcal{E}$ is defined as left
multiplication by the $k t \times t$ leading principal submatrix of
$\mathcal{C}$:
\begin{displaymath}
  \mathcal{E}(\vec{x}) = \C_t\vec{x} \text{ for all $\vec{x} \in \R^t, 1\leq t
    \leq T$.}
\end{displaymath}
Thus, the encoder only needs to send $k$ symbols at each time $t$.

At time $t \leq T$, the decoder $\mathcal{D}$ acts as follows.  Given
a received message $\vec{z} \in \R^{kt}$, it outputs the solution
$\vec{x_0} \in \R^t$ of the following linear program:
\begin{equation}
  \label{eq:43}
  \norm[\dagger_\mu(k t)]{\vec{z} - \C_t\vec{x_0}} \leq \min_{\vec{z'} \in \ran{\C_t}}
  \norm[\dagger_\mu(k t)]{\vec{z} - \vec{z'}}.
\end{equation}

We now show that the code $\C$ satisfies the conditions (\ref{eq:30})-(\ref{eq:56}). The online encoding condition, 
\cref{eq:31}, holds by construction since $\C$ and its
submatrices $\C_t$ are
\lot{k}.  The power constraint, \cref{eq:30}, is satisfied since for each $1
\leq t \leq T$ and any $\vec{x} \in \R^t$, applying Theorem~\ref{thm:encoding-matrix}(\ref{item:C-operator-norm}),
\begin{displaymath}
  \norm[2]{\mathcal{E}\vec{x}}
  \leq \norm[2 \rightarrow 2]{\C_t} \cdot \norm[2]{\vec{x}}
  \leq \norm[2]{\vec{x}}.
\end{displaymath}

We now show that the condition in \cref{eq:56} is satisfied as well.
Let $\vec{x}$ be the original message and $\vec{y}$ the noise added by
the adversary, so that the received vector is
$\vec{z} = \C_tx + \vec{y}$.  Let $\vec{x_0}$ be the output of the
decoder on input $\vec{z}$ computed according to \cref{eq:43}.  We
then have
\begin{flexeq}
  \label{eq:44}
  \norm[\dagger_\mu(k t)]{\C_t(\vec{x}  - \vec{x_0})}
  \leq  \norm[\dagger_\mu(k t)]{\vec{y}} +
  \norm[\dagger_\mu(k t)]{\vec{z} - \C_t\vec{x_0}}
  \eqbreak\leq 2\norm[\dagger_\mu(k t)]{\vec{y}},
\end{flexeq}
where the second inequality follows from \cref{eq:43} since
$\C_t\vec{x}$ is in $\ran{\C_t}$. Applying Theorem~\ref{thm:encoding-matrix}(\ref{item:C-invert}), we now see that there exists a constant $c_0$ such that
\begin{displaymath}
  \norm[\star_\mu(t)]{\vec{x} - \vec{x_0}}
  \leq c_0 t^{-(1/2 - \delta)}\norm[\dagger_\mu(k t)]{\vec{y}},
\end{displaymath}
so that the condition in \cref{eq:56} also holds.

\section{Overview}
\label{sec:preliminaries-and-overview}

This section is devoted to a high-level description of the main ideas
of our construction and its analysis. All main ideas needed for the
proof of \Cref{thm:encoding-matrix} are discussed here, and a roadmap
with forward references to the full arguments is provided.  The
details, being more complicated, have been consigned to
\Cref{sec:progr-atten-gauss,sec:code-matrix}. %

Our starting point is the connection to Euclidean sections of
$(\R^{cT}, \norm[1]{\cdot})$ alluded to in the introduction.
Specifically, we recall the discussion there of rectangular matrices
$A$ whose range is a Euclidean section, or equivalently, which
satisfy \cref{eq:32}.  One standard construction of such a matrix is
to choose a $cT \times T$ random matrix whose entries are i.i.d.\
Gaussian variables.  In order to ensure that
$\norm[2 \rightarrow 2]{A} = O(1)$, it suffices to choose the standard
deviation of the entries to be $\Theta(1/\sqrt{T})$.  For the purposes
of the informal discussion in this section, we will refer to such a
random matrix, whose entries are independent Gaussians with variances
within a constant factor of each other, as a \emph{Dvoretzky matrix}.
The discussion in the introduction showed that a Dvoretzky matrix
suffices if we were interested only in block coding with a block
length of $T$ and did not enforce the online encoding constraint.

The first step to adapting this standard construction to our online
setting is to zero out the entries above the diagonal (in the indexing
of rows and columns introduced in \Cref{sec:rate-functions-lotk}, this
corresponds to enforcing \cref{eq:52}).  However, this is not
sufficient since the entries close to the diagonal are still of order
$O(1/\sqrt{T})$ where $T$ is the total time of transmission.  To see
what the problem is, consider the operation of the encoder and the
decoder at a time $t \ll T$.  In this setting, messages sent by the
encoder up to time $t$ are all attenuated by a factor that is
$O(1/\sqrt{T})$, and this allows the adversary to swamp out the signal
with noise of small $\norm[1]{\cdot}$-norm.  Such a situation will not
allow us to achieve a decoding guarantee similar to \cref{eq:29} where
the guarantee provided at time $t \ll T$ keeps monotonically improving
as the total time $T$ for which the transmission lasts increases (in
fact, in this scenario, the decoding at time $t \ll T$ becomes
progressively worse with increasing
$T$). %
We therefore cannot attenuate all entries of the matrix by a factor of
the form $O(1/\sqrt{T})$; indeed we want entries close to the diagonal
of the matrix to be of order $\tilde{\Omega}(1)$ (so that immediate
decoding is accurate unless there is a noise burst).  On the other
hand, we do want the variances of the matrix entries to have
properties similar to those of Dvoretzky matrices, in the sense that
\begin{enumerate}
\item the sum of variances across a row or column of the matrix is at most a
  constant: intuitively, this is a prerequisite for enforcing that the
  $2\rightarrow 2$-norm of the matrix is a constant, and
\item the sum of their square roots (i.e., standard
  deviations) across a row or column is roughly $\tilde{\Omega}(\sqrt{t})$: intuitively, this
  is a prerequisite for making sure that all vectors in the image of
  the unit $\norm[2]{\cdot}$-ball under the matrix have
  $\norm[1]{\cdot}$-norm about $\tilde{\Omega}(\sqrt{t})$.
\end{enumerate}

To satisfy the above two conditions with the lower triangular
constraint, we consider random matrices whose entries are Gaussians
with progressively attenuated variances.  The construction we actually
use in the proof of \Cref{thm:encoding-matrix} appears in
\Cref{sec:progr-atten-gauss}, but for the purposes of this informal
discussion, we use a slightly simplified version.  Let $k$ be a fixed
constant rate parameter.  We then define the distribution
$\mathcal{A}'_{T, k}$ on \lot{k} matrices such that a $kT \times T$
matrix $M \sim \mathcal{A}'_{T, k}$ is sampled as follows:
\begin{equation}
  \label{eq:1}
  M((i, l), j) = \frac{1}{k} \cdot
  \begin{cases}
    0, & i < j,\\
    g(i - j)\xi_{(i,l), j}, & i \geq j,
  \end{cases}
\end{equation}
where that $\xi_{(i,l),j}$ are independent standard normal random
variables, and
\begin{displaymath}
  g(i) \defeq \frac{1}{\sqrt{i + 1}\log (i + 2)}.
\end{displaymath}
Note that $\sum_{i \geq 0}g(i)^2$ converges, while
$\sum_{i \in [t]}g(i) = \tilde{\Omega}(\sqrt{t})$.  A lower bound on
the probability that $M$ as sampled above has small
$\norm[2 \rightarrow 2]{\cdot}$-norm is established by adapting known
results in the literature: see \Cref{lem:op-norm-of-M}.  The main
technical problem, however, is to show that $\norm[\dagger_\mu(k T)]{M\vec{x}}$ is
large compared to $\norm[\star_\mu(T)]{\vec{x}}$ for all
$\vec{x} \in \R^T$. Again, we emphasize that to prove
\Cref{thm:encoding-matrix}, we actually need to establish this
condition at all times $t \leq T$: however, for now we focus on the
case $t = T$.

We now introduce some notation that will be useful both in our proofs
and in the discussion here (see \Cref{fig:notation} for a pictorial
illustration of the notation introduced here).  For any positive
integer $n$, we define $\tf(n) \defeq \ceil{\lg(n + 1)}$ so that
\tf(n) is the length of the canonical binary representation of $n$.
For a vector $\vec{x} \in \R^T$,
we denote by $\vblock{\vec{x}}{i}$ the sub-vector of $\vec{x}$ of
length $2^{i-1}$ consisting of the entries
$(x_{T + 2 - 2^i}, \cdots, x_{T + 1 - 2^{i-1}})$.  We similarly define
the sub-vector $\cblock{\vec{x}}{j_1, j_2}$ to be the concatenation of
the sub-vectors $\vblock{\vec{x}}{i}$ for $j_2 \leq i \leq j_1$.  When
$j_2 = 1$, we write $\cblock{\vec{x}}{j_1, 1}$ as
$\cblock{\vec{x}}{j_1}$.

Our analysis of $M$ will need to consider the action of appropriate
sub-matrices of $M$ on such sub-vectors; we now introduce notation for
these sub-matrices.  For any matrix $A$
with $T$ columns, let $\mblock{A}{j}$ denote the matrix consisting of
the $2^{j-1}$ columns of $A$ with indices in the interval
$[T + 2 - 2^{j}, T + 1 - 2^{j-1}]$.  We thus have for any such $A$ (in
particular for $M$) that
\begin{displaymath}
  A\vec{x} = \sum_{j = 1}^{\tf(T)} \mblock{A}{j}\vblock{\vec{x}}{j}.
\end{displaymath}
Similarly, we define $\mcblock{A}{j}$ to be the sub-matrix of $A$
consisting of its last $2^j - 1$ columns.  In particular,
$\mcblock{A}{j}$ acts on $\cblock{\vec{x}}{j}$ and we have
\begin{displaymath}
  \mcblock{A}{j}\cblock{\vec{x}}{j}
  = \mblock{A}{j}\vblock{\vec{x}}{j}
  + \mcblock{A}{j - 1}\cblock{\vec{x}}{j - 1}.
\end{displaymath}
We will also need to consider suffixes of the output of these matrices
at several places in the proofs and also in this discussion.
Formally, given an integer $k$ and any matrix $A$, we define
$\lclock{A}{j}$ to be the sub-matrix of $A$ consisting of its last
$k \cdot 2^{j-1}$ rows.  We also extend this notion to vectors in the
co-domain of $A$: for such a vector $\vec{y}$,
$\lclock{\vec{y}}{j}$ denotes the sub-vector consisting of the last
$k\cdot 2^{j-1}$ entries of $\vec{y}$.
\begin{figure*}[t]
  \centering
  \includegraphics[height=0.5\textheight]{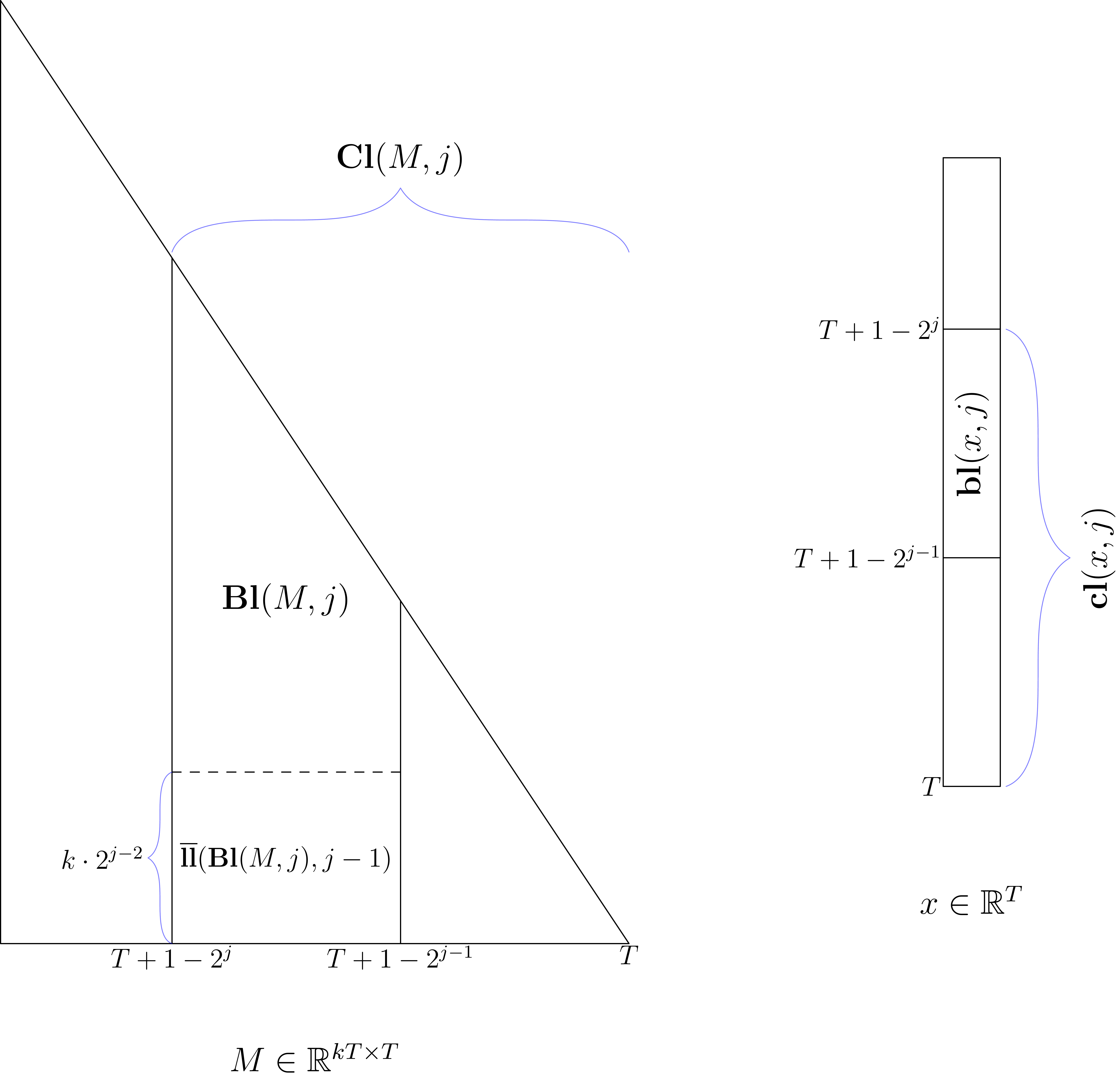}
  \caption{Notation for submatrices and
    sub-vectors\label{fig:notation}}
\end{figure*}

In our proofs, it is easier to work in terms of a matrix $B$ obtained
by rescaling the entries of $M$ in such a way that %
\begin{displaymath}
  \inf_{y \neq 0}\frac{\norm[\dagger_\mu(kT)]{My}}{\norm[\star_\mu(T)]{y}}
  = \sqrt{T}  \cdot \inf_{\norm[2]{\vec{x}} = 1} \norm[1]{Bx}.
\end{displaymath}
Such a \lot{k} matrix $B$ is obtained by setting
\begin{displaymath}
  B_{(i, l), j} =  \frac{M_{(i,l),j}}{\insq{(T- i +
      1)^{1-\mu}(T - j + 1)^\mu}^{1/2}}
\end{displaymath}
for $i \geq j$ and $B_{(i, l), j} = 0$ otherwise.
\Cref{thm:2-1norm} and \Cref{lem:inf-B} then show that for each $j$,
$\lclock{\mblock{B}{j}}{j-1}$ behaves roughly like a Dvoretzky matrix,
in the sense that
$\sup
\frac{\norm[1]{\lclock{\mblock{B}{j}}{j-1}\vec{x}}}{\norm[2]{\vec{x}}}$
is within a factor ${\Theta}(1)$ of
$\inf
\frac{\norm[1]{\lclock{\mblock{B}{j}}{j-1}\vec{x}}}{\norm[2]{\vec{x}}}$,
with probability at least $1 - \exp(-\Omega(k\cdot 2^j))$.
\Cref{cor:inf-B-perturb} strengthens this to show that with the same
probability, the infimum above is not decreased substantially even if
the output of $\lclock{\mblock{B}{j}}{j-1}$ is perturbed with a vector
drawn from a small dimensional subspace (namely, the range of
$\mcblock{B}{j-1}$).

\Cref{lem:perturb-preserve} then shows that the $\norm[1]{\cdot}$ norm
of this perturbation itself is also preserved in the output of
$\lclock{\mblock{B}{j}}{j-1}$.  Together, these results lead to
\Cref{lem:max-norm} which shows, roughly speaking, that with
probability $1 - \exp(-\Omega(k\cdot 2^j))$,
\begin{multline*}
  \norm[1]{\lclock{\mcblock{B}{j}\cblock{\vec{x}}{j}}{j-1}}\\
  \geq \max\left\{
    \norm[1]{\lclock{\mblock{B}{j}\vblock{\vec{x}}{j}}{j-1}},\notbool{j2col}{}{
    \right.\\
  \left.
  }(1 - \epsilon)\norm[1]{\lclock{\mcblock{B}{j - 1}\cblock{\vec{x}}{j -
          1}}{j-1}}\right\},
\end{multline*}
for some small constant $\epsilon > 0$.  Informally, this says that
the output of each trailing principal sub-matrix $\mcblock{B}{j}$
preserves both the output of its left most block $\mblock{B}{j}$, as
well as the output of the remaining trailing principal sub-matrix
$\mcblock{B}{j-1}$.  Underlying these results is a sequence of
$\epsilon$-net arguments, which use concentration bounds on the
$\ell_1$ norms of Gaussian vectors with independent entries of
non-identical means and variances, provided in
\Cref{cor:conc,thm:l1-concentration}. %

Finally, \Cref{thm:main-lower-b,cor:invert-M} use \Cref{lem:max-norm} in an 
induction to show that with probability at least
$1 - \exp(-\Omega(k))$,
\begin{equation}
  \norm[\dagger_\mu(k T)]{M\vec{x}} \geq \tilde{\Omega}(T^{1/2 -
    \delta})\norm[\star_\mu(T)]{\vec{x}}\label{eq:50}
\end{equation}
\Cref{eq:50} establishes that $M$ has the requisite properties at time
$T$, but recall that our code requires online decoding at all times
$t \leq T$.  Unfortunately, we cannot take an union bound over all $t$
using \cref{eq:50} unless we choose $k = \Omega(\log T)$, which would
lead to a very low communication rate of $1/k = O(1/\log T)$ (recall
that what we actually want, and achieve, is $k$ a constant).

However, there is a simple remedy if we would be willing to carry
information only about a sufficiently delayed prefix of $\vec{x}$.  In
particular, \Cref{thm:main-lower-b,cor:invert-M} also show that if
$j_0$ is chosen so that $j_0 = \Omega(\log \log T)$, then $M$ carries
enough information about $\cblock{\vec{x}}{\tau_0, j_0}$ (recall that
this is the prefix of $\vec{x}$ which ignores its last
$2^{j_0 - 1} - 1 = \poly{\log T}$ entries) so that with probability at least
$1 - O(1/T^2)$:
\begin{displaymath}
  \norm[\dagger_\mu(k T)]{M\vec{x}} \geq O(T^{1/2 -
    \delta})\norm[\star_\mu(T)]{\cblock{\vec{x}}{\tau_0, j_0}}.
\end{displaymath}
We can now indeed take a union bound over all $t \leq T$ to see that
the above is true at all times $t \leq T$
with probability at least $1 - O(1/T)$.  However, the price we pay for
this is that we cannot say anything about the most recent $\poly{\log
T}$ characters %
of the message.  This can be fixed by making the code
systematic: the details are in \Cref{sec:code-matrix}.

We emphasize here that although the above discussion often refers to
the total time of communication $T$, our actual construction does
not assume a knowledge of $T$.  In particular, the rate and error
guarantees in \Cref{thm:encoding-matrix} are achieved also at times
$t$ that might be much smaller than the eventual total time $T$.

The rest of the paper is devoted to the details of the proof.

\section{Progressively attenuated Gaussian matrices}
\label{sec:progr-atten-gauss}

Our proof of \Cref{thm:encoding-matrix} will proceed through an
analysis of a specific distribution over random \lot{k}
matrices.  We start by recalling some results from the literature that
will be used in our proofs.

\subsection{Technical preliminaries}
\label{sec:techn-prel}

\subsubsection{Operator norm of Gaussian matrices}
\label{sec:oper-norm-gauss}

We will use the following result on the operator norm of matrices
with independent Gaussian entries.
\begin{theorem}[\textbf{Bandeira and van Handel~\cite[Theorem
    3.1]{bandeira_sharp_2014}}]
  \label{thm:expected-matrix-norm}
  Let $A$ be a $n \times m$ random matrix with independent mean zero
  Gaussian entries such that $A_{ij} \sim \mathcal{N}(0, a_{ij}^2)$.
  Then
  \begin{displaymath}
    \E{\norm[2\rightarrow 2]{A}} \leq \frac{3}{2}\inp{\sigma_1 + \sigma_2
      + 10 \sigma_0 \sqrt{\log \min\inp{n ,m}}},
  \end{displaymath}
  where
  \begin{displaymath}
    \sigma_1^2 \defeq \max_{i \in [n]}\sum_{j = 1}^m
    a_{ij}^2; \;
    \sigma_2^2 \defeq \max_{j \in [m]}\sum_{i = 1}^n
    a_{ij}^2; \;
    \sigma_0 \defeq \max_{\substack{i \in [n]\\j \in [m]}}\abs{a_{ij}}.
  \end{displaymath}
\end{theorem}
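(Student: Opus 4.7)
The plan is to realize $\norm[2\rightarrow 2]{A}$ as the supremum of a centered Gaussian process and bound that supremum by two separate arguments: a Chevet/Gordon-type comparison for the $\sigma_1 + \sigma_2$ contribution, and a chaining or net argument on the smaller sphere for the $\sigma_0 \sqrt{\log\min(n,m)}$ correction. Writing $A_{ij} = a_{ij}\xi_{ij}$ for i.i.d.\ standard normals $\xi_{ij}$, I would set
\[
X_{u,v} := \ina{u, Av} = \sum_{ij} a_{ij}\, u_i v_j\, \xi_{ij}
\]
for $(u,v)\in S^{n-1}\times S^{m-1}$, so that $\norm[2\rightarrow 2]{A} = \sup_{u,v} X_{u,v}$ and the process has covariance $\E{X_{u,v}X_{u',v'}} = \sum_{ij} a_{ij}^2 u_iu'_i v_j v'_j$.

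For the $\sigma_1 + \sigma_2$ part, I would compare $X$ to the auxiliary Gaussian process
\[
Y_{u,v} := \sum_i g_i\, u_i\, \sqrt{\textstyle\sum_j a_{ij}^2}\; +\; \sum_j h_j\, v_j\, \sqrt{\textstyle\sum_i a_{ij}^2},
\]
with $g_i, h_j$ independent standard normals. The target is $\E{\sup X} \leq \E{\sup Y} \leq \sigma_1 + \sigma_2$, where the last inequality is immediate since each of the two sums in $Y$ is an inner product of a unit vector with an independent Gaussian vector of $\ell_2$-norm at most $\sigma_1$ or $\sigma_2$. The first inequality I would obtain by a Gordon-type min--max comparison that exploits the unit-norm constraints on $u$ and $v$ to control the bipartite increments of $X$ by those of $Y$.

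For the $\sigma_0\sqrt{\log\min(n,m)}$ contribution, I would run a chaining or net argument on whichever sphere has smaller dimension, using the symmetry $\norm[2\rightarrow 2]{A} = \norm[2\rightarrow 2]{A^T}$. The intrinsic metric $d((u,v),(u',v'))^2 = \E{(X_{u,v} - X_{u',v'})^2}$ is controlled on small scales by $\sigma_0$, so a standard $\epsilon$-net of size $(3/\epsilon)^{\min(n,m)}$ combined with a Gaussian tail bound produces a $\sqrt{\log\min(n,m)}$ factor multiplied by $\sigma_0$.

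The principal obstacle is sharpness of the constants. A chaining argument applied directly to the full bipartite process $X$ typically yields a bound of the form $\sqrt{\log(nm)}\cdot(\sigma_1+\sigma_2)$, which is strictly weaker than the stated guarantee. The delicate point is to \emph{decouple} the ``structural'' $\sigma_1 + \sigma_2$ contribution from the ``entrywise'' $\sigma_0$ logarithmic correction, rather than suffering a multiplicative coupling; in particular, the naive Slepian inequality $\E{(Y_{u,v}-Y_{u',v'})^2} \geq \E{(X_{u,v}-X_{u',v'})^2}$ fails to hold entrywise on the bipartite sphere, so one must either symmetrize carefully or, as Bandeira and van Handel do, substitute a refined moment/trace computation to separate the two contributions. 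Attaining the explicit constant $3/2$ on the $\sigma_1 + \sigma_2$ term appears to require essentially their argument.
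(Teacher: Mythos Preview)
The paper does not prove this statement at all: it is quoted verbatim as Theorem~3.1 of Bandeira and van Handel and used as a black box in the proof of \Cref{lem:op-norm-of-M}. There is no ``paper's own proof'' to compare against.

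As for the sketch itself, you have correctly identified the obstruction but not overcome it. The Chevet/Gordon comparison you set up yields $\E{\sup X} \leq \sigma_1 + \sigma_2$ only if the increment inequality holds, and as you note it does not hold pointwise on the bi-sphere; the standard Chevet inequality applies to \emph{i.i.d.}\ entries and does not survive arbitrary variance profiles $a_{ij}^2$. Running a net on the smaller sphere and a Gaussian tail on the fibers does produce a $\sigma_0\sqrt{\log\min(n,m)}$ term, but it couples multiplicatively with the fiber norm rather than additively, so you end up with something like $C(\sigma_1 + \sigma_0\sqrt{\log\min(n,m)})$ times a net-correction factor, not the clean additive form. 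Bandeira and van Handel's actual argument is a high-moment computation: they bound $\E{\mathrm{tr}(AA^\top)^p}$ combinatorially, splitting the cycle sums into ``shape'' contributions that separate into the $\sigma_1,\sigma_2$ row/column maxima and a residual controlled by $\sigma_0$, then optimize over $p \sim \log\min(n,m)$. Your final sentence essentially concedes this; the comparison-inequality route you outline does not, to my knowledge, reach the stated bound without that moment machinery.
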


\subsubsection{$\epsilon$-nets}
\label{sec:epsilon-nets}
We will use the following standard facts about $\epsilon$-nets for
subspaces of $(\R^n, \norm[p]{\cdot})$ for $p \geq 1$ (see, e. g.,
\cite{figiel_dimension_1977}).
\begin{fact}
  \label{fct:epsilon-nets}
  Let $U$ be a subspace of $(\R^n, \norm[p]{\cdot})$ of dimension at
  most $d$.  Then, for $\epsilon \leq r$ the $\norm[p]{\cdot}$-ball
  (respectively, the $\norm[p]{\cdot}$-sphere) of radius $r$ in $U$
  has an $\epsilon$-net in $\norm[p]{\cdot}$ of size at most
  $(3r/\epsilon)^d$.
\end{fact}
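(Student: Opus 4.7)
The plan is to establish both parts of the fact by a single volumetric packing argument carried out inside the $d$-dimensional subspace $U$. Let $K \subseteq U$ denote either the closed $\norm[p]{\cdot}$-ball or the $\norm[p]{\cdot}$-sphere of radius $r$. First I would take any maximal $\epsilon$-separated subset $N \subseteq K$ in the $\norm[p]{\cdot}$ metric, i.e., a set whose points are pairwise at distance strictly greater than $\epsilon$ and which is not properly contained in any larger such subset of $K$. A one-line observation then yields the ``net'' property for free: by maximality, every $x \in K$ must lie within distance $\epsilon$ of $N$, since otherwise $N \cup \inb{x}$ would itself be an $\epsilon$-separated subset of $K$, contradicting maximality. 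So the task reduces to bounding $\abs{N}$.

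For the bound I would transport Lebesgue measure to $U$. Fix any linear isomorphism $U \cong \R^{d'}$ with $d' \defeq \dim U \leq d$ and let $V$ denote the pullback of $d'$-dimensional Lebesgue measure; $V$ is translation invariant and scales as $V(sE) = s^{d'} V(E)$ for positive $s$. The $\norm[p]{\cdot}$-balls of radius $\epsilon/2$ centered at the points of $N$ are pairwise disjoint (by $\epsilon$-separation) and all lie inside the $\norm[p]{\cdot}$-ball of radius $r + \epsilon/2$ around the origin, regardless of whether the centers are on the sphere or throughout the ball. Summing volumes and writing $B$ for the unit $\norm[p]{\cdot}$-ball in $U$ gives
\begin{equation*}
  \abs{N}\inp{\frac{\epsilon}{2}}^{d'} V(B) \;\leq\; \inp{r + \frac{\epsilon}{2}}^{d'} V(B),
\end{equation*}
and the normalization $V(B)$ cancels.

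The rest is bookkeeping. From the display we have $\abs{N} \leq \inp{1 + 2r/\epsilon}^{d'}$; the hypothesis $\epsilon \leq r$ turns $1 + 2r/\epsilon$ into at most $3r/\epsilon$, and since $3r/\epsilon \geq 3 \geq 1$ we may replace the exponent $d'$ by the (possibly larger) $d$ without weakening the bound, obtaining $(3r/\epsilon)^d$ as required. There is no serious obstacle; the only place requiring a moment's care is that the same argument simultaneously handles both the ball and the sphere, because we never use that $N$ fills the interior of $K$ --- only that the $\epsilon/2$-balls around its points are pairwise disjoint and contained in a common $(r+\epsilon/2)$-ball about the origin, both of which hold in either case.
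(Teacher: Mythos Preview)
Your argument is the standard volumetric packing proof and is correct in all details. The paper itself does not prove this fact; it merely cites it as folklore (referring the reader to Figiel--Lindenstrauss--Milman), so there is no ``paper's own proof'' to compare against---your write-up is precisely the argument one finds in those references.
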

\begin{fact}
  \label{fct:op-norm}
  Let $p, q \geq 1$, and let $M$ be a $m \times n$ real matrix.  If
  $\norm[q]{M\vec{x}} \leq c$ for all $\vec{x}$ in a $\norm[p]{\cdot}$
  $(1/2)$-net of the $\norm[p]{\cdot}$ sphere in $\R^n$ , then
  $\norm[p \rightarrow q]{M} \leq 2c$.
\end{fact}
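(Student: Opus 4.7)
\textbf{Proof proposal for \Cref{fct:op-norm}.} The plan is the standard one-line $\epsilon$-net argument, organized so that the self-bounding step is clean. Write $S \defeq \{\vec{x} \in \R^n : \norm[p]{\vec{x}} = 1\}$ for the unit $\ell_p$-sphere, let $N \subseteq S$ be the hypothesized $(1/2)$-net (in $\norm[p]{\cdot}$), and set
\[
L \defeq \norm[p \rightarrow q]{M} = \sup_{\vec{x} \in S} \norm[q]{M\vec{x}}.
\]
Note $L$ is finite since $M$ is a fixed $m \times n$ real matrix (the unit sphere is compact and $\vec{x} \mapsto \norm[q]{M\vec{x}}$ is continuous). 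Our goal is to show $L \leq 2c$.

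The key step is to bound $\norm[q]{M\vec{x}}$ for an arbitrary $\vec{x} \in S$ by peeling off a net point and controlling the residual with $L$ itself. Given $\vec{x} \in S$, use the net property to choose $\vec{x}_0 \in N$ with $\norm[p]{\vec{x} - \vec{x}_0} \leq 1/2$. By the triangle inequality in $\norm[q]{\cdot}$ and the definition of the $(p \rightarrow q)$-operator norm applied to the vector $\vec{x} - \vec{x}_0$ (which is not on the sphere, but has $\norm[p]{\cdot} \leq 1/2$), we get
\[
\norm[q]{M\vec{x}} \;\leq\; \norm[q]{M\vec{x}_0} + \norm[q]{M(\vec{x} - \vec{x}_0)} \;\leq\; c + L \cdot \norm[p]{\vec{x} - \vec{x}_0} \;\leq\; c + \tfrac{1}{2} L,
\]
where the bound $\norm[q]{M\vec{x}_0} \leq c$ is exactly the hypothesis on the net points.

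Now take the supremum of the left-hand side over $\vec{x} \in S$ to obtain $L \leq c + \tfrac{1}{2} L$, i.e., $\tfrac{1}{2} L \leq c$, hence $L \leq 2c$, which is the desired bound $\norm[p \rightarrow q]{M} \leq 2c$. There is no real obstacle here; the only point worth flagging is the need to know $L < \infty$ before rearranging the self-referential inequality $L \leq c + L/2$, which is immediate from the continuity/compactness remark above. The argument uses only the triangle inequality for $\norm[q]{\cdot}$ and positive homogeneity of $\norm[p]{\cdot}$, so it goes through for any $p, q \geq 1$ as stated.
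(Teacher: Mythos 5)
Your proof is correct and is the standard $\epsilon$-net self-bounding argument: pick a net point $\vec{x}_0$ close to an arbitrary unit vector $\vec{x}$, use the triangle inequality to get $\norm[q]{M\vec{x}} \leq c + L\,\norm[p]{\vec{x}-\vec{x}_0} \leq c + L/2$, take the supremum, and rearrange (legitimately, since $L < \infty$ by compactness of the unit $\ell_p$-sphere). The paper states this as a standard fact and cites Figiel--Lindenstrauss--Milman without giving a proof, so there is nothing to compare against; your argument is the canonical one and fills the cited gap correctly.
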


\subsection{The distribution $\mathcal{A}_{T, k}$}
\label{sec:distr-mathc-k}
We now describe the distribution on random $k$-lower triangular
matrices that will be used in the proof of \Cref{thm:encoding-matrix}.

Let $T$ be a positive integer, and set $\tau = \tf(T)$, where
\[\tf(T) \defeq \ceil{\lg (T + 1)}\] is the number of bits in the
canonical binary representation of $T$ .  Given a rate parameter $k$,
$\mathcal{A}_{T,k}$ is a distribution on $kT \times T$ \lot{k}\
matrices, such that a matrix \(M \sim \mathcal{A}_{T, k}\) is sampled
as follows:
\begin{equation}
  \label{eq:8}
  M((i, l), j) = \frac{1}{k\cdot\tf(i)^4} \cdot
  \begin{cases}
    0, & i < j,\\
   \frac{1}{\sqrt{i - j + 1}}\cdot\xi_{(i,l), j}, & i \ge j.
  \end{cases}
\end{equation}
where $i, j \in [T], l \in [k]$, and the \(\xi_{(i,l), j}\) are
independent standard normal random variables.  Note that we divide the
rows of $M$ into $T$ \emph{segments}, where the $i$th segment is of
size $k$, and index the rows by a pair $(i,l)$ where $i \in [T]$
denotes the segment, and $l \in [k]$ determines the offset in
the segment.
\begin{remark}
  Note that the distribution $\mathcal{A}_{T, k}$ is
  ``time-invariant'' in the sense that for any $1 \leq t \leq T$, the
  $kt \times t$ leading principal submatrix of $M$ sampled from
  $\mathcal{A}_{T, k}$ is also a faithful sample from
  $\mathcal{A}_{t, k}$.
\end{remark}

\subsection{The distribution of $\norm[2\rightarrow 2]{M}$} We begin
with a short discussion of the operator norm of $M$ sampled according
to $\mathcal{A}$; much of our technical work would be devoted to the
study of $M^{-1}$.  For the operator norm, however, the following
corollary of \Cref{thm:expected-matrix-norm} of Bandeira and van
Handel will be sufficient for our purposes.
\begin{lemma}
  \label{lem:op-norm-of-M}
  For any $\gamma \in (0, 1)$ there exists a positive integer $c_0$
  such that if $k > c_0$ then $M \sim \mathcal{A}_{T, k}$ satisfies
  $\norm[2\rightarrow 2]{M} \leq 1$ with probability at least
  $1 - \gamma$.
\end{lemma}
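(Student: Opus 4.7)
The plan is to control $\E{\norm[2 \rightarrow 2]{M}}$ via \Cref{thm:expected-matrix-norm} and then upgrade to high probability using Gaussian concentration. A direct application of that theorem is inadequate: the quantity $\sigma_0$ is the largest standard deviation of any entry of $M$, which equals $1/k$ (attained at $i = j = 1$), so the term $10\sigma_0\sqrt{\log \min(kT, T)}$ contributes order $\sqrt{\log T}/k$ and is unbounded in $T$ for a fixed $k$. The key is therefore to exploit the fact that most entries of $M$ are much smaller than $\sigma_0$, with the $\tf(i)^{-4}$ attenuation shrinking them further as $i$ grows.

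The natural way to do this is a dyadic decomposition by segment index. Write $M = \sum_{p=1}^{\tf(T)} M_p$, where $M_p$ retains only those rows $(i, l)$ with $i \in [2^{p-1}, 2^p - 1]$ (so $\tf(i) = p$ throughout $M_p$) and zeros out the other rows, and use $\norm[2\rightarrow 2]{M} \leq \sum_p \norm[2\rightarrow 2]{M_p}$. Since $M_p$ has $k \cdot 2^{p-1}$ nonzero rows and its columns with nonzero entries satisfy $j \leq 2^p - 1$, while the variance factor is everywhere $(k p^4)^{-2}(i - j + 1)^{-1}$, a short computation using the harmonic sum estimate $\sum_{m = 1}^{2^p} m^{-1} = O(p)$ yields
\begin{equation*}
\sigma_0(M_p) = O\!\left(\tfrac{1}{k p^4}\right), \quad \sigma_1(M_p) = O\!\left(\tfrac{1}{k p^{7/2}}\right), \quad \sigma_2(M_p) = O\!\left(\tfrac{1}{\sqrt{k}\, p^{7/2}}\right).
\end{equation*}
The logarithmic term in \Cref{thm:expected-matrix-norm} contributes $\sigma_0(M_p)\sqrt{\log(k\cdot 2^{p-1})} = O(1/(k p^{7/2}))$, so the theorem gives $\E{\norm[2 \rightarrow 2]{M_p}} = O(1/(\sqrt{k}\, p^{7/2}))$. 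Since $\sum_{p \geq 1} p^{-7/2}$ converges, summation yields $\E{\norm[2 \rightarrow 2]{M}} = O(1/\sqrt{k})$, which is at most $\gamma/2$ once $k$ is a sufficiently large constant depending on $\gamma$. This step is exactly where the $\tf(i)^4$ factor in $\mathcal{A}_{T,k}$ earns its keep: any exponent strictly greater than $3/2$ on $\tf(i)$ would already make the dyadic sum converge.

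To conclude, I would apply Gaussian concentration of Lipschitz functions. Viewing $F(\xi) \defeq \norm[2\rightarrow 2]{M(\xi)}$ as a function of the independent standard normals $\xi = (\xi_{(i,l),j})$, the triangle inequality combined with the estimate $\norm[2 \rightarrow 2]{M(\xi) - M(\xi')} \leq \sqrt{\sum a_{(i,l),j}^2 (\xi_{(i,l),j} - \xi'_{(i,l),j})^2} \leq \sigma_0 \norm[2]{\xi - \xi'}$ (spectral $\leq$ Frobenius) shows that $F$ is $\sigma_0$-Lipschitz with $\sigma_0 = 1/k$. Gaussian concentration then gives $\Pr{F \geq \E{F} + 1/2} \leq \exp(-k^2/8)$, which can be made at most $\gamma/2$ by enlarging $k$, and combining the two estimates produces $\Pr{\norm[2\rightarrow 2]{M} \leq 1} \geq 1 - \gamma$. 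The only real obstacle is the $\sqrt{\log T}$ dependence in \Cref{thm:expected-matrix-norm}; the dyadic decomposition above is the one place where the precise polylogarithmic shape of the variance schedule in $\mathcal{A}_{T,k}$ is used.
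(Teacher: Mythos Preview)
Your proposal is correct and shares its core idea with the paper: both arguments dyadically decompose $M$ into row blocks $M_p$ according to $\tf(i)=p$, apply \Cref{thm:expected-matrix-norm} to each block, and observe that the $\tf(i)^{-4}$ attenuation makes the resulting per-block bounds summable in $p$. Your $\sigma_1,\sigma_2$ estimates are in fact marginally sharper ($p^{-7/2}$ versus the paper's $p^{-3}$), but both suffice.

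Where you diverge from the paper is in the aggregation step. The paper stays elementary: it applies Markov's inequality to each block to get $\Pr{\norm[2\rightarrow 2]{M_p} > 1/(p\sqrt{2})} \leq \gamma/(2p^2)$, takes a union bound, and then exploits the fact that the $M_p$ have disjoint row supports (so $\norm[2]{M\vec{x}}^2 = \sum_p \norm[2]{M_p \vec{x}_{[2^p-1]}}^2$) together with a sum-swap to conclude $\norm[2\rightarrow 2]{M}\leq 1$ on the good event. You instead bound $\E{\norm[2\rightarrow 2]{M}}$ directly via the triangle inequality and then invoke Gaussian concentration for the $1/k$-Lipschitz map $\xi\mapsto\norm[2\rightarrow 2]{M(\xi)}$. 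Your route is slightly slicker and yields a stronger tail bound ($\exp(-\Omega(k^2))$ rather than what Markov alone gives), at the cost of importing the Lipschitz concentration inequality; the paper's route uses only Markov and the orthogonal-row structure, and is therefore a bit more self-contained. Either argument proves the lemma.
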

\begin{proof}
  For $1 \leq i \leq \tf(T)$, let $M(i)$ denote the submatrix of $M$
  consisting of the consecutive rows from $(2^{i-1}, 1)$ to
  $(\min\inp{2^i - 1, T}, k)$. Here, we
  are using the indexing scheme for rows of $M$ that was defined in
  \cref{eq:8}.  Note that the number of non-zero columns of $M(i)$ is
  at most $2^{i} - 1$.  We now apply \Cref{thm:expected-matrix-norm}
  to each $M(i)$.  In the notation of that theorem, we have for $M(i)$
  \begin{displaymath}
    \sigma_1 \leq \frac{1}{k \cdot
      i^3}\text{, }
    \sigma_2 \leq \frac{1}{\sqrt{k} \cdot
      i^3}\text{ and }
    \sigma_0 \leq \frac{2}{k\cdot i^4},
  \end{displaymath}
  where we use the estimate
  \begin{displaymath}
    \sum_{x=1}^n\frac{1}{x} \leq \sum_{x = 1}^{2^{\tf(n)} - 1}
    \frac{1}{x} \leq \sum_{j=1}^{\tf(n)} \sum_{x : \tf(x) =
      j}\frac{1}{2^{j - 1}} = \tf(n).
  \end{displaymath}
  The theorem then implies that when $k \geq c$ for
  $c = c(\gamma)$ large enough, we have
  $\E{\norm[2\rightarrow 2]{M(i)}} \leq \frac{\gamma}{4i^3}$ for all
  $1 \leq i \leq \tf(T)$.  Thus, for $1 \leq i \leq \tf(T)$,
  \begin{displaymath}
    \Pr{\norm[2\rightarrow 2]{M(i)} > \frac{1}{i\sqrt{2}}} \leq
    \frac{\gamma}{2i^2}.
  \end{displaymath}
  By a union bound (and using $\sum_{i \geq 1} (1/i^2) < 2$), we get
  that with probability at least $1 - \gamma$
  \begin{equation}
    \norm[2 \rightarrow 2]{M(i)} \leq \frac{1}{i\sqrt{2}}\text{ for
      all $1 \leq i \leq \tf(T)$}. \label{eq:46}
  \end{equation}
  When the event in \cref{eq:46} occurs, we have
  $\norm[2\rightarrow 2]{M} \leq 1$, since for any $\vec{x} \in \R^T$
  (here $\vec{x}_{[l]}$ denotes the prefix of $\vec{x}$ consisting of
  its first $l$ co-ordinates)
  \begin{flexeq*}
    \norm[2]{M\vec{x}}^2
    = \sum_{i = 1}^{\tf(T)}\norm[2]{M(i)\vec{x}_{[2^i - 1]}}^2
    \eqbreak\leq \sum_{i = 1}^{\tf(T)}\norm[2 \rightarrow 2]{M(i)}^2 \norm[2]{\vec{x}_{[2^i - 1]}}^2
    \leq \sum_{i = 1}^{\tf(T)}\frac{1}{2i^2} \norm[2]{\vec{x}_{[2^i -
    1]}}^2
    \eqbreak \leq \sum_{j=1}^Tx_j^2\sum_{i=\tau(j)}^{\tf(T)}\frac{1}{2i^2} \leq \norm[2]{\vec{x}}^2,
  \end{flexeq*}
  where the last inequality uses $\sum_{i\geq 1}(1/i^2) < 2$.
\end{proof}

\subsection{Invertibility of $M$}
\label{sec:invertibility-a}
To ease notation, we fix a $\mu \in (0, 1]$ in the rest of this
section, and proceed to study the robust invertibility of a matrix $M$
sampled from $\mathcal{A}_{T, k}$ with respect to the
$\norm[\dagger_\mu(k T)]{\cdot}$ and $\norm[\star_\mu(T)]{\cdot}$
norms by analyzing the quantity
$\inf_{y \neq
  0}\frac{\norm[\dagger_\mu(kT)]{My}}{\norm[\star_\mu(T)]{y}}$.  {The
constants appearing in the statements of the theorems appearing below
therefore carry an implicit dependence upon this fixed value of $\mu$.}

Our first step is to pass to standard unweighted norms via a simple
reduction.  Let $L$ be a $kT \times kT$ diagonal matrix with
$L_{i,i} = (k T)^{(1-\mu)/2} \cdot {(kT - i + 1)}^{-(1 - \mu)/2}$, and
let $R$ be a $T \times T$ diagonal matrix with
$R_{i, i} = T^{\mu/2}\cdot(T- i + 1)^{-\mu/2}$.  We then have
\begin{equation}
  \label{eq:24}
  \inf_{y \neq 0}\frac{\norm[\dagger_\mu(kT)]{My}}{\norm[\star_\mu(T)]{y}} =
  \inf_{x \neq 0} \frac{\norm[1]{LMRx}}{\norm[2]{x}} \geq
  \sqrt{T}  \cdot \inf_{\norm[2]{\vec{x}} = 1} \norm[1]{Bx},
\end{equation}
where the matrix $B$ is defined in terms of $M$ as follows:
\begin{equation}
  \label{eq:def-B-from-M}
  B_{(i, l), j} =  \frac{M_{(i,l),j}}{\insq{(T- i +
      1)^{1-\mu}(T - j + 1)^\mu}^{1/2}}.
\end{equation}
Denote the distribution of $B$ obtained from
$M \sim \mathcal{A}_{T, k}$ as $\mathcal{B}_{T, k}$.  We will now
study the properties of the blocks $\mblock{B}{j}$ for $B$ sampled
from this distribution in detail.  We start with an investigation of
their $2 \rightarrow 1$ norm.

\newcommand{\kt}[1]{\ensuremath{k(\tf(#1))}}
\begin{theorem}
  \label{thm:2-1norm}
  Let $k$ be a rate parameter such that $k \geq 1 + \log 6$, and let
  $B \sim \mathcal{B}(T, k)$ for some positive integer $T$.  Then for
  each $2 \leq j \leq \tf(T)$,
  \begin{flexeq*}
    \Pr{\norm[2\rightarrow 1]{\lclock{\mblock{B}{j}}{j - 1}} >
      \frac{256}{\sqrt{\mu}\tf(T)^4}} \eqbreak \leq
    \exp\inp{-(k - \log 6) 2^{j - 1}}.
  \end{flexeq*}
\end{theorem}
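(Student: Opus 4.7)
The plan is to run an $\epsilon$-net argument on the unit $\ell_2$-sphere $S$ in the column space $\R^{2^{j-1}}$ of $\mblock{B}{j}$. By \Cref{fct:epsilon-nets}, $S$ admits a $(1/2)$-net $\mathcal{N}$ with $\abs{\mathcal{N}} \leq 6^{2^{j-1}}$; by \Cref{fct:op-norm}, it then suffices to bound $\norm[1]{\lclock{\mblock{B}{j}}{j-1}\vec{x}}$ by $\tfrac{128}{\sqrt{\mu}\,\tf(T)^4}$ uniformly over $\vec{x} \in \mathcal{N}$. For each such $\vec{x}$ the vector $Y_{\vec{x}} \defeq \lclock{\mblock{B}{j}}{j-1}\vec{x}$ has independent centered Gaussian coordinates, so I will (i) bound its mean $\ell_1$-norm, (ii) control fluctuations by Gaussian Lipschitz concentration, and (iii) union-bound over $\mathcal{N}$.

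Both (i) and (ii) rest on a single variance estimate. The nonzero entries $B_{(i,l),j'}$ of $\lclock{\mblock{B}{j}}{j-1}$ are indexed by rows $(i,l)$ with $i \in [T - 2^{j-2}+1, T]$, $l \in [k]$ and columns $j' \in [T+2-2^j, T+1-2^{j-1}]$; for these indices one has $i-j'+1 \geq 2^{j-2}$, $T-j'+1 \geq 2^{j-1}$, and $T-i+1 \leq 2^{j-2}$. Combining these three inequalities with \cref{eq:def-B-from-M} and \cref{eq:8} gives the coordinate standard-deviation bound $\sigma_{(i,l)} \leq \frac{1}{k\,\tf(i)^4 \cdot 2^{(j-2+(j-1)\mu)/2} (T-i+1)^{(1-\mu)/2}}$. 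For (i), summing this over $l \in [k]$ and $i \in [T-2^{j-2}+1,T]$, using $\tf(i) \geq \tf(T)/2$ throughout the relevant range and $\sum_{m=1}^{2^{j-2}} m^{-(1-\mu)/2} \leq \tfrac{4}{1+\mu}\cdot 2^{(j-2)(1+\mu)/2}$, the $j$-dependent exponentials cancel exactly (both reducing to a factor $2^{-\mu/2}$), and I obtain $\E{\norm[1]{Y_{\vec{x}}}} \leq \sqrt{2/\pi}\sum_{(i,l)}\sigma_{(i,l)} \leq \frac{C_1}{\tf(T)^4}$ with $C_1$ an absolute constant independent of both $j$ and $\mu$.

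For (ii), view $\norm[1]{Y_{\vec{x}}}$ as a function of the i.i.d.\ standard normals $\xi_{(i,l),j'}$ driving $B$; its Lipschitz constant is $L = \bigl(\sum_{(i,l)}\sigma_{(i,l)}^2\bigr)^{1/2}$. The analogous sum, now using the bound $\sum_{m=1}^{M} m^{-(1-\mu)} \leq \tfrac{2M^\mu}{\mu}$ (valid for $\mu \in (0,1]$), yields $L \leq \frac{C_2}{\sqrt{k\mu}\,\tf(T)^4 \cdot 2^{(j-2)/2}}$; this is the sole origin of the $\mu^{-1/2}$ factor in the final bound. By Gaussian concentration of Lipschitz functions, $\Pr{\norm[1]{Y_{\vec{x}}} - \E{\norm[1]{Y_{\vec{x}}}} > rL} \leq \exp(-r^2/2)$. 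Setting $r^2 = 2k\cdot 2^{j-1}$ makes $rL = O(1/(\sqrt{\mu}\,\tf(T)^4))$, and a union bound over $\mathcal{N}$ produces failure probability at most $6^{2^{j-1}}\exp(-k\cdot 2^{j-1}) = \exp(-(k-\log 6)\,2^{j-1})$. Combining (i)--(iii) and applying \Cref{fct:op-norm} gives the claimed bound, with room to spare for the stated constant $256$. The main obstacle is bookkeeping through the two variance sums: verifying that (a) the $j$-dependence cancels in the expectation so the expectation bound is uniform in $j$, and (b) $1/\mu$ enters only through $L^2$ so the tail carries only $1/\sqrt{\mu}$; both are consequences of the specific variance scaling built into \cref{eq:def-B-from-M}.
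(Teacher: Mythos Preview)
Your argument is correct. The overall structure---a $(1/2)$-net on the unit sphere, a pointwise tail bound on $\norm[1]{Y_{\vec{x}}}$, union bound, then \Cref{fct:op-norm}---is identical to the paper's. The only difference is in how the pointwise tail bound is obtained. The paper writes $\norm[1]{Y_{\vec{x}}} = \sum_{(i,l)}\abs{X_{i,l}}$ and applies the Chernoff-type bound of \Cref{lem:gauss-l1-upper}, which needs only the single estimate $\sum_{(i,l)}\sigma_{(i,l)}^2 \leq \frac{2^{11}\cdot 2^{-j}}{\mu k\,\tf(T)^8}$; plugging this in gives the tail directly. You instead split into expectation plus fluctuation and invoke Gaussian Lipschitz concentration, which forces you to control both $\sum_{(i,l)}\sigma_{(i,l)}$ (for the mean) and $\sum_{(i,l)}\sigma_{(i,l)}^2$ (for the Lipschitz constant). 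Both routes land at the same exponent $(k-\log 6)\,2^{j-1}$ after the union bound. Your decomposition does surface the nice observation that the expectation bound is actually uniform in $\mu$ and the $\mu^{-1/2}$ enters only through the fluctuation, but at the cost of tracking one extra sum; the paper's direct MGF bound is terser.
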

\begin{proof}
  Fix $2 \leq j \leq \tf(T)$, and let $S$ be any $1/2$-net for the
  unit sphere in $(R^{2^{j-1}}, \norm[2]{\cdot})$.  Note that we can
  choose $S$ so that $\abs{S} \leq \exp\inp{2^{j-1}\log 6}$. For ease
  of notation, we index the co-ordinates of any $\vec{x} \in S$ from
  $2^{j-1}$ to $2^{j} - 1$.  Now, for any such $\vec{x} \in S$, we
  have
  \begin{equation}
    \norm[1]{\lclock{\mblock{B}{j}\vec{x}}{j - 1}} =
    \sum_{i = 1}^{2^{j-2}}\sum_{l=1}^{k}\abs{X_{i,l}},\label{eq:5}
  \end{equation}
  where $X_{i,l}$ are independent mean zero normal variables with
  variances
  \begin{equation}
    \sigma_{i,l}^2 = \frac{1}{k^2\cdot \tf(T - i +
      1)^8}\sum_{s = 2^{j - 1}}^{2^j - 1}\frac{x_s^2}{s^\mu \cdot i^{1-\mu}\cdot(s - i + 1)}.
    \label{eq:57}
  \end{equation}
  Note that since $j \leq \tf(T)$, we have $T \geq 2^{j-1}$.  For
  $1 \leq i \leq 2^{j-2}$, this implies that
  $\tf(T - i + 1) \geq \max(1, \tf(T) - 1) \geq \tf(T)/2$, so that we have
  \begin{align}
    \sum_{i=1}^{2^{j - 2}}\sum_{l=1}^{k}
    \sigma_{i,l}^2
    &\leq \frac{2^{8}}{k \tf(T)^8} \sum_{i=1}^{2^{j-2}}
      \sum_{s = 2^{j - 1}}^{2^{j}-1}
      \frac{x_s^2}{s^\mu \cdot i^{1-\mu}\cdot(s - i + 1)}\nonumber\\
    &\leq \frac{2^{11}\cdot 2^{-(1 + \mu)j} \cdot {\norm[2]{x}^2}}{k \tf(T)^8}
        \cdot \sum_{i=1}^{2^{j-2}} \frac{1}{i^{1-\mu}}\nonumber\\
    &\leq \frac{2^{11}\cdot 2^{-j}}{\mu k \tf(T)^8} \norm[2]{\vec{x}}^2,\label{eq:58}
  \end{align}
  where in the first inequality we use $0 < \mu \leq 1, s \geq
  2^{j-1}$ and $s - i + 1 \geq 2^{j-2}$, and in the second inequality
  the fact that $\sum_{i=1}^Ni^{\mu - 1} \leq N^\mu/\mu$.
  
  We now apply \Cref{lem:gauss-l1-upper} to the sum in \cref{eq:5}.
  The number of terms $n$ is $k\cdot 2^{j-2}$, and we set the
  parameter $\alpha$ in \Cref{lem:gauss-l1-upper} to
  $\alpha = \frac{2^7}{\sqrt{\mu}\tf(T)^4\sqrt{n}}\norm[2]{\vec{x}}$ to get
  \begin{flexeq}
    \Pr{\norm[1]{\lclock{\mblock{B}{j}\vec{x}}{j - 1}}
      > \nfrac{2^7}{(\sqrt{\mu}\tf(T)^4)}\norm[2]{\vec{x}}} \eqbreak\leq
    \exp\inp{-k \cdot 2^{j}}.
  \end{flexeq}
  A union bound over all $\vec{x} \in S$ now yields
  \begin{flexeq}
    \Pr{\exists \vec{x} \in S,
      \norm[1]{\lclock{\mblock{B}{j}\vec{x}}{j - 1}} >
      \nfrac{2^7}{(\sqrt{\mu}\tf(T)^4)}\norm[2]{\vec{x}}}
    \eqbreak\leq \exp\inp{-(k - \log 6)\cdot 2^{j}}.\label{eq:18}
  \end{flexeq}
  Since $S$ is a $1/2$-net, \Cref{fct:op-norm} implies that
  $\norm[2\rightarrow 1]{\lclock{\mblock{B}{j}}{j - 1}} \leq
  2^8/(\sqrt{\mu}\tf(T)^4)$ with probability at least
  $ 1 - \exp\inp{-(k - \log 6)\cdot 2^{j-1}}$.
\end{proof}

The next lemma shows that the $\norm[1]{\cdot}$ norm of the output of
$\mblock{B}{j}$ cannot be very small. %
\begin{lemma}
  \label{lem:inf-B}
  There exist positive constants $c_1, c_2 > 0$ such that the
  following is true.  For any integer $j \geq 2$, any $k \geq c_1$,
  and any vector $\vec{y}$,
  \begin{flexeq*}
    \mathbb{P}\left[
      \exists x, \norm[2]{\vec{x}} = 1 \text{ and } \right.
    \eqbreak \left.
      \norm[1]{\lclock{\mblock{B}{j}\vec{x} + \vec{y}}{j - 1}} < c_2/(\tf(T)^4)
    \right]
    \eqbreak \leq \exp\inp{-c_2 k \cdot 2^{j}}.
  \end{flexeq*}
\end{lemma}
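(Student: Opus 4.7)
The plan is an $\epsilon$-net argument on the unit $\ell_2$-sphere of $\R^{2^{j-1}}$, parallel in structure to the proof of \Cref{thm:2-1norm} but with the direction of the Gaussian bound reversed. Fix a unit vector $\vec{x}$. Each coordinate of $\lclock{\mblock{B}{j}\vec{x}}{j-1}$ is an independent mean-zero Gaussian with variance $\sigma_{(i,l)}^2(\vec{x})$ given by \cref{eq:57}. For $i$ in the last $2^{j-2}$ positions and $j'$ in the column range of $\mblock{\cdot}{j}$, the quantities $(i - j' + 1)$ and $(T - j' + 1)$ are each $\Theta(2^{j-1})$ uniformly, so every term in the sum defining $\sigma_{(i,l)}^2$ equals, up to $\mu$-dependent constants, $x_{j'}^2 / (k^2 \tf(T)^8 \cdot 2^{(j-1)(1+\mu)} \cdot (T-i+1)^{1-\mu})$. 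Summing over $j'$ gives a \emph{uniform} lower bound
\begin{displaymath}
\sigma_{(i,l)}(\vec{x}) \;\ge\; \frac{c_\mu}{k\,\tf(T)^4 \cdot 2^{(j-1)(1+\mu)/2} \cdot (T-i+1)^{(1-\mu)/2}},
\end{displaymath}
and summing over the $k\cdot 2^{j-2}$ rows (using $\sum_{r=1}^{2^{j-2}} r^{(\mu-1)/2} = \Theta(2^{(j-2)(1+\mu)/2})$) yields $\sum_{(i,l)} \sigma_{(i,l)}(\vec{x}) \ge c'_\mu /\tf(T)^4$.

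Since $\vec{y}$ is fixed independently of $B$, the symmetry bound $\E{|\sigma W + a|} \ge \sigma\sqrt{2/\pi}$ (valid for any real $a$ and $W \sim \mathcal{N}(0,1)$, since $a \mapsto \E{|\sigma W + a|}$ is minimized at $a=0$) yields coordinatewise
\begin{displaymath}
\E{\norm[1]{\lclock{\mblock{B}{j}\vec{x}+\vec{y}}{j-1}}} \;\ge\; \sqrt{2/\pi}\sum_{(i,l)}\sigma_{(i,l)}(\vec{x}) \;\ge\; \frac{c''_\mu}{\tf(T)^4}.
\end{displaymath}
I would then apply Gaussian concentration to the map from the underlying i.i.d.\ standard Gaussians $\xi_{(i,l), j'}$ to this $\ell_1$-norm; its Lipschitz constant is $\bigl(\sum_{(i,l)} \sigma_{(i,l)}^2(\vec{x})\bigr)^{1/2}$, which is $O\bigl(1/(\sqrt{\mu k}\,\tf(T)^4 \cdot 2^{j/2})\bigr)$ by the computation already done in \cref{eq:58}. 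Setting the deviation at half the expectation yields, for any single $\vec{x}$, a concentration bound $\exp(-\Omega(\mu k\cdot 2^j))$, which \emph{does not depend on $\vec{y}$} since the Lipschitz constant does not.

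Finally, a union bound over an $\eta$-net of size at most $(3/\eta)^{2^{j-1}}$ on the sphere (\Cref{fct:epsilon-nets}) absorbs the net entropy provided $k \ge c_1$ is large enough. To pass from the net to the full sphere, I would invoke \Cref{thm:2-1norm}: for $\vec{x}_0$ the net point nearest $\vec{x}$,
\begin{displaymath}
\norm[1]{\lclock{\mblock{B}{j}\vec{x} + \vec{y}}{j-1}} \;\ge\; \norm[1]{\lclock{\mblock{B}{j}\vec{x}_0 + \vec{y}}{j-1}} \;-\; \eta\cdot\norm[2\to 1]{\lclock{\mblock{B}{j}}{j-1}},
\end{displaymath}
and choosing $\eta$ to be a small absolute multiple of $\sqrt{\mu}$ ensures the loss is at most half of $c''_\mu/\tf(T)^4$, yielding the lemma. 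The main obstacle is the \emph{uniform-in-$\vec{x}$} lower bound on $\sum_{(i,l)} \sigma_{(i,l)}(\vec{x})$: this depends on the observation that within $\lclock{\mblock{B}{j}}{j-1}$ the per-column variance contributions are essentially flat, so no adversarial unit $\vec{x}$ can concentrate its mass on ``weak'' columns; without this flatness the whole scheme would lose the required $\tf(T)^{-4}$ scale.
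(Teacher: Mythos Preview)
Your argument is correct, and the overall $\epsilon$-net scaffolding matches the paper's, but the per-point step is genuinely different from what the paper does. The paper first invokes \Cref{cor:stochastic-domination} to reduce to $\vec y=0$, and then applies the geometric-mean small-ball bound \Cref{cor:conc}: since each $\sigma_{i,l}^2$ is a positive linear function of the $x_s^2$, the geometric mean $GM((\sigma_{i,l}))$ is minimized at a coordinate vector $\vec x=e_s$, and one computes $GM$ there explicitly to obtain the $\tau^{k\cdot 2^{j-2}}$ tail. You instead lower-bound each individual $\sigma_{i,l}(\vec x)$ uniformly in $\vec x$ (which works precisely because of the ``flatness'' you identify inside $\lclock{\mblock{B}{j}}{j-1}$), deduce $\sum_{i,l}\sigma_{i,l}\ge c'_\mu/\tf(T)^4$, bound the expectation via $\E{|\sigma W+a|}\ge\sigma\sqrt{2/\pi}$ (thereby handling $\vec y\neq 0$ directly rather than by domination), and then apply Gaussian Lipschitz concentration with constant $(\sum\sigma_{i,l}^2)^{1/2}$ controlled by \cref{eq:58}. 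Both computations yield $\exp(-\Omega(k2^j))$ because the $\tf(T)^8$ factors cancel between numerator and denominator. Your route is essentially the mechanism behind \Cref{thm:l1-concentration} (used later in the paper for \Cref{lem:perturb-preserve}) repurposed here; the paper's $GM$ argument is more tailored and would still go through even if the per-entry $\sigma_{i,l}$ varied more wildly across columns, whereas your approach leans on the uniform pointwise lower bound. In this particular block both are available and equally sharp.
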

\begin{proof}
  Let $S$ be an $\epsilon$-net for the unit sphere in
  $\inp{\R^{2^{j-1}}, \norm[2]{\cdot}}$, for an $\epsilon$ to be
  determined later.  As in \cref{eq:5} in the proof of
  \Cref{thm:2-1norm}, for any $\vec{x} \in S$, we have
  \begin{equation}
    \label{eq:10}
    \norm[1]{\lclock{\mblock{B}{j}\vec{x} + \vec{y}}{j - 1}}
    = \sum_{i = 1}^{2^{j-2}}\sum_{l=1}^{k}\abs{X_{i,l} + y_{i, l}}
  \end{equation}
  where $X_{i,l}$ are independent mean zero normal variables with
  variances $\sigma_{i,l}$ as defined in \cref{eq:57}.  Recall also
  that $\tf(T) - 1 \leq \tf(T - i + 1) \leq \tf(T)$ since
  $i \leq 2^{j-2}$.

  Since we are interested in upper bounding the probability that the
  above sum is small, it follows from \Cref{cor:stochastic-domination}
  that the worst case is $\vec{y} = 0$. %
  In preparation to apply \Cref{cor:conc} to the above sum with
  $\vec{y} = 0$, we also note that since the $\sigma_{i,l}^2$ are
  positive linear functions of the $x_s^2$ and
  $\norm[2]{\vec{x}} = 1$,
  $GM\inp{\inp{\sigma_{i,l}}_{\substack{i \in [2^{j-2}]\\l \in
        [k]}}}$ is minimized when $\vec{x} = \vec{e_s}$
  for some $s \in [2^{j-1}, 2^j - 1]$.  We thus have
  \begin{align*}
    \notbool{j2col}{}{&}
    GM\inp{\inp{\sigma_{i,l}}_{\substack{i \in [2^{j-2}]\\l \in [k]}}}
    \notbool{j2col}{}{\\}
                      &\geq \frac{1}{k\tf(T)^4}
                        \cdot \frac{1}{\sqrt{GM\inp{(i^{1-\mu})_{i \in [2^{j-2}]}}}}
                        \notbool{j2col}{\cdot}{\\ &\qquad \cdot}
                        \min_{s \in [2^{j-1}, 2^j - 1]}
     \frac{1}{\sqrt{s^\mu GM\inp{(s - i + 1)_{i \in [2^{j-2}]}}}}\\
   &\geq  \frac{2^{-j/2}}{k\tf(T)^4} \min_{s \in [2^{j-1}, 2^j - 1]}
     \frac{1}{\sqrt{s}}.
  \end{align*}
  Applying \Cref{cor:conc}, and noting that the number of terms in the
  sum in \cref{eq:10} for which the geometric mean was taken above is
  $k\cdot 2^{j-2}$, we now find a positive constant $c > 0$ such that the
  following holds for all $\tau \in (0, 1)$:
  \begin{equation*}
    \Pr{\norm[1]{\lclock{\mblock{B}{j}\vec{x} + \vec{y}}{j - 1}}
      < \frac{c\tau}{\tf(T)^4}}
    \leq \tau^{k\cdot 2^{j-2}}.
  \end{equation*}
  Taking a union bound over all $\vec{x}$ in the $\epsilon$-net
  $S$, and then using the bound on
  $\norm[2\rightarrow 1]{\lclock{\mblock{B}{j}}{j-1}}$ derived in the
  proof of \Cref{thm:2-1norm}, we then have
  \begin{multline}
    \mathbb{P}\Big[ \exists x, \norm[2]{x} = 1\text{, and } \eqbreak
    \left. \norm[1]{\lclock{\mblock{B}{j}\vec{x} + \vec{y}}{j - 1}}
      < (c\tau - 256\mu^{-1/2}\epsilon)/\tf(T)^4
    \right] \\
    \leq \exp\inp{-2^{j-1}\cdot \insq{(k/2)\log (1/\tau) - \log
        (3/\epsilon)}} \eqbreak + \exp\inp{-2^{j-1}\cdot(k - \log
      6)}.
    \label{eq:2}
  \end{multline}
  Since $k \geq c_1$ for a large enough $c_1$, the claim now follows
  after choosing $\epsilon$ and $\tau$ to be appropriate constants.
\end{proof}

We now consider small dimensional perturbations to the output of
$\mblock{B}{j}$ for $j \geq 2$, and start with a corollary of
\Cref{lem:inf-B}.
\begin{corollary}
  \label{cor:inf-B-perturb}
  There exist positive constants $C, C_1, C_2 > 0$ such that the
  following is true.  For any $j \geq 2$, $k \geq C_1$, and $V$ an
  arbitrary subspace of dimension at most $2(2^{j}-1)$,
  \begin{flexeq*}
    \mathbb{P}\left[
      \exists \vec{x} \in \R^{2^{j-1}}, y \in V \text{ s. t. }
      \norm[1]{\lclock{\mblock{B}{j}\vec{x} + \vec{y}}{j - 1}} \right.
    \eqbreak < \left.
      C\norm[2]{\vec{x}}/(\tf(T)^4)
    \right]
    \leq \exp\inp{-C_2 k \cdot 2^{j}}.
  \end{flexeq*}
\end{corollary}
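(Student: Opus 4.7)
The plan is to reduce \Cref{cor:inf-B-perturb} to \Cref{lem:inf-B} via an $\eta$-net argument over the subspace $V$, after first using \Cref{thm:2-1norm} to localize the set of candidate ``bad'' perturbations to a bounded region. By homogeneity I may assume $\norm[2]{\vec{x}} = 1$. Only the projection of $\vec{y}$ onto the last $k\cdot 2^{j-2}$ coordinates enters the quantity $\lclock{\mblock{B}{j}\vec{x}+\vec{y}}{j-1}$, so I replace $V$ by its image $V'$ under this projection; $V'$ is a subspace of $\R^{k\cdot 2^{j-2}}$ of dimension at most $2(2^j-1)$.

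Next, I condition on the high-probability event $\mathcal{G}$ from \Cref{thm:2-1norm} that $\norm[2\rightarrow 1]{\lclock{\mblock{B}{j}}{j-1}} \leq 256/(\sqrt{\mu}\tf(T)^4)$. On $\mathcal{G}$, any pair $(\vec{x}, \vec{y'})$ with $\norm[2]{\vec{x}}=1$ and $\vec{y'} \in V'$ violating the desired conclusion must satisfy $\norm[1]{\vec{y'}} \leq R \defeq (C + 256/\sqrt{\mu})/\tf(T)^4$ by the triangle inequality, so it suffices to range $\vec{y'}$ only over $V' \cap B_1(R)$, the $\norm[1]{\cdot}$-ball of radius $R$ in $V'$.

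Using \Cref{fct:epsilon-nets}, I take an $\eta$-net $\mathcal{N}$ of $V' \cap B_1(R)$ in $\norm[1]{\cdot}$ with $\eta = c_2/(2\tf(T)^4)$, where $c_2$ is the constant from \Cref{lem:inf-B}. The ratio $R/\eta$ is an absolute constant, so $|\mathcal{N}| \leq \Gamma^{2(2^j - 1)}$ for some $\Gamma = \Gamma(\mu, C, c_2)$. For each $\tilde{\vec{y}} \in \mathcal{N}$, extending by zero to a vector in $\R^{kT}$ and invoking \Cref{lem:inf-B} gives $\norm[1]{\lclock{\mblock{B}{j}\vec{x}+\tilde{\vec{y}}}{j-1}} \geq c_2/\tf(T)^4$ for all unit $\vec{x}$, except on an event of probability at most $\exp(-c_2 k \cdot 2^j)$. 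A union bound over $\mathcal{N}$ together with $\mathcal{G}$ then yields total failure probability at most $\Gamma^{2(2^j-1)}\exp(-c_2 k \cdot 2^j) + \exp(-(k-\log 6) 2^{j-1})$. Finally, for arbitrary $\vec{y'} \in V' \cap B_1(R)$, picking $\tilde{\vec{y}} \in \mathcal{N}$ with $\norm[1]{\vec{y'} - \tilde{\vec{y}}} \leq \eta$ and applying the triangle inequality gives $\norm[1]{\lclock{\mblock{B}{j}\vec{x}}{j-1} + \vec{y'}} \geq c_2/\tf(T)^4 - \eta = c_2/(2\tf(T)^4)$, so the conclusion holds with $C = c_2/2$.

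The main obstacle is balancing the two exponents: since $\dim V' = \Theta(2^j)$, the net has $\exp(\Theta(2^j))$ points, competing directly with the per-point failure probability $\exp(-c_2 k \cdot 2^j)$ from \Cref{lem:inf-B}. The crucial point is that localization via $\mathcal{G}$ forces $R/\eta$ to be an absolute constant independent of $T$, keeping $\log|\mathcal{N}|$ proportional to $2^j$ rather than $2^j \log T$. Once this is arranged, taking $C_1$ larger than a constant depending on $c_2$ and $\log\Gamma$ ensures $c_2 k$ dominates $2 \log \Gamma$, yielding the claimed $\exp(-C_2 k \cdot 2^j)$ bound.
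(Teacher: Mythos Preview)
Your proposal is correct and follows essentially the same route as the paper: project $V$ to the relevant trailing block, condition on the $2\to 1$ norm bound from \Cref{thm:2-1norm} to localize $\vec{y}$ to an $\ell_1$ ball of radius $O(1/\tf(T)^4)$, take an $\ell_1$ net of mesh $c_2/(2\tf(T)^4)$ so that the ratio $R/\eta$ is an absolute constant, apply \Cref{lem:inf-B} pointwise, and union bound. Your choice $C=c_2/2$ and your observation that the constant ratio $R/\eta$ keeps $\log|\mathcal{N}|$ at order $2^j$ (rather than $2^j\log T$) are exactly the paper's argument.
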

\begin{proof}
  Let $U$ be the vector space
  $\inb{\lclock{\vec{y}}{j - 1} \st \vec{y} \in V}$.  Note that we can
  replace $V$ by $U$ in the statement of the corollary (i.e., if the
  result holds for $U$, then it also holds for $V$).  We therefore
  restrict our attention to $U$.  Note that the dimension of $U$ is no
  more than the dimension of $V$.

  Let $c_2$ be as in \Cref{lem:inf-B} and define $C = c_2/2$.  From
  \Cref{thm:2-1norm} we know that
  $\norm[2\rightarrow 1]{\lclock{\mblock{B}{j}}{j-1}} \leq
  256/(\sqrt{\mu}\tf(T)^4)$ with probability at least
  $1 - \exp\inp{-\Theta(k \cdot 2^{j})}$.  Under this event we also
  have
  \[\norm[1]{\lclock{\mblock{B}{j}\vec{x} + \vec{y}}{j - 1}} \geq
  C/\tf(T)^4\] whenever $\norm[2]{\vec{x}} = 1$ and
  $\norm[1]{\vec{y}} > (C + 256\mu^{-1/2})/\tf(T)^4$.

  Therefore, let $N$ be a $\inp{C/\tf(T)^4}$-net in $\ell_1$ for the
  $\ell_1$ ball of radius $(C + 256\mu^{-1/2})/\tf(T)^4$ in $U$.  We have
  $\abs{N} \leq \exp\inp{c'2^{j}}$ for some $c' > 0$.  Thus, applying
  \Cref{lem:inf-B} to each element in $N$ and then taking a union
  bound, we have
  \begin{flexeq*}
    \mathbb{P}\left[
      \exists \vec{x} \in \R^{2^{j-1}}, y \in N \text{ s. t. }
      \norm[2]{\vec{x}} = 1
      \text{ and } \right.
      \eqbreak \left. \norm[1]{\lclock{\mblock{B}{j}\vec{x} + \vec{y}}{j - 1}}
      < 2C/\tf(T)^4
    \right]
    \eqbreak \leq \exp\inp{-c''k \cdot 2^{j}},
  \end{flexeq*}
  for some positive constant $c''$ whenever $k \geq c_1$ for some
  other positive constant $c_1$. Using the fact that $N$ is a
  $(C/\tf(T)^4)$-net in $\ell_1$ we get the claimed result.
\end{proof}

We now show that adding the output of $\mblock{B}{j}$ does not shrink
the size of the perturbation either, as long as the perturbations
comes from a small dimensional space.
\begin{lemma}
  \label{lem:perturb-preserve}
  For any $\gamma \in (0, 1)$ there exist positive constants
  $c_1 = c_1(\gamma), c_2 = c_2(\gamma)$ such that for any integers
  $j \geq 2$ and any $k > c_1$, the following is true.  Let $V$ be an
  arbitrary subspace of dimension at most $2(2^{j}-1)$.  Then,
  \begin{flexeq*}
    \mathbb{P}\left[
      \exists \vec{x} \in \R^{2^{j-1}}, y \in V \text{ s. t. } \right.
    \eqbreak \left.
      \norm[1]{\lclock{\mblock{B}{j}\vec{x} + \vec{y}}{j - 1}}
      < \gamma\norm[1]{\lclock{\vec{y}}{j-1}}
    \right]
    \eqbreak \leq \exp\inp{-c_2k \cdot 2^{j}}.
  \end{flexeq*}
\end{lemma}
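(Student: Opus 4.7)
The plan is to combine the lower bound of \Cref{cor:inf-B-perturb} (which handles vectors $\vec{x}$ of large Euclidean norm) with a direct Gaussian concentration argument (which handles $\vec{x}$ of bounded norm). By homogeneity of the desired inequality in $(\vec{x}, \vec{y})$, I may normalize $\norm[1]{\lclock{\vec{y}}{j-1}} = 1$, so the goal becomes to show $\norm[1]{\lclock{\mblock{B}{j}\vec{x} + \vec{y}}{j - 1}} \geq \gamma$ uniformly in $\vec{x}$ and in $\vec{y} \in V$ with $\norm[1]{\lclock{\vec{y}}{j-1}} = 1$. Throughout I condition on the two high-probability events $E_1 \defeq \inb{\norm[2 \rightarrow 1]{\lclock{\mblock{B}{j}}{j - 1}} \leq 256/(\sqrt{\mu}\tf(T)^4)}$ from \Cref{thm:2-1norm}, and $E_2$, the conclusion of \Cref{cor:inf-B-perturb} applied to $V$; both hold with probability at least $1 - \exp\inp{-\Omega(k\cdot 2^j)}$ once $k$ is sufficiently large. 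If $\norm[2]{\vec{x}} \geq \gamma\tf(T)^4/C$ (where $C$ is the constant from \Cref{cor:inf-B-perturb}), event $E_2$ yields $\norm[1]{\lclock{\mblock{B}{j}\vec{x} + \vec{y}}{j - 1}} \geq C\norm[2]{\vec{x}}/\tf(T)^4 \geq \gamma$, so the only nontrivial case is $\norm[2]{\vec{x}} \leq R \defeq \gamma\tf(T)^4/C$.

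For this bounded regime I employ an $\epsilon$-net argument. Only the trailing block of $\vec{y}$ matters, so I pass to $U \defeq \inb{\lclock{\vec{y}}{j - 1} \st \vec{y} \in V}$, a subspace of dimension at most $2(2^j - 1)$. I take a $\delta_x$-net $S_x$ in $\ell_2$ for the radius-$R$ ball in $\R^{2^{j-1}}$ and a $\delta_y$-net $S_y$ in $\ell_1$ for the unit sphere of $(U, \norm[1]{\cdot})$, setting $\delta_x \defeq (1-\gamma)\sqrt{\mu}\tf(T)^4/1024$ and $\delta_y \defeq (1-\gamma)/4$. By \Cref{fct:epsilon-nets}, the combined net has size $\exp\inp{O_{\gamma,\mu}(2^j)}$, and on $E_1$ the Lipschitz estimate $\norm[2 \rightarrow 1]{\lclock{\mblock{B}{j}}{j - 1}} \cdot \delta_x + \delta_y \leq (1-\gamma)/2$ controls the net-to-full approximation error.

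It remains to show that for each fixed net point $(\vec{x}_0, \vec{y}_0)$, the quantity $f \defeq \norm[1]{\lclock{\mblock{B}{j}\vec{x}_0}{j - 1} + \vec{y}_0}$ satisfies $f \geq (1+\gamma)/2$ with probability at least $1 - \exp\inp{-\Omega_{\gamma,\mu}(k\cdot 2^j)}$. Writing $f = \sum_{(i,l)}\abs{y_{0,(i,l)} + Z_{(i,l)}}$ with independent mean-zero Gaussians $Z_{(i,l)}$ of variances $\sigma_{(i,l)}^2(\vec{x}_0)$ as in \cref{eq:57}, two observations drive the proof: first, termwise Jensen gives $\E{f} \geq \norm[1]{\vec{y}_0} = 1$ because $\E{\abs{y + Z}} \geq \abs{y}$ for mean-zero $Z$; second, viewed as a function of the standard Gaussians $\xi_{(i,l),j_0}$ that scale to the entries of $\mblock{B}{j}$, $f$ is $\sigma^{*}$-Lipschitz with $\sigma^{*2} = \sum_{(i,l)}\sigma_{(i,l)}^2(\vec{x}_0) \leq 2^{11}\gamma^2/(C^2\mu k\cdot 2^j)$ by the variance bound~\eqref{eq:58}, where the $\tf(T)^8$ in the denominator cancels neatly against $R^2 \leq \gamma^2 \tf(T)^8/C^2$. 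Gaussian Lipschitz concentration then gives $\Pr{f < (1+\gamma)/2} \leq 2\exp\inp{-\Omega\inp{(1-\gamma)^2 k\mu\cdot 2^j/\gamma^2}}$, and a union bound over the net of size $\exp\inp{O_{\gamma,\mu}(2^j)}$ yields the claimed bound once $k \geq c_1(\gamma, \mu)$ is taken large enough. The main obstacle is engineering this cancellation of the $\tf(T)^4$ factors so that the failure probability is genuinely $\exp\inp{-\Omega(k\cdot 2^j)}$ independent of $T$; the secondary subtlety is tracking the dependence of the implicit constants on $\gamma$, which necessarily degrades as $\gamma \to 1^-$.
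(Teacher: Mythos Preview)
Your proof is correct and follows essentially the same architecture as the paper's: normalize $\norm[1]{\lclock{\vec{y}}{j-1}} = 1$, dispose of large $\norm[2]{\vec{x}}$ via \Cref{cor:inf-B-perturb}, build $\epsilon$-nets over the remaining bounded domain in $\R^{2^{j-1}} \times U$, and control each net point by Gaussian concentration before taking a union bound. The only substantive difference is in the concentration step: the paper invokes \Cref{thm:l1-concentration}, which requires a separate lower bound on $\sum_{(i,l)}\sigma_{(i,l)}$, whereas you apply Gaussian Lipschitz concentration directly to $f$ and need only the upper bound on $\sum_{(i,l)}\sigma_{(i,l)}^2$ from \cref{eq:58}; this is marginally cleaner but amounts to unpacking the same inequality.
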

\begin{proof}
  From \Cref{thm:2-1norm} and \Cref{cor:inf-B-perturb} we have that
  for some constant $c > 0$, the following events occur with
  probability at least $1 - \exp\inp{-\Theta(k \cdot 2^j)}$ for all
  large enough constant $k$:
  \begin{enumerate}
  \item
    $\frac{\norm[1]{ \lclock{\mblock{B}{j}\vec{x} }{j - 1}} }{
      \norm[2]{\vec{x}} } \leq 256\mu^{-1/2}/\tf(T)^4$, for all $\vec{x} \neq 0$, and
    \vspace{0.2\baselineskip}
  \item
    $\norm[1]{\lclock{\mblock{B}{j}\vec{x} + \vec{y}}{j - 1}} \geq
    c\norm[2]{\vec{x}}/\tf(T)^4$ for all
    $\vec{x} \in \R^{2^{j-1}}, \vec{y} \in V$.
  \end{enumerate}
  We assume henceforth that both the above events occur.  In
  particular, we have
  \begin{flexeq}
    \label{eq:19}
    \norm[1]{\lclock{\vec{y}}{j-1}} = 1, \quad
    \norm[2]{\vec{x}} \geq \frac{\gamma}{c}\cdot \tf(T)^4
    \notbool{j2col}{\qquad}{\;}
    \eqbreak \implies
    \notbool{j2col}{\qquad}{\;}
    \norm[1]{\lclock{\mblock{B}{j}\vec{x} + \vec{y}}{j - 1}}
    \geq \gamma \norm[1]{\lclock{\vec{y}}{j-1}}.
  \end{flexeq}
  Let $U$ be the vector space
  $\inb{\lclock{\vec{z}}{j - 1} \st \vec{z} \in V}$.  Now, let $N_{z}$
  be an $\epsilon$-net in $\ell_1$ for the set
  $\inb{\vec{z} \in U \st \norm[1]{\vec{z}} = 1}$ for
  $\epsilon = (1-\gamma)/(1 + 257\mu^{-1/2})$, and $N_{x}$ and $\epsilon_1$-net in
  $\ell_2$ for the $\ell_2$-ball of radius
  $\frac{\gamma}{c} \cdot \tf(T)^4$ in $\R^{2^{j-1} - 1}$ for
  $\epsilon_1 = \epsilon \cdot \tf(T)^4$.  $N_{z}$ and $N_{x}$ can be
  chosen so that
  $\abs{N_{z}}\cdot\abs{N_{x}} \leq \exp\inp{c'\cdot2^{j}}$
  where $c' = c'(\gamma) > 0$.  Let
  $\gamma' = \gamma + (1 + 256\mu^{-1/2})\epsilon < 1$.  We now have, for any
  $\vec{z} \in N_{z}$ and $\vec{x} \in N_{x}$,
  \begin{equation}
    \label{eq:20}
    \norm[1]{\lclock{\mblock{B}{j}\vec{x}  + \vec{z}}{j - 1}} =
    \sum_{i = 1}^{2^{j-2}} \sum_{l=1}^{k} \abs{X_{i,l} + z_{i, l}},
  \end{equation}
  where, as before, $X_{i,l}$ are independent mean zero normal
  variables with variances $\sigma_{i, l}$ as in \cref{eq:57}.
  In preparation to apply \Cref{thm:l1-concentration}, we now estimate
  \begin{align*}
    \sum_{i=1}^{2^{j-2}}\sum_{\ell = 1}^k  \sigma_{i,l}
    &\geq \frac{1}{\tf(T)^4} \sum_{i=1}^{2^{j-2}}
      \sqrt{
      \sum_{s = 2^{j-1}}^{2^j - 1}
      \frac{x_s^2}{s^\mu i^{1-\mu} (s - i + 1)}
      }\\
    &\geq \frac{1}{\norm[2]{\vec{x}}\tf(T)^4} \sum_{s = 2^{j-1}}^{2^j - 1} \frac{x_s^2}{s^{\mu/2}}
      \sum_{i=1}^{2^{j-2}}  \frac{1}{i^{(1-\mu)/2}\sqrt{s - i + 1}}\\
    &\geq \frac{\norm[2]{\vec{x}}}{\tf(T)^4} 2^{-j(1+\mu)/2}
      \sum_{i=1}^{2^{j-2}} \frac{1}{i^{(1-\mu)/2}}\\
    &\geq c_0 \text{ for some fixed constant $c_0(\gamma, \mu)$}.
  \end{align*}
  Here, the second inequality uses the concavity of the square root
  function, the last that $\vec{x} \in N_x$ so that
  $\norm[2]{\vec{x}} = \gamma\tf(T)^4/c$, and the rest are elementary
  estimates.  Now, using the upper bound on $\sum \sigma_{i,l}^2$
  obtained in \cref{eq:58} (while remembering that the vector
  $\vec{x}$ in that calculation needs to be scaled to have length
  $\gamma\tf(T)^4/c$ instead of $1$), we can apply
  \Cref{thm:l1-concentration} to get that for some constant
  $c'' = c''(\gamma) > 0$,
  \begin{flexeq*}
    \Pr{\norm[1]{\lclock{\mblock{B}{j}\vec{x}}{j - 1}  + \vec{z}} <
      \gamma'\norm[1]{z}}
    \eqbreak \leq \exp\inp{-c''k \cdot 2^{j}}.
  \end{flexeq*}
  Taking a union bound over the product $N_{z} \times N_{x}$ of the
  two nets, using \cref{eq:19} and recalling that
  $\gamma' = \gamma + (1 + 256\mu^{-1/2})\epsilon$ and that $N_{z}$
  and $N_{x}$ are $\epsilon$ and $\epsilon\cdot \tf(T)^4$ nets
  respectively, we deduce that for some constant $D = D(\gamma) > 0$
  \begin{flexeq*}
    \mathbb{P}\left[
      \exists \vec{x} \in R^{2^{j-1}}, \vec{z} \in U
      \text{ s.t. } \norm[1]{\vec{z}} = 1
      \text{ and } \right.
    \eqbreak \left.
      \norm[1]{\lclock{\mblock{B}{j}\vec{x}}{j -1} + \vec{z}} < \gamma
    \right]
    \eqbreak \leq \exp\inp{-Dk\cdot 2^j}
  \end{flexeq*}
  when $k \geq c_1$ for $c_1$ large enough.  The result now follows.
\end{proof}
Combining the results of
\Cref{cor:inf-B-perturb,lem:perturb-preserve}, we get
\begin{lemma}
  \label{lem:max-norm}
  There exists a positive constant $C$ such that for any
  $\gamma \in (0, 1)$ there exist positive constants
  $c_1 = c_1(\gamma), c_2 = c_2(\gamma)$ such that for any integers
  $j \geq 2$ and any $k \geq c_1$, the following is true.  Let $V$ be
  an arbitrary subspace of dimension at most $2^{j - 1}-1$.  Then,
  \begin{flexeq*}
    \mathbb{P}\left[
      \exists \vec{x} \in \R^{2^{j-1}}, y \in V \text{ s. t. }
      \norm[1]{\lclock{\mblock{B}{j}\vec{x} + \vec{y}}{j - 1}} \right.
    \eqbreak < \left.
      \max\inb{C\norm[2]{\vec{x}}/\tf(T)^4,
        \gamma\norm[1]{\lclock{\vec{y}}{j-1}}}
    \right] \\
    \leq \exp\inp{-c_2k\cdot 2^{j}}.
  \end{flexeq*}
\end{lemma}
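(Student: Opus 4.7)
\textbf{Proof proposal for \Cref{lem:max-norm}.} My plan is to deduce the lemma directly from \Cref{cor:inf-B-perturb} and \Cref{lem:perturb-preserve} by a union bound, so essentially no new technical work is required. Note that both of the prior results are stated for subspaces of dimension at most $2(2^{j}-1)$, which is larger than the dimension bound $2^{j-1}-1$ imposed on $V$ here; hence they apply directly to every $V$ that appears in the statement of \Cref{lem:max-norm}.

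The first step is to set the constant $C$ in the statement of the lemma equal to the constant $C$ from \Cref{cor:inf-B-perturb}. Given $\gamma \in (0, 1)$, I will choose $c_1(\gamma)$ to be the maximum of the constant $C_1$ from \Cref{cor:inf-B-perturb} and the constant $c_1(\gamma)$ from \Cref{lem:perturb-preserve}, so that both prior results may be invoked for any $k \geq c_1$. Let $E_1$ be the bad event of \Cref{cor:inf-B-perturb} (existence of $\vec{x},\vec{y}$ witnessing the violation of the lower bound $C\norm[2]{\vec{x}}/\tf(T)^4$), and let $E_2$ be the bad event of \Cref{lem:perturb-preserve} (existence of $\vec{x},\vec{y}$ witnessing a violation of the lower bound $\gamma\norm[1]{\lclock{\vec{y}}{j-1}}$).

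On the complement of $E_1 \cup E_2$, for every $\vec{x} \in \R^{2^{j-1}}$ and every $\vec{y} \in V$ we simultaneously have
\begin{flexeq*}
\norm[1]{\lclock{\mblock{B}{j}\vec{x} + \vec{y}}{j - 1}} \geq C\norm[2]{\vec{x}}/\tf(T)^4
\eqbreak \text{and} \quad
\norm[1]{\lclock{\mblock{B}{j}\vec{x} + \vec{y}}{j - 1}} \geq \gamma\norm[1]{\lclock{\vec{y}}{j-1}},
\end{flexeq*}
so the left-hand side is at least $\max\{C\norm[2]{\vec{x}}/\tf(T)^4, \gamma\norm[1]{\lclock{\vec{y}}{j-1}}\}$, which is exactly what the lemma demands. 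By a union bound, the probability of $E_1 \cup E_2$ is at most $\exp(-C_2 k \cdot 2^{j}) + \exp(-c_2'(\gamma) k\cdot 2^{j})$, where $c_2'(\gamma)$ is the constant from \Cref{lem:perturb-preserve}. Choosing $c_2(\gamma) = \tfrac{1}{2}\min(C_2, c_2'(\gamma))$ and absorbing the factor of $2$ into the exponent (which is valid for all sufficiently large $k$, which we can ensure by enlarging $c_1(\gamma)$ if necessary), this sum is bounded by $\exp(-c_2 k \cdot 2^{j})$, yielding the stated bound.

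There is no serious obstacle: the heavy lifting has already been done in \Cref{cor:inf-B-perturb} and \Cref{lem:perturb-preserve}, which package, respectively, the ``the $\vec{x}$ coordinate is not destroyed by any small-dimensional perturbation'' and ``the $\vec{y}$ perturbation itself is not erased by the fresh Gaussian block'' phenomena. The only subtlety to keep track of is that the two lemmas involve \emph{the same} random matrix $\mblock{B}{j}$ (so that the union bound applies to the joint event on one probability space) and that both permit subspaces of dimension $2(2^{j}-1) \geq 2^{j-1}-1$; the former is immediate from the setup and the latter I have verified above.
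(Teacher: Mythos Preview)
Your proposal is correct and takes exactly the same approach as the paper: the paper gives no proof beyond the sentence ``Combining the results of \Cref{cor:inf-B-perturb,lem:perturb-preserve}, we get'', and your argument spells out precisely that combination via a union bound. The observations you make about the dimension bound and about absorbing the factor of $2$ into the exponent are the only details one needs to check.
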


We are now ready to prove the main theorem of this section.
\begin{theorem}
  \label{thm:main-lower-b}
  For any $\kappa \in (0, 1)$ there exist positive constant $c_0, c_1$
  and $c_2$ such that the following is true.  Let $T$ be any positive
  integer and set $\tau \defeq \tf(T)$.  For any rate parameter
  $ k \geq c_1$ and $j_0 \in [1, \tau]$,
  \begin{flexeq*}
    \mathbb{P}_{B \sim \mathcal{B}(T, k)}
    \left[
      \exists \vec{x} \in \R^T \text{ s. t.}
      \norm[1]{\lclock{B\vec{x}}{\tau}} \right.
    \eqbreak < \left. c_0\kappa^{\tau-j_0}\norm[2]{
      \cblock{\vec{x}}{\tau, j_0}}/\tau^4
    \right]
    \eqbreak \leq \exp\inp{-c_2k \cdot 2^{j_0-1}}.
  \end{flexeq*}
\end{theorem}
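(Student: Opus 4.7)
The plan is an induction on the block level $j$ from $j_0$ up to $\tau$, using \Cref{lem:max-norm} as the recursive engine. Set $W_j \defeq \sum_{i = 1}^j \mblock{B}{i}\vblock{\vec{x}}{i} = \mcblock{B}{j}\cblock{\vec{x}}{j}$ and $P_j \defeq \norm[1]{\lclock{W_j}{j-1}}$. Since $W_\tau = B\vec{x}$ (as $\tau = \tf(T)$) and $\lclock{\cdot}{\tau - 1}$ selects a subset of the rows selected by $\lclock{\cdot}{\tau}$, it suffices to show that, with the claimed probability, $P_\tau \geq c_0 \kappa^{\tau - j_0} \norm[2]{\cblock{\vec{x}}{\tau, j_0}}/\tau^4$ simultaneously for all $\vec{x} \in \R^T$.

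At each level $j \in \inb{j_0, \ldots, \tau}$, apply \Cref{lem:max-norm} to $\mblock{B}{j}$ with the noise vector $\vec{y} = W_{j-1}$, which lies in $V_j \defeq \ran{\mcblock{B}{j-1}}$, a subspace of dimension at most $2^{j-1} - 1$. Since $V_j$ is determined only by columns of $B$ disjoint from those of $\mblock{B}{j}$, conditioning on $\mcblock{B}{j-1}$ fixes $V_j$ while preserving the independent Gaussian distribution of $\mblock{B}{j}$, so the lemma applies verbatim. A union bound over levels controls the total failure probability by $\sum_{j \geq j_0} \exp(-c_2 k \cdot 2^j) = O(\exp(-c_2 k \cdot 2^{j_0}))$, matching the target. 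On this good event, writing $A_j \defeq C \norm[2]{\vblock{\vec{x}}{j}}/\tau^4$, the base case at $j = j_0$ yields $P_{j_0} \geq A_{j_0}$ and the inductive step at $j > j_0$ gives
\begin{equation*}
  P_j \geq \max\inb{A_j, \; \gamma \norm[1]{\lclock{W_{j-1}}{j-1}}} \geq \max\inb{A_j, \; \gamma P_{j-1}},
\end{equation*}
where the second inequality uses that the rows selected by $\lclock{\cdot}{j-1}$ contain those selected by $\lclock{\cdot}{j-2}$.

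To convert these per-level max inequalities into a bound against $\norm[2]{\cblock{\vec{x}}{\tau, j_0}}^2 = \sum_{i = j_0}^\tau \norm[2]{\vblock{\vec{x}}{i}}^2$, invoke the elementary $\max\inb{a, b}^2 \geq \alpha a^2 + (1 - \alpha) b^2$ (for $\alpha \in (0,1)$, $a, b \geq 0$) to obtain $P_j^2 \geq \alpha A_j^2 + \beta P_{j-1}^2$, where $\beta \defeq (1 - \alpha) \gamma^2$. Given the target $\kappa$, first choose $\gamma \in (\kappa, 1)$ with $\gamma^2 > \kappa^2$, and then $\alpha > 0$ small enough that $\beta \geq \kappa^2$; this fixes the constants $c_1(\gamma), c_2(\gamma)$ of \Cref{lem:max-norm}. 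Unrolling from $P_{j_0}^2 \geq A_{j_0}^2$ yields
\begin{equation*}
  P_\tau^2 \geq \alpha \sum_{i = j_0}^\tau \beta^{\tau - i} A_i^2 \geq \alpha \beta^{\tau - j_0} \sum_{i=j_0}^\tau A_i^2 \geq \alpha \kappa^{2(\tau - j_0)}\frac{C^2}{\tau^8}\norm[2]{\cblock{\vec{x}}{\tau, j_0}}^2,
\end{equation*}
using $\beta^{\tau - i} \geq \beta^{\tau - j_0}$ for $i \geq j_0$ (as $\beta \in (0, 1]$). Taking square roots establishes the theorem with $c_0 = \sqrt{\alpha}\, C$. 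The main technical subtlety is handling the random subspace $V_j$: the independence of disjoint column-blocks of $B$ is what permits invoking \Cref{lem:max-norm} as if $V_j$ were given externally, which is what makes the induction close despite $V_j$ being $B$-dependent.
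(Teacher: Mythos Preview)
Your argument is correct and follows the same inductive skeleton as the paper: induct on the block level $j$ from $j_0$ to $\tau$, at each step invoking \Cref{lem:max-norm} with the subspace $V_j = \ran{\mcblock{B}{j-1}}$ (using column-independence to condition), and sum the failure probabilities geometrically. The only real difference is the algebra you use to pass from the per-level $\max$ inequality to a bound against $\norm[2]{\cblock{\vec{x}}{\tau,j_0}}^2 = \sum_i \norm[2]{\vblock{\vec{x}}{i}}^2$. The paper phrases its inductive hypothesis directly in terms of $\norm[2]{\cblock{\vec{x}}{j,j_0}}$, applies the identity $\min_{0\le\eta\le 1}\max\{a\eta,b(1-\eta)\}=ab/(a+b)$, and calibrates $c_0$ by choosing $\gamma$ with $\kappa=\gamma/\sqrt{1+(c_0/C)^2}$. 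You instead square, use $\max\{a,b\}^2 \ge \alpha a^2 + (1-\alpha)b^2$, and unroll the resulting linear recurrence in the $P_j^2$. Both arrive at the same conclusion; your route avoids the implicit equation for $c_0$ at the cost of carrying an extra parameter $\alpha$.

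One small gap to patch: \Cref{lem:max-norm} is stated only for $j \ge 2$, so when $j_0 = 1$ your base quantity $P_{j_0} = \norm[1]{\lclock{W_1}{0}}$ is not defined and the lemma does not apply. The paper handles this case with a separate one-line remark (a direct calculation as in \Cref{lem:inf-B}, since $\mblock{B}{1}$ is a single column and there is no perturbing subspace); you should do the same.
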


\begin{proof}[Proof of \Cref{thm:main-lower-b}]
  Let $c_0(\kappa)$ be a fixed constant to be determined later.  For
  $j_0 \leq j \leq \tau$, let $\mathcal{E}_j$ be the event that
  \begin{flexeq*}
    \norm[1]{\lclock{\mcblock{B}{j}\vec{x}}{j}}
    \geq c_0\kappa^{j-j_0} \norm[2]{\cblock{\vec{x}}{j, j_0}}/\tau^4
    \eqbreak\quad \forall \vec{x} \in \R^{2^{j} - 1}.
  \end{flexeq*}
  \Cref{cor:inf-B-perturb} (or, in the case $j_0 = 1$, a direct
  calculation identical to that in \Cref{lem:inf-B}) shows that if
  $c_0$ is a small enough positive constant, there exist positive
  constants $c', c''$ (independent of $j_0)$ such that
  $\Pr{\lnot\mathcal{E}_{j_0}} \leq \exp\inp{-c'k2^{j_0 -1}}$ for all
  large enough constant $k$ (to show this, one chooses the vector space
  $V$ in the statement of \Cref{cor:inf-B-perturb} to be
  $\ran{\mcblock{B}{j-1}}$).  Now, let $C > 0$ be as in
  \Cref{lem:max-norm}.  We choose $c_0$ to be small enough so that
  there exists $\gamma \in \inp{0, 1}$ satisfying
  \begin{equation}
    \label{eq:23}
    \kappa = \frac{\gamma}{\sqrt{1 + (c_0/C)^2}}.
  \end{equation}
  The claim of the theorem then follows if there exist positive
  constants $c_1, c_2$ such that for $k \geq c_1$,
  $\Pr{\lnot\mathcal{E}_\tau} \leq \exp\inp{-c_2k2^{j_0 - 1}}$.  We
  have already established this above for $j = j_0$.  We will show now
  that there exists a constant $c_2 > 0$ such that for large enough
  constant $k$ and $j \geq 2$,
  \begin{equation}
    \label{eq:21}
    \Pr{\lnot\mathcal{E}_j \vert \mathcal{E}_{j-1}}
    \leq \exp\inp{-c_2 k\cdot 2^{j-1}}.
  \end{equation}
  This will establish the claim if $c_1$ is chosen large enough that
  $\exp\inp{- c_2c_1} \leq \frac{1}{2}$, since in that case $k \geq
  c_1$ implies
  \begin{flexeq*}
    \Pr{\lnot{\mathcal{E}_\tau}} \leq
    \Pr{\lnot{\mathcal{E}_{j_0}}}
    + \sum_{j = j_0 + 1}^\tau \Pr{\lnot\mathcal{E}_j \vert \mathcal{E}_{j-1}}
    \eqbreak \leq \sum_{j=j_0}^\tau\exp\inp{-c_2 k\cdot 2^{j-1}}
    \leq 2 \exp\inp{- c_2 k 2^{j_0 - 1}}.
  \end{flexeq*}
  We now establish \cref{eq:21}.  Fix $j \geq j_0 + 1$, and assume
  $\mathcal{E}_{j-1}$ occurs. Note that
  \[
    \mcblock{B}{j}\vec{x} =
    \mblock{B}{j}\vblock{\vec{x}}{j}
    + \mcblock{B}{j - 1}\cblock{\vec{x}}{j-1},
  \]
  so that we can apply \Cref{lem:max-norm} with
  $V = \ran{\mcblock{B}{j-1}}$ and $\gamma$ as chosen above to find
  $c_1, c_2 > 0$ (not depending upon $j$) such that when
  $k \geq c_1$, it holds with probability at least
  $1 - \exp\inp{-c_2k\cdot 2^{j-1}}$ that
  \begin{flexeq}
    \label{eq:22}
    \frac{
      \norm[1]{\lclock{\mcblock{B}{j}\vec{x}}{j}}^2
    }{
      \norm[2]{\cblock{\vec{x}}{j, j_0}}^2
    }
    \geq \frac{1}{\tau^{8}}
    \max\Bigg\{
      C^2 \frac{\norm[2]{\vblock{\vec{x}}{j}}^2
      }{\norm[2]{\cblock{\vec{x}}{j, j_0}}^2},
    \eqbreak \frac{c_0^2\gamma^2\kappa^{2(j-j_0)}}{\kappa^2}
      \frac{
        \norm[2]{\cblock{\vec{x}}{j - 1, j_0}}^2
      }{\norm[2]{\cblock{\vec{x}}{j, j_0}}^2}
    \Bigg\}
    \eqbreak \quad \forall \vec{x} \neq \vec{0} \in \R^{2^j - 1}.
  \end{flexeq}
  Since
  \[
    \min_{0 \leq \eta \leq 1}\max\inb{a\eta, b(1-\eta)}
    = \frac{ab}{a + b},
    \text{ for all $a, b > 0$,}
  \]
  the guarantee in \cref{eq:22} implies that for all $\vec{x} \neq
  \vec{0}$ in $\R^{2^j - 1}$.
  \begin{flexeq*}
    \frac{
      \norm[1]{\lclock{\mcblock{B}{j}\vec{x}}{j}}
    }{
      \norm[2]{\cblock{\vec{x}}{j, j_0}}
    }
    \geq \frac{c_0\kappa^{j - 1 - j_0}}{\tau^4}
    \cdot\frac{\gamma}{\sqrt{1 + (\gamma c_0
        \kappa^{j- j_0 - 1}/C)^2}} \eqbreak\geq \frac{c_0\kappa^{j-j_0}}{\tau^4},
  \end{flexeq*}
  where the last inequality uses \cref{eq:23} and the fact that
  $\gamma, \kappa \leq 1$.  We thus have
  $\Pr{\mathcal{E}_j\vert \mathcal{E}_{j-1}} \geq 1 -
  \exp\inp{-c_2k\cdot 2^{j-1}}$, as required.
\end{proof}

We now use the information about $B$ derived above to show that $M$
comes very close to satisfying the conditions asked of an encoding
matrix in \Cref{thm:encoding-matrix}.  In particular,
\Cref{cor:invert-M} implies that $M$ satisfies these constraints at
any given fixed time $T$.  \Cref{cor:invert-all-sub-matrices-M} then
shows that encoding using $M$ actually satisfies, at \emph{each} time
$t$ up to the total time $T$ for which communication lasts, a slightly
weaker set of conditions which allow for the decoding of all but a
$\poly{\log t}$ sized suffix of the signal. Finally, we obtain the full statement of
 \Cref{thm:encoding-matrix} in \Cref{sec:code-matrix} by slightly modifying $M$ to handle the suffix differently.

\begin{corollary}[\textbf{Invertibility of $M$}]
  \label{cor:invert-M}
  For any $\delta \in (0, \frac{1}{2})$, there exist constants
  $c_0, c_1, c_2$ such that the following is true.  Let $T$ be a fixed
  integer, and let $\tau = \tf(T)$.  Let $k \geq c_1$ be a rate
  parameter.  Then, for $M$ sampled according to $\mathcal{A}_{T, k}$,
  we have
  \begin{flexeq*}
    \mathbb{P}\left[
      \exists \vec{x} \in \R^{T} \text{ s.t. }
      \norm[\dagger_\mu(kT)]{M\vec{x}} \right.
    \eqbreak < \left.
      c_0 2^{\delta j_0}2^{\tau(1/2 - \delta)}
      \norm[\star_\mu(T)]{\cblock{\vec{x}}{\tau, j_0}}/\tau^4
    \right]
    \eqbreak \leq \exp\inp{-c_2k\cdot 2^{j_0-1}}
  \end{flexeq*}
  for all $1 \leq j_0 \leq \tau$.  Here, for the purposes of computing
  the $\norm[\star_\mu(T)]{\cdot}$-norm, $\cblock{\vec{x}}{\tau, j_0}$ is seen
  as a vector in $\R^T$ whose last $2^{j_0 - 1} - 1$ coordinates are
  $0$.
\end{corollary}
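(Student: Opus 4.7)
The plan is to reduce the corollary to \Cref{thm:main-lower-b} by means of the change of variables that relates $M$ and $B$, which was already set up in \cref{eq:24} and \cref{eq:def-B-from-M}. The only real choice to be made is how to pick the contraction parameter $\kappa$ in \Cref{thm:main-lower-b}, and this choice will convert the $\kappa^{\tau-j_0}$ factor into the desired $2^{\delta j_0}2^{\tau(1/2-\delta)}$ factor.

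First, I would record the dictionary between $M$ and $B$ explicitly at the level of norms. Let $D_1$ be the $kT \times kT$ diagonal matrix whose entry indexed by $(i,l)$ is $(T-i+1)^{-(1-\mu)/2}$, and let $D_2$ be the $T \times T$ diagonal matrix with $(D_2)_{jj} = (T-j+1)^{-\mu/2}$. The definition \cref{eq:def-B-from-M} then simply says $B = D_1 M D_2$. Setting $\vec{x'} = D_2^{-1}\vec{x}$ gives $B\vec{x'} = D_1 M\vec{x}$, and a direct comparison of the weights entering $\norm[\dagger_\mu(kT)]{\cdot}$ and $\norm[\star_\mu(T)]{\cdot}$ with those defining $D_1, D_2$ shows, up to a constant depending only on $k$,
\[
  \norm[1]{B\vec{x'}} \asymp T^{-(1-\mu)/2}\, \norm[\dagger_\mu(kT)]{M\vec{x}}
  \qquad\text{and}\qquad
  \norm[2]{\cblock{\vec{x'}}{\tau, j_0}} = T^{\mu/2}\,\norm[\star_\mu(T)]{\cblock{\vec{x}}{\tau, j_0}},
\]
where the first comparison uses that each segment of $k$ rows is weighted by a constant-factor analogue of $(T-i+1)^{-(1-\mu)/2}$, and the second is a direct computation since $D_2$ only rescales coordinates that are themselves preserved or zeroed out by taking $\cblock{\vec{\cdot}}{\tau, j_0}$.

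Next, I would apply \Cref{thm:main-lower-b} to $B$ with parameter $\kappa \defeq 2^{-\delta}$, which is in $(0,1)$ since $\delta \in (0, 1/2)$. This yields constants $c_0', c_1, c_2 > 0$ (depending on $\delta$) such that for $k \geq c_1$, with probability at least $1 - \exp(-c_2 k \cdot 2^{j_0-1})$,
\[
  \norm[1]{\lclock{B\vec{x'}}{\tau}} \;\geq\; c_0'\, 2^{-\delta(\tau - j_0)} \norm[2]{\cblock{\vec{x'}}{\tau, j_0}}/\tau^4
  \qquad \text{for all } \vec{x'} \in \R^T.
\]
Since $\norm[1]{B\vec{x'}} \geq \norm[1]{\lclock{B\vec{x'}}{\tau}}$, combining this with the two norm identifications above produces, up to absorbing constants into $c_0$,
\[
  \norm[\dagger_\mu(kT)]{M\vec{x}} \;\geq\; c_0\, T^{(1-\mu)/2} \cdot 2^{-\delta(\tau-j_0)} \cdot T^{\mu/2} \cdot \norm[\star_\mu(T)]{\cblock{\vec{x}}{\tau, j_0}}/\tau^4.
\]
Using $T^{1/2} \asymp 2^{\tau/2}$, the prefactor collapses to $c_0 \, 2^{\delta j_0}\, 2^{\tau(1/2-\delta)}/\tau^4$, which is precisely the bound asserted in the corollary.

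There is no real obstacle beyond careful bookkeeping: the substantive work (the inductive $\epsilon$-net argument across blocks) is already packaged inside \Cref{thm:main-lower-b}. The only points demanding mild care are (i) checking that replacing the per-row weights of $\norm[\dagger_\mu(kT)]{\cdot}$ (which depend on the global row index $k(i-1)+l$) with the per-segment weights used to define $B$ loses only a factor depending on $k$ and $\mu$, and (ii) verifying that the choice $\kappa = 2^{-\delta}$ is compatible with the hypothesis $\kappa \in (0,1)$ of \Cref{thm:main-lower-b}, and that the probability bound $\exp(-c_2 k \cdot 2^{j_0-1})$ is transported unchanged through the deterministic rescaling $B = D_1 M D_2$. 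Both are routine.
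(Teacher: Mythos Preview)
Your proposal is correct and follows essentially the same approach as the paper: reduce to \Cref{thm:main-lower-b} via the rescaling $B = D_1 M D_2$ of \cref{eq:def-B-from-M}, choose $\kappa = 2^{-\delta}$, and combine the resulting $\kappa^{\tau-j_0}$ factor with the $\sqrt{T}\asymp 2^{\tau/2}$ coming from the norm identification. The paper packages the norm comparison directly via \cref{eq:24} (obtaining the single factor $2^{\tau/2-1}$), whereas you split it as $T^{(1-\mu)/2}\cdot T^{\mu/2}$ before recombining; this is only a cosmetic difference, and your remarks about the per-row versus per-segment weights losing only a $k,\mu$-dependent constant are exactly the bookkeeping the paper suppresses.
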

\begin{proof}
  Using the same calculation as in \cref{eq:24}, we see that if $M \sim \mathcal{A}_{T, k}$, and
  $B$ is constructed from $M$ as defined in \cref{eq:def-B-from-M},
  then $B \sim \mathcal{B}_{T, k}$ and
  \begin{equation}
    \label{eq:33}
    \inf \frac{\norm[\dagger_\mu(kT)]{M\vec{x}}}{\norm[\star_\mu(T)]{\cblock{\vec{x}}{\tau, j_0}}}
    \geq 2^{\tau/2 - 1} \inf
    \frac{\norm[1]{B\vec{y}}}{\norm[2]{\cblock{\vec{y}}{\tau, j_0}}}.
  \end{equation}
  Given $\delta \in \inp{0, \frac{1}{2}}$, we choose
  $\kappa = 2^{-\delta}$.  After applying \Cref{thm:main-lower-b} with
  this value of $\kappa$ and using \cref{eq:33}, we then find positive
  constants $c_0, c_1, c_2$ (depending upon $\kappa$) such that when
  $k \geq c_1$, the matrix $M$ satisfies
  \begin{flexeq*}
    \norm[\dagger_\mu(kT)]{M\vec{x}}
    \geq c_0 2^{\delta j_0 + \tau(1/2 - \delta)}
    \norm[\star_\mu(T)]{\cblock{\vec{x}}{t, j_0}}/\tau^4\; \forall \vec{x}
  \end{flexeq*}
  with probability at least $1 - \exp\inp{-c_2k\cdot 2^{j_0 - 1}}$.
\end{proof}

\begin{corollary}[\textbf{Invertibility of principal submatrices of
    $M$}]
  \label{cor:invert-all-sub-matrices-M}
  For any $\delta \in (0, \frac{1}{2})$, there exist constants
  $c_0, c_1$ such that the following is true.  Let $T$ be a positive
  integer, and set $\tau \defeq \tf(T)$,
  $j_0(n) \defeq \ceil{\frac{4 \tf(\tf(n))}{\delta}} = \Theta(\log
  \log n)$.  Then, for any rate parameter $k \geq c_1$, there exists a
  \lot{k}\ matrix $M$ satisfying the following conditions. (Here, for
  $1 \leq n \leq T$ and a \lot{k}\ matrix $A$, $A_n$ denotes the
  \lot{k}\ matrix obtained by taking the first $n$ columns of $A$ and
  the first $kn$ rows).
  \begin{enumerate}
  \item \label{item:M-operator-norm} \textbf{\textup{Submatrices of
        $M$ have small operator norm}}:
    $\norm[2 \rightarrow 2]{M_n} \leq 1$ for $1 \leq n \leq T$.
  \item \label{item:M-invert} \textbf{\textup{Submatrices of $M$ are
        robustly invertible with respect to the past}}: for
    $1 \leq n \leq T$,
    \begin{displaymath}
      \norm[\dagger_\mu(kn)]{M_n\vec{x}} \geq {c_0 n^{(1/2 - \delta)}}
      \norm[\star_\mu(n)]{\cblock{\vec{x}}{\tf(n), j_0(n)}}.
    \end{displaymath}
    for all $\vec{x} \in \R^n$.
  \end{enumerate}
\end{corollary}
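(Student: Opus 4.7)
The plan is to apply the probabilistic method to a single sample $M \sim \mathcal{A}_{T,k}$, combining \Cref{lem:op-norm-of-M} for condition (\ref{item:M-operator-norm}) with a union bound of \Cref{cor:invert-M} applied to each $M_n$ for condition (\ref{item:M-invert}). The essential enabler for the latter is the time-invariance of $\mathcal{A}_{T,k}$ remarked on in \Cref{sec:distr-mathc-k}: for every $n \leq T$, the leading principal submatrix $M_n$ is itself distributed as $\mathcal{A}_{n,k}$.

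For condition (\ref{item:M-operator-norm}), \Cref{lem:op-norm-of-M} gives $\norm[2\rightarrow 2]{M} \leq 1$ with probability at least $1 - \gamma$ for $k$ large. For any $\vec{x} \in \R^n$, embedding as $\tilde{\vec{x}} = (\vec{x}, 0, \ldots, 0) \in \R^T$ yields $\norm[2]{M_n \vec{x}} \leq \norm[2]{M\tilde{\vec{x}}} \leq \norm[2]{\vec{x}}$, since $M_n \vec{x}$ is exactly the first $kn$ coordinates of $M\tilde{\vec{x}}$ and the remaining $k(T-n)$ coordinates (nonzero in general, because the leftmost $n$ columns of $M$ extend down past row $kn$) only add to the $\ell_2$ norm. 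Hence this single event handles all $n$ simultaneously.

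For condition (\ref{item:M-invert}), applying \Cref{cor:invert-M} to $M_n \sim \mathcal{A}_{n,k}$ with the given $\delta$ and parameter $j_0(n)$ yields, with failure probability at most $\exp(-c_2 k \cdot 2^{j_0(n)-1})$, a lower bound on $\norm[\dagger_\mu(kn)]{M_n \vec{x}}$ of the form $c_0 \, 2^{\delta j_0(n)} \, 2^{\tf(n)(1/2 - \delta)}/\tf(n)^4$ times $\norm[\star_\mu(n)]{\cblock{\vec{x}}{\tf(n), j_0(n)}}$. The factor $2^{\tf(n)(1/2-\delta)}$ already dominates $n^{1/2 - \delta}$, and the choice $j_0(n) = \lceil 4\tf(\tf(n))/\delta\rceil$ is calibrated so that $2^{\delta j_0(n)} \geq 2^{4\tf(\tf(n))} \geq (\tf(n)+1)^4$, absorbing the polylogarithmic denominator. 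The bound asked for in the corollary then follows.

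The main obstacle, and the reason $j_0(n)$ is pinned down as $\Theta(\log\log n)$, is balancing the union bound against the rate. With this choice, $2^{j_0(n)-1} = \Omega((\log n)^{4/\delta})$, and since $\delta < 1/2$ forces $4/\delta > 8$, the summand $\exp(-c_2 k \cdot 2^{j_0(n)-1})$ decays strictly faster than any fixed negative power of $n$. Thus $\sum_{n=1}^T \exp(-c_2 k \cdot 2^{j_0(n)-1})$ is bounded uniformly in $T$, and for $k$ exceeding a threshold $c_1 = c_1(\delta)$ this sum plus $\gamma$ is strictly less than $1$. The probabilistic method then delivers a matrix $M$ satisfying both conditions.
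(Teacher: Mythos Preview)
Your proposal is correct and follows essentially the same route as the paper: sample $M \sim \mathcal{A}_{T,k}$, invoke \Cref{lem:op-norm-of-M} for the operator-norm condition (your embedding argument making explicit what the paper leaves implicit), and combine time-invariance with \Cref{cor:invert-M} at each $n$, with the choice of $j_0(n)$ calibrated exactly as you describe so that $2^{\delta j_0(n)}$ absorbs the $\tf(n)^4$ denominator while the failure probabilities are summable. The only detail you omit that the paper mentions is that when $j_0(n) > \tf(n)$ the vector $\cblock{\vec{x}}{\tf(n), j_0(n)}$ is empty and the invertibility condition is vacuous, so \Cref{cor:invert-M} need not be invoked there.
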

\begin{proof}
  Let $M \sim \mathcal{A}_{T, k}$.  We will show that when
  $k \geq c_1$ for $c_1$ large enough, then $M$ satisfies both the
  above conditions with positive probability.  We start by noting that
  \Cref{lem:op-norm-of-M} implies that \cref{item:M-operator-norm} is
  satisfied with probability at least $\frac{1}{2}$, as long as $c_1$
  is large enough.  We now turn to \cref{item:M-invert}.

  Each sub-matrix $M_n$ of $M$ is a sample from $\mathcal{A}_{n,k}$.
  From \Cref{cor:invert-M}, we therefore find constants $c_0, c_1'$
  and $c_2'$ such that as long as $k \geq c_1'$, we have
  \begin{flexeq}
    \label{eq:35}
    \mathbb{P}\left[\exists \vec{x} \in \R^n,
      \norm[\dagger_\mu(kn)]{M_n\vec{x}} \right.
    \eqbreak < \left.
      \frac{c_02^{\delta j_0(n)}
        2^{\tf(n)(1/2 - \delta)}}{\tf(n)^4}
      \norm[\star_\mu(n)]{\cblock{\vec{x}}{\tf(n), j_0(n)}}
    \right]
    \eqbreak \leq e^{-c_2'k 2^{j_0(n)}}
    \leq e^{-c_2'k\tf(n)^{8}},
  \end{flexeq}
  where the last inequality uses the value of $j_0$. (Note that,
  strictly speaking, we can only apply \Cref{cor:invert-M} when
  $j_0(n) \leq \tf(n)$.  However, when $j _0(n) > \tf(n)$, \cref{eq:35}
  is vacuously satisfied since in that case,
  $\cblock{\vec{x}}{\tf(n), j_0(n)}$ is an empty vector for all
  $\vec{x} \in \R^n$.)  Now, when $k \geq c_1$ where $c_1$ is chosen
  to be large enough that $c_2'c_1 \geq 10$ and $c_1 \geq c_1'$, we
  can substitute the value of $j_0(n)$ in \cref{eq:35} to find that
  for all $1 \leq n \leq T$
  \begin{flexeq*}
    \mathbb{P}\left[
      \exists \vec{x} \in \R^n, \norm[\dagger_\mu(kn)]{M_n\vec{x}}
    \right.
    \eqbreak < \left.
    {c_0 2^{\tf(n)(1/2 - \delta)}}
    \norm[\star_\mu(n)]{\cblock{\vec{x}}{\tf(n), j_0(n)}}
    \right]
    \eqbreak \leq e^{-10\tf(n)^{8}}
    \leq \frac{1}{10n^2}.
  \end{flexeq*}
  Taking a union bound over $1 \leq n \leq T$ and using
  $\sum_{n \geq 1}(1/n^2) < 2$, we now see that $M$ satisfies both
  conditions with probability at least
  $\frac{1}{2} - \frac{1}{5} = \frac{3}{10}$.
\end{proof}

\section{The encoding matrix}
\label{sec:code-matrix}
\Cref{cor:invert-all-sub-matrices-M} already contains most of the
information necessary for the construction of our encoding matrix.
Indeed, the matrix $M$ guaranteed there can already decode all but the
last $\poly{\log t}$ entries at any time $t$ with the required
guarantee.  To get the final guarantee, we only need to make our
encoding systematic by including a copy of the input symbols.  More
precisely, given a \lot{k} matrix of the form guaranteed by
\Cref{cor:invert-all-sub-matrices-M}, we construct a \lot{(k+1)}
matrix $C$ which at time $t$ produces the $k$ symbols that would have
been output by $M$, followed by the current input $x_t$.  In symbols,
this means that entries of $C$ can be written as follows (we use again
the block notation for row indices of \lot{k} matrices introduced in
\Cref{sec:online-codes-low}):
\begin{equation}
  \label{eq:59}
  C_{(i, l), j} = \begin{cases}
      M_{(i. l), j} & \text { when $1 \leq l \leq k$,}\\
      1 & \text{ when $l = k + 1$, and}\\
      0 & \text{otherwise}
    \end{cases}
\end{equation}
We note the following simple consequences of this definition:
\begin{enumerate}
\item Let $y = Mx$ for $x \in \R^T$.  Then
  \begin{align}
    \notbool{j2col}{}{&}
    \norm[\dagger_\mu((k+1)T)]{Cx}
    \notbool{j2col}{}{ \nonumber \\}
    &=
      \sum_{\substack{1\leq i \leq T\\1 \leq l \leq k}}
    \abs{y_{(i,l)}} \inp{\frac{(k+1)T}{(k+1)(T - i+1) - l +
    1}}^{(1-\mu)/2}\nonumber\\
    &\qquad\qquad + \sum_{i=1}^T
      \abs{x_i} \inp{\frac{(k+1)T}{(k+1)(T - i + 1) - k}}^{(1-\mu)/2} \nonumber\\
    & \geq
      \frac{\norm[\dagger_\mu(kT)]{y}}{3}
      + \sum_{i=1}^T \abs{x_i} \inp{\frac{T}{T - i + 1}}^{(1- \mu)/2} . \label{eq:60}
  \end{align}
\item For every $\vec{x} \in \R^T$,
  \begin{equation}
    \label{eq:37}
    \norm[2]{Cx}^2 = {\norm[2]{Mx}^2 + \norm[2]{x}^2}.
  \end{equation}
\end{enumerate}
We can now prove \Cref{thm:encoding-matrix} which we restate here for
easy reference.
\begin{theorem*}[\textbf{The encoding matrix, restatement of \Cref{thm:encoding-matrix}}]
  For any $\mu \in (0, 1]$ and $\delta \in (0, \frac{1}{2})$, there
  exist constants $c, c_1$ such that the following is true.  Let
  $T \geq 3$ be any integer.  For a rate parameter $k$ satisfying
  $k \geq c_1$, there exists a matrix $\mathcal{C}$ satisfying the
  following conditions. (Here, for $1 \leq n \leq T$ and a \lot{k}\
  matrix $A$, $A_n$ denotes the leading principal sub-matrix of $A$
  consisting of its first $n$ columns and $kn$ rows).
  \begin{enumerate}
  \item %
    \textbf{\textup{Submatrices of $\C$ have small operator norm}}:
    $\norm[2 \rightarrow 2]{\C_n} \leq 1$ for $1 \leq n \leq T$.
  \item %
    \textbf{\textup{Submatrices of $\C_n$ are robustly invertible}}:
    for $1 \leq n \leq T$,
    \begin{displaymath}
      \norm[\dagger_{\mu}(kn)]{\C_n\vec{x}} \geq c n^{(1/2 - \delta)}
      \norm[\star_\mu(n)]{\vec{x}} \text{ for all $\vec{x} \in \R^n$}.
    \end{displaymath}
  \end{enumerate}
\end{theorem*}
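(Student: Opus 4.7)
The plan is to construct $\C$ by appending to each block of the matrix $M$ furnished by \Cref{cor:invert-all-sub-matrices-M} a single additional row that transmits the current input coordinate verbatim---producing exactly the systematic encoder $C$ defined in \cref{eq:59}---and then rescaling by an absolute constant to satisfy the operator-norm constraint. Concretely, invoke \Cref{cor:invert-all-sub-matrices-M} with rate parameter $k-1$ and with $\delta$ replaced by $\delta/2$ to obtain a \lot{(k-1)} matrix $M$, build $C$ from $M$ as in \cref{eq:59}, and set $\C \defeq C/\sqrt{2}$. The operator-norm requirement is then immediate from \cref{eq:37}, since $\norm[2]{C_n \vec{x}}^2 = \norm[2]{M_n \vec{x}}^2 + \norm[2]{\vec{x}}^2 \leq 2\norm[2]{\vec{x}}^2$, giving $\norm[2 \rightarrow 2]{\C_n} \leq 1$.

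The remaining task is robust invertibility. Fix $n$ and set $s \defeq 2^{j_0(n)-1} - 1$, which is $O(\poly{\log n})$ by the choice of $j_0$ in \Cref{cor:invert-all-sub-matrices-M}. Decompose $\vec{x} \in \R^n$ into its prefix $\vec{x}^{(p)} \defeq \cblock{\vec{x}}{\tf(n), j_0(n)}$ (viewed as an element of $\R^n$ by appending zeros) and the complementary suffix $\vec{x}^{(s)}$ consisting of its last $s$ coordinates. The right-hand side of \cref{eq:60} gives two separate lower bounds on $\norm[\dagger_\mu((k+1)n)]{C_n \vec{x}}$ (up to absolute constants): the first, $(1/3)\norm[\dagger_\mu(kn)]{M_n \vec{x}}$, is at least $\Omega(n^{1/2 - \delta/2})\norm[\star_\mu(n)]{\vec{x}^{(p)}}$ by \Cref{cor:invert-all-sub-matrices-M}(\ref{item:M-invert}); the second, restricted to the suffix, is at least $(n/s)^{(1-\mu)/2}\norm[1]{\vec{x}^{(s)}}$. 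Since each coordinate of $\vec{x}^{(s)}$ carries weight at most $(s/n)^\mu$ in $\norm[\star_\mu(n)]{\vec{x}^{(s)}}^2$, we have $\norm[1]{\vec{x}^{(s)}} \geq \norm[2]{\vec{x}^{(s)}} \geq (n/s)^{\mu/2}\norm[\star_\mu(n)]{\vec{x}^{(s)}}$, so the second bound becomes at least $(n/s)^{1/2}\norm[\star_\mu(n)]{\vec{x}^{(s)}}$.

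Because $s$ is polylogarithmic in $n$, the inequality $(n/s)^{1/2} \geq n^{1/2 - \delta/2}$ holds for all $n$ larger than some constant (finitely many small values being absorbed into the final constant). Combining the two lower bounds via $\max(a,b) \geq \sqrt{(a^2 + b^2)/2}$ together with the orthogonal-decomposition identity $\norm[\star_\mu(n)]{\vec{x}}^2 = \norm[\star_\mu(n)]{\vec{x}^{(p)}}^2 + \norm[\star_\mu(n)]{\vec{x}^{(s)}}^2$ yields $\norm[\dagger_\mu(kn)]{\C_n \vec{x}} \geq c\, n^{1/2 - \delta}\norm[\star_\mu(n)]{\vec{x}}$ after absorbing the $\sqrt{2}$ rescaling into the constant $c$. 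The one genuinely delicate point---really the whole reason appending the identity rows suffices---is the comparison $(n/s)^{1/2} \geq n^{1/2 - \delta}$: the gain factor recovered from unencoded transmission of the polylog suffix must beat the $n^{1/2 - \delta}$ target, which is exactly why \Cref{cor:invert-all-sub-matrices-M} was designed to decode the complement of a $\poly{\log n}$-sized suffix.
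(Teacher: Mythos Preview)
Your proof is correct and follows essentially the same approach as the paper's: build the systematic matrix $C$ from $M$ via \cref{eq:59}, rescale by $1/\sqrt{2}$, use \cref{eq:37} for the operator norm, and for robust invertibility split $\vec{x}$ into the prefix handled by \cref{item:M-invert} and the polylog-sized suffix handled by the identity rows. The only cosmetic differences are that the paper invokes \Cref{cor:invert-all-sub-matrices-M} with $\delta$ itself rather than $\delta/2$, and combines the two pieces via $a+b \geq \sqrt{a^2+b^2}$ directly (since \cref{eq:60} already gives a sum, the detour through $\max(a,b)$ is unnecessary); also note that in your parametrization $C$ is \lot{k}, so the norm subscript should be $kn$ rather than $(k+1)n$.
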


\begin{proof}
  Applying \Cref{cor:invert-all-sub-matrices-M} with $\mu$ and
  $\delta$, we obtain $c_0, c_1, j_0$ and a \lot{k} matrix $M$ (for
  a $k \geq c_1$) as in the corollary.  We define the \lot{k+1}
  matrix $C$ using $M$ as done in \cref{eq:59} above, and set
  $\mathcal{C} = \frac{1}{\sqrt{2}}C$.  \Cref{item:C-operator-norm}
  now follows from \cref{item:M-operator-norm} and \cref{eq:37}.

  For \cref{item:C-invert}, we use \cref{eq:60} followed by
  \cref{item:M-invert} of \Cref{cor:invert-all-sub-matrices-M} to get (with $c' \defeq c_0/3)$
  \begin{flexeq}
    \label{eq:61}
    \sqrt{2}\norm[\dagger_\mu(kn)]{\C_n\vec{x}}
    \eqbreak \geq c' n^{(1/2 - \delta)}
    \norm[\star_\mu(n)]{\cblock{\vec{x}}{\tf(n), j_0(n)}}
    \eqbreak + \sum_{i=1}^n \inp{\frac{n}{n - i
        + 1}}^{(1- \mu)/2} \abs{x_i},
    \eqbreak \text{ for all $\vec{x} \in \R^n$,}
  \end{flexeq}
  where $\tf(n) \defeq \ceil{\lg(n + 1)}$, and $j_0$ is as in
  \Cref{cor:invert-all-sub-matrices-M} and satisfies
  $j_0(n) = O(\log \log n)$.  We now estimate the second term as
  follows:
  \begin{align*}
    \notbool{j2col}{}{&}
    \sum_{i=1}^n \inp{\frac{n}{n - i + 1}}^{(1- \mu)/2} \abs{x_i}
    \notbool{j2col}{}{\\}
    &\geq \sum_{i=n - 2^{j_0(n)} + 2}^n\frac{\sqrt{n}}{\sqrt{n - i +
      1}}\cdot \inp{\frac{n-i+1}{n}}^{\mu/2} \abs{x_i}\\
    &\geq \frac{\sqrt{n}}{O(\poly{\log n})}
      \sqrt{\sum_{i=n - 2^{j_0(n)} + 2}^n\inp{\frac{n-i+1}{n}}^{\mu}
      \abs{x_i}^2}\\
    &\geq c''n^{(1/2 - \delta)} \norm[\star_\mu(n)]{\cblock{\vec{x}}{j_0(n)-1}},
  \end{align*}
  for some fixed positive constant $c'' = c''(\delta)$.
  \Cref{item:C-invert} now follows by substituting this into
  \cref{eq:61} and using the fact that
  \begin{multline*}
    \norm[\star_\mu(n)]{\cblock{\vec{x}}{j_0(n)-1}}
    + \norm[\star_\mu(n)]{\cblock{\vec{x}}{\tf(n), j_0(n)}}
    \\
    \geq \sqrt{
      \norm[\star_\mu(n)]{\cblock{\vec{x}}{j_0(n)-1}}^2
      + \norm[\star_\mu(n)]{\cblock{\vec{x}}{\tf(n), j_0(n)}}^2
    } \notbool{j2col}{}{\\}
    = \norm[\star_\mu(n)]{\vec{x}}.\qedhere
  \end{multline*}
\end{proof}

\section{Comparing the online and block settings: A lower bound}
\label{sec:lower-bound}
As noted in the introduction, when compared with the block coding
setting, we lose an extra $\Theta(t^\delta)$ factor in the robust
invertibility guarantee in the online setting.  A natural question
therefore is whether it is possible to get rid of this loss and obtain
a guarantee as strong as the block coding setting (\cref{eq:32}) in
the online setting as well.  We now prove \Cref{prop:lower-bound},
which was stated in the introduction as a partial answer to this
question.  We restate the theorem here for ease of reference.

\begin{theorem}
  \label{prop:lower-restate}
  Fix $\mu \in (0, 1]$, $c_0 > 0$ and a positive integer $k$.  There
  exists a constant $\tau = \tau(\mu, c_0, k)$ such that the following
  is true.  If $\C$ is a $kT \times T$ \lot{k} matrix such that for
  all $t \in [T]$ the submatrix $\C_t$ of $\C$ satisfies
  \begin{equation}
    \norm[\dagger_\mu(kt)]{\C_t\vec{x}} \geq {c_0 t^{1/2}}
    \norm[\star_{\mu}(t)]{\vec{x}} \text{ for all $\vec{x} \in \R^t$},\label{eq:62}
  \end{equation}
  then there exists a non-zero $\vec{x} \in \R^T$ for which
  \begin{displaymath}
    \norm[2]{\C\vec{x}}^2 \geq \tau \norm[2]{\vec{x}}^2 \sum_{i=1}^T\frac{1}{i}
    \geq \inp{\tau\log T}\cdot\norm[2]{\vec{x}}^2.
  \end{displaymath}
  In particular, $\vec{x}$ can be taken to be the unit pulse at time $1$.
\end{theorem}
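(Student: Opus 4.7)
The plan is to use the witness $\vec{x} = \vec{e}_1 \in \R^T$ (the unit pulse at time $1$) suggested by the theorem, and then translate the hypothesis \eqref{eq:62} applied at each $t \in [T]$ into a family of weighted $\ell_1$ lower bounds on the first column $\vec{v} := \C\vec{e}_1$ of $\C$. Since $\norm[\star_\mu(t)]{\vec{e}_1} = 1$ for every $t$ (only the first coordinate contributes, with weight $(t/t)^\mu = 1$), and $\C_t\vec{e}_1$ is the truncation of $\vec{v}$ to its first $kt$ entries, \eqref{eq:62} unwraps into
\[
    \sum_{i=1}^{kt} w_{i,t}\abs{v_i} \geq c_0\sqrt{t}, \qquad w_{i,t} := \inp{\frac{kt-i+1}{kt}}^{-(1-\mu)/2}, \qquad t=1,\ldots,T.
\]
Since $\norm[2]{\vec{v}}^2 = \norm[2]{\C\vec{e}_1}^2$, the goal reduces to combining this family of inequalities into a single $\Omega(\log T)$ lower bound on $\norm[2]{\vec{v}}^2$.

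To carry this out I would take a weighted sum of the constraints with coefficients $\alpha_t = t^{-3/2}$. This choice is motivated by a Lagrangian analysis of the toy problem ``minimize $\sum v_i^2$ subject to $\sum_{i\le kt}\abs{v_i}\ge c_0\sqrt{t}$'', whose KKT optimum places mass $\abs{v_i}\asymp 1/\sqrt{s}$ on the $s$th length-$k$ block for value $\Theta((c_0^2/k)\log T)$, with multipliers exactly $\alpha_t \asymp t^{-3/2}$. Swapping the order of summation, and writing $H_T := \sum_{t=1}^T 1/t$,
\[
    c_0 H_T = c_0\sum_{t=1}^T \frac{1}{t} \leq \sum_{t=1}^T \alpha_t\sum_{i=1}^{kt} w_{i,t}\abs{v_i} = \sum_{i=1}^{kT} h_i\abs{v_i}, \qquad h_i := \sum_{t\geq\lceil i/k\rceil}\alpha_t w_{i,t}.
\]
Cauchy--Schwarz now yields $\norm[2]{\vec{v}}^2 \geq (c_0 H_T)^2/\sum_i h_i^2$, and so the theorem reduces to the dimension-free bound $\sum_i h_i^2 \leq C(\mu,k)\,H_T$.

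The heart of the matter is the per-index estimate $h_i \leq C'(\mu,k)/\sqrt{s}$ for every $i$ in the $s$th block $[k(s-1)+1,ks]$. I would split the defining sum at $t=2s$. In the ``far'' range $t\geq 2s$, the inequality $(kt-i+1)/kt \geq 1/2$ gives $w_{i,t}\leq 2^{(1-\mu)/2}$, and $\sum_{t\geq 2s} t^{-3/2} = O(1/\sqrt{s})$ by the standard integral test. In the ``near'' range $s\leq t<2s$, the bound $kt - i + 1 \geq k(t-s)+1$ (valid because $i\leq ks$) combined with $kt \leq 2ks$ gives $w_{i,t}\leq (2ks/(k(t-s)+1))^{(1-\mu)/2}$; substituting $t^{-3/2}\leq s^{-3/2}$ and $q = t-s$ reduces the near contribution to a multiple of $s^{-3/2}(2ks)^{(1-\mu)/2}\sum_{q=0}^{s-1}(kq+1)^{-(1-\mu)/2}$, and an elementary integral test (needing $\mu>0$ to keep $(1+\mu)/2$ bounded away from $0$) bounds the inner sum by $O(s^{(1+\mu)/2}/k^{(1-\mu)/2})$, yielding again an $O(1/\sqrt{s})$ bound with $\mu,k$-dependent constants. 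Summing $h_i^2 = O(1/s)$ over the $k$ indices in block $s$ and then over $s\leq T$ produces $\sum_i h_i^2 = O(k\,H_T)$, and the theorem follows with $\tau$ a constant depending on $\mu, c_0$, and $k$. The delicate step is the near-range bound; the far range and the outer Cauchy--Schwarz are routine.
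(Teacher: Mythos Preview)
Your proposal is correct and follows essentially the same strategy as the paper. Both arguments specialize to $\vec{x}=e_1$, rewrite the hypothesis at each $t$ as a weighted $\ell_1$ constraint on the first column, combine these with multipliers proportional to $t^{-3/2}$ (equivalently, the paper's $\lambda_t \propto t^{-1-\mu/2}$ after absorbing the $(kt)^{(1-\mu)/2}$ normalization in $w_{i,t}$), and reduce to the key per-index estimate $h_i = O(s^{-1/2})$ for $i$ in block $s$. The only presentational difference is that the paper frames the combination step as weak duality for an explicit convex program, whereas you go directly through Cauchy--Schwarz; these are equivalent here, since optimizing the scaling $a_0$ in the paper's dual objective $a_0 A - \tfrac{1}{4}a_0^2 B$ yields exactly the Cauchy--Schwarz bound $A^2/B$. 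Your near/far split for bounding $h_i$ is a minor variant of the paper's block decomposition of the tail sum and gives the same $O(s^{-1/2})$ conclusion.
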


\begin{proof}
  When $\vec{x} = e_1$ is the unit pulse at time $1$, we have
  $\norm[\star_\mu(t)]{\vec{x}_{[t]}} = 1$  for all $\mu \in (0, 1]$ and $1
  \leq t \leq T$.  Let $\vec{z} \in \R^{kT}$ be the vector such that $z_i
  \defeq \abs{\C_{i,1}}$.  Then, for $x = e_1$, we have
  \begin{align*}
    \norm[2]{\C\vec{x}} &= \norm[2]{\vec{z}}\text{, and }\\
    \norm[\dagger_\mu(kt)]{\C_tx_{[t]}}
    &= \sum_{i = 1}^{kt} z_i \cdot \inp{\frac{kt}{kt - i + 1}}^{(1-\mu)/2}.
  \end{align*}
  It therefore follows that when the guarantees of \cref{eq:62} are
  enforced, the objective value of the following convex program is a
  lower bound on $\norm[2]{\C\vec{e_1}}$:
  \begin{equation}
    \label{eq:63}
    \begin{aligned}
      &\min &\qquad \norm[2]{\vec{z}}^2\\
      &\text{subject to}
      &\qquad \sum_{i=1}^{kt} z_if_{it} & \geq \gamma_t\text{, }&1 \leq t \leq T\\
      &&\qquad z _i &\geq 0,\text{ }&1 \leq i \leq kT.
    \end{aligned}
  \end{equation}
  Here
  \[
    \gamma_t \defeq \frac{c_0t^{\mu/2}}{k^{(1-\mu)/2}}\text{, and }
    f_{it} \defeq \frac{1}{(kt - i + 1)^{(1-\mu)/2}}.
  \]
  We will lower bound the objective value of this program by providing
  a feasible solution to its dual program.\footnote{Note that since
    the primal objective function is convex in $\vec{z}$ and the since
    the primal constraints admit a feasible point where all
    constraints are satisfied with a strict inequality, Slater's
    constraint qualifications are satisfied.  Thus, strong duality
    also holds, though it is not required for our
    purposes.} The dual program is given as
  \begin{equation}
      \label{eq:64}
    \begin{aligned}
      &\sup &\qquad g(\vec{\lambda}, \vec{\nu})\\
      &\text{subject to}
      &\qquad \lambda_i &\geq 0 \text{, }&1 \leq i \leq T\\
      &&\qquad \nu_i &\geq 0,\text{ }&1 \leq i \leq kT
    \end{aligned}
  \end{equation}
  where
  \begin{displaymath}
    g(\vec{\lambda}, \vec{\nu}) \defeq \inf_{\vec{z} \in \R^{kT}}
    \norm[2]{\vec{z}}^2
    - \sum_{t=1}^T\lambda_t\inp{\sum_{i=1}^{kt}z_if_{it} - \gamma_t}
    -\sum_{i=1}^{kT}\nu_iz_i.
  \end{displaymath}
  The expression to be minimized in the definition of $g$ is a convex
  function of $\vec{z}$, and hence we can perform the minimization by
  equating the gradient to $0$.  This yields
  \begin{equation}
    \label{eq:65}
    g(\lambda, \nu) = \sum_{t=1}^T\lambda_t\gamma_t -
    \frac{1}{4}\sum_{t=1}^{kT}\Lambda_i^2 -
    \frac{1}{4}\sum_{i=1}^{kT}\nu_i^2 -
    \frac{1}{2}\sum_{t=1}^{kT}\Lambda_t \nu_t,
  \end{equation}
  where, for $1 \leq i \leq kT$,
  \[
    \Lambda_i \defeq \sum_{t = \ceil{i/k}}^T\lambda_tf_{it}
    = \sum_{t = \ceil{i/k}}^T
    \frac{\lambda_t}{(kt -i +1)^{(1-\mu)/2}}.
  \]
  Note that when
  $\vec{\lambda}$ and $\vec{\nu}$ are non-negative, $g$ is
  non-increasing in the $\nu_i$, and hence we can set $\nu_i = 0$ (for
  $1 \leq i \leq kT$) without changing the optimal value of the program
  in \eqref{eq:64}.  We now consider the following dual feasible solution:
  \begin{equation}
    \begin{aligned}
      \lambda_t
      &= \frac{a_0}{t^{1 + \mu/2}} &&\text{for $1 \leq t \leq T$, and}\\
      \nu_i &= 0 && \text{for $1 \leq i \leq kT$,}
    \end{aligned}\label{eq:67}
  \end{equation}
  where $a_0$ is a positive constant to be chosen later.  To lower
  bound the dual objective value, we now upper bound the $\Lambda_i$
  given this choice of the $\lambda_t$.  For positive integers $i$ and
  $j$ such that $1 \leq j \leq T$ and $k(j-1) + 1 \leq i \leq kj$, we
  have
  \begin{align}
    \Lambda_i
    &= a_0\sum_{t = j}^T\frac{1}{t^{1 + \mu/2}(kt - i +
      1)^{(1-\mu)/2}} \nonumber\\
    & \leq a_0
      \sum_{t=j}^T\frac{1}{t^{1+\mu/2}\cdot(kt - kj + 1)^{(1-\mu)/2}}\nonumber\\
    &\leq \frac{a_0}{j^{1+\mu/2}} +
      \frac{a_0}{k^{(1-\mu)/2}}\sum_{t=1}^{T -
      j}\frac{1}{(t+j)^{1+\mu/2}\cdot t^{(1-\mu)/2}}.
      \label{eq:68}
  \end{align}
  The last term above can also be shown to be $O(\sqrt{j})$, as
  follows:
  \begin{align*}
    \notbool{j2col}{}{&}
    \sum_{t=1}^{T - j}\frac{1}{(t+j)^{1+\mu/2}\cdot t^{(1-\mu)/2}}
    \notbool{j2col}{}{\\}
    &\leq\sum_{t=1}^{\infty}\frac{1}{(t+j)^{1+\mu/2}\cdot
      t^{(1-\mu)/2}} \\
    &  =\sum_{l=1}^\infty\sum_{t=(l-1)j+1}^{lj}\frac{1}{(t+j)^{1+\mu/2}\cdot
      t^{(1-\mu)/2}}\\
    &\leq\sum_{l=1}^{\infty}\frac{1}{(lj)^{1+\mu/2}}
      \sum_{t=(l-1)j+1}^{lj}\frac{1}{t^{(1-\mu)/2}}\\
    &\leq\frac{2}{1+\mu}
      \sum_{l=1}^{\infty}\frac{j^{(1+\mu)/2}}{(lj)^{1+\mu/2}}\cdot
      \inp{l^{(1+\mu)/2} - (l-1)^{(1+\mu)/2}} \\
    &\leq \frac{2}{1+\mu}\cdot\frac{1}{\sqrt{j}}
      \sum_{l=1}^\infty \frac{1}{l^{1+\mu/2}}
    \leq \frac{2(2+\mu)}{\mu(1+\mu)}\cdot\frac{1}{\sqrt{j}}.
  \end{align*}
  Here, the third and the last inequalities use \Cref{fct:estimate}
  (note that $\mu > 0$, so only the case $\alpha \neq 1$ of
  \Cref{fct:estimate} is used), and the fourth uses the fact that when $\beta \in (0, 1]$ and
  $n$ is a non-negative integer, $(n+1)^{\beta} - n^\beta \leq 1$.
  Plugging the above estimate into \cref{eq:68}, we get that when $j$
  is a positive integer such that $k(j-1) + 1 \leq i \leq kj$,
  \begin{displaymath}
    \Lambda_i \leq \frac{a_0c'}{\sqrt{j}},
  \end{displaymath}
  where
  $c' = c'(\mu,k) \defeq 1 + \frac{2(2+\mu)}{\mu(1+\mu)\cdot
    k^{(1-\mu)/2}}$.  Thus, at the feasible solution in \cref{eq:67},
  the dual objective value is
  \begin{align*}
    g(\lambda, \mathbf{0})
    &= \sum_{t=1}^T\lambda_t\gamma_t -
      \frac{1}{4}\sum_{i=1}^{kT}\Lambda_i^2 \\
    & = \frac{a_0c_0}{k^{(1-\mu)/2}}\sum_{t=1}^{T}\frac{1}{t} -
      \frac{1}{4}\sum_{i=1}^{kT}\Lambda_i^2\\
    &\geq a_0\inp{\frac{c_0}{k^{(1-\mu)/2}}
      - \frac{ka_0{c'}^2}{4}}\sum_{t=1}^{T}\frac{1}{t},
  \end{align*}
  where the last inequality uses the above estimate on $\Lambda_i$.
  Thus, by choosing $a_0 = a_0(\mu, c_0, k)$ to be
  $\frac{2c_0}{k^{(3-\mu)/2}c'^2}$, we find that there exists a
  positive constant $\tau = \tau(\mu, c_0, k)$ such that the dual
  objective value is at least $\tau \sum_{t=1}^T\frac{1}{t}$. By weak
  duality, this is also a lower bound on the objective value of the
  primal program in \eqref{eq:63}. By the discussion preceding
  \eqref{eq:63}, this completes the proof.
\end{proof}

The proof of \Cref{prop:lower-restate} uses the following elementary estimate.
\begin{fact}
  \label{fct:estimate}
  Let $a < b$ be positive integers and $\alpha$ a positive real
  number.  Then,
  \begin{displaymath}
    \sum_{i = a + 1}^b i^{-\alpha} \leq \int_{a}^b x^{-\alpha} dx =
    \begin{cases}
      \frac{b^{1 - \alpha} - a^{1 - \alpha}}{1 - \alpha} &
      \textup{when $\alpha \neq 1$,}\\
      \log (b/a) & \textup{when $\alpha = 1$.}
    \end{cases}
  \end{displaymath}
\end{fact}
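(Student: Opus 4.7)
The plan is to prove this fact by a standard integral comparison for a positive decreasing function, followed by an elementary antiderivative computation. First I would observe that since $\alpha > 0$, the function $f(x) = x^{-\alpha}$ is strictly decreasing on $(0, \infty)$. Therefore, for any integer $i$ with $a + 1 \leq i \leq b$ and any real $x \in [i-1, i]$, we have $i^{-\alpha} \leq x^{-\alpha}$. Integrating this pointwise inequality over the unit interval $[i-1, i]$ gives $i^{-\alpha} \leq \int_{i-1}^{i} x^{-\alpha} \, dx$.

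Next I would sum over $i = a+1, \ldots, b$. Because $a$ and $b$ are integers, the unit intervals $[i-1, i]$ for $a+1 \leq i \leq b$ tile $[a, b]$ exactly, and by additivity of the integral the right-hand side telescopes into $\int_{a}^{b} x^{-\alpha} \, dx$. This establishes the claimed upper bound $\sum_{i=a+1}^{b} i^{-\alpha} \leq \int_{a}^{b} x^{-\alpha} \, dx$.

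It remains to evaluate the integral, which is routine. For $\alpha \neq 1$, an antiderivative of $x^{-\alpha}$ is $\frac{x^{1-\alpha}}{1 - \alpha}$, so the fundamental theorem of calculus yields $\frac{b^{1-\alpha} - a^{1-\alpha}}{1 - \alpha}$. For $\alpha = 1$, an antiderivative of $x^{-1}$ is $\log x$, so the integral equals $\log b - \log a = \log(b/a)$. No step in this argument presents any genuine obstacle; the entire proof is a textbook monotonicity-based comparison combined with a two-case antiderivative calculation.
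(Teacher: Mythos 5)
Your proof is correct and is precisely the standard integral-comparison argument for a decreasing function, followed by a routine antiderivative computation; the paper states this as a Fact without proof because it is elementary, so there is nothing to compare against, but your write-up supplies exactly the intended justification.
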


\appendix

\section{$\norm[1]{\cdot}$-norms of non-uniform Gaussian vectors}
\label{sec:norm1cdot-norms-non}
In this section we collect concentration bounds for the
$\norm[1]{\cdot}$-norms of Gaussian vectors with independent but not
identically distributed entries.  The bounds here are adaptations of
standard arguments and results in the literature on Gaussian
concentration to our setting.

We begin with the following elementary fact and a consequence, and
then proceed to bounds for the lower tail of the $\norm[1]{\cdot}$
norm of Gaussian vectors with independent but not identically
distributed entries (in \Cref{cor:conc,thm:l1-concentration}).
\begin{fact}[\textbf{Gaussian tail}]
  If $X \sim \mathcal{N}(0, \sigma^2)$, then for $t > 0$,
  \begin{flexeq*}
    \Pr{X \geq t}
    = \frac{1}{\sqrt{2\pi\sigma^2}}
    \int\limits_t^\infty \exp\inp{-x^2/(2\sigma^2)}dx %
    \eqbreak < \frac{\sigma}{t\cdot\sqrt{2\pi}}
    \exp\inp{-\frac{t^2}{2\sigma^2}}.
  \end{flexeq*}
  \label{lem:gaussian-tail}
\end{fact}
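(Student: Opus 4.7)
The equality part of the statement is immediate from writing down the density of the Gaussian and integrating, so all the work is in establishing the strict inequality, which is a Mills-ratio-style tail bound. My plan is to use the classical trick of inserting a factor of $x/t$ inside the integral: since the domain of integration is $x \geq t > 0$, we have $x/t \geq 1$, and so
\begin{displaymath}
  \int_t^\infty \exp\inp{-\frac{x^2}{2\sigma^2}} dx \leq \int_t^\infty \frac{x}{t} \exp\inp{-\frac{x^2}{2\sigma^2}} dx.
\end{displaymath}

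The inserted factor makes the integrand an exact derivative. Indeed, $\frac{d}{dx}\insq{-\frac{\sigma^2}{t}\exp\inp{-x^2/(2\sigma^2)}} = \frac{x}{t}\exp\inp{-x^2/(2\sigma^2)}$, so the right-hand side above evaluates in closed form to $\frac{\sigma^2}{t}\exp\inp{-t^2/(2\sigma^2)}$. Dividing through by the Gaussian normalization $\sqrt{2\pi\sigma^2}$ and simplifying $\sigma^2/\sqrt{\sigma^2} = \sigma$ yields the claimed upper bound $\frac{\sigma}{t\sqrt{2\pi}}\exp\inp{-t^2/(2\sigma^2)}$.

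The inequality is strict because $x/t > 1$ for every $x$ in the positive-measure set $(t, \infty)$, so the initial bound $\int e^{-x^2/(2\sigma^2)}\, dx \leq \int (x/t) e^{-x^2/(2\sigma^2)}\, dx$ is itself strict. There is no real obstacle here; the only points to verify carefully are the bookkeeping on the powers of $\sigma$ (to land on $\sigma$ rather than $\sigma^2$ in the numerator of the final expression) and the fact that the bound requires $t > 0$ so that the ratio $x/t$ is well-defined and the Mills-ratio argument is meaningful.
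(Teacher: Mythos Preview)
Your argument is correct and is the standard Mills-ratio trick. The paper itself states this as a \emph{Fact} without proof, so there is no approach to compare against; your write-up is exactly the canonical justification one would supply.
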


\begin{corollary}[\textbf{Upper tail of the $\norm[1]{\cdot}$-norm}]
  \label{lem:gauss-l1-upper}
  Let $X \sim \mathcal{N}\inp{0, \mathop{\mathrm{diag}}\inp{(\sigma_i^2)_{i=1}^n}}$.
  Then, for any $t > 0$ and $c > 0$, we have
  \begin{displaymath}
    \Pr{\norm[1]{X} > t} \leq \exp\inp{-ct + n \log 2 + \frac{c^2}{2}\sum_{i=1}^n\sigma_i^2)}.
  \end{displaymath}
  In particular, choosing $c = \frac{t}{\sum_{i=1}^n \sigma_i^2}$ and
  then $t = \alpha \sqrt{n}$, we have
  \begin{displaymath}
    \Pr{\norm[1]{X} > \alpha\sqrt{n}} \leq
    \exp\inp{-n\inp{\frac{\alpha^2}{2\sum_{i=1}^n\sigma_i^2} - \log 2}}.
  \end{displaymath}
\end{corollary}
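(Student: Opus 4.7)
The plan is a standard Chernoff (exponential-moment) argument applied to $\|X\|_1 = \sum_{i=1}^n |X_i|$, exploiting the independence of the $X_i$ together with the symmetry of each Gaussian factor.

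First, I would apply Markov's inequality to the random variable $e^{c\|X\|_1}$: for any $c > 0$,
\begin{equation*}
  \Pr{\|X\|_1 > t} \;=\; \Pr{e^{c\|X\|_1} > e^{ct}} \;\leq\; e^{-ct}\,\E{e^{c\|X\|_1}}.
\end{equation*}
By independence of the coordinates, the moment generating function factorizes as $\E{e^{c\|X\|_1}} = \prod_{i=1}^n \E{e^{c|X_i|}}$, so the task reduces to bounding $\E{e^{c|X_i|}}$ for a single centered Gaussian $X_i \sim \mathcal{N}(0, \sigma_i^2)$.

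For each such factor I would use the pointwise bound $e^{c|x|} \leq e^{cx} + e^{-cx}$ and the symmetry $X_i \stackrel{d}{=} -X_i$ to get $\E{e^{c|X_i|}} \leq 2\,\E{e^{cX_i}} = 2 e^{c^2\sigma_i^2/2}$, where the last equality is the standard Gaussian MGF. Multiplying over $i$ gives $\E{e^{c\|X\|_1}} \leq 2^n \exp\bigl(\tfrac{c^2}{2}\sum_i \sigma_i^2\bigr)$, and plugging back into the Markov estimate yields the first bound
\begin{equation*}
  \Pr{\|X\|_1 > t} \;\leq\; \exp\!\inp{-ct + n\log 2 + \tfrac{c^2}{2}\sum_{i=1}^n \sigma_i^2}.
\end{equation*}

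For the specialized form, I would optimize the free parameter: setting $c = t/\sum_i \sigma_i^2$ minimizes the exponent in $c$ and collapses the linear and quadratic terms into $-t^2/\bigl(2\sum_i \sigma_i^2\bigr)$. Substituting $t = \alpha\sqrt{n}$ then produces the stated bound. There is no real obstacle here; the only delicate point is the passage from the MGF of $X_i$ to that of $|X_i|$, which is why I explicitly invoke $e^{c|x|} \leq e^{cx} + e^{-cx}$ and Gaussian symmetry rather than attempting a direct integral computation.
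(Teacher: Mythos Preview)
Your proposal is correct and follows essentially the same Chernoff argument as the paper: Markov on $e^{c\|X\|_1}$, independence to factorize, and the bound $\E{e^{c|X_i|}}\leq 2e^{c^2\sigma_i^2/2}$ for each coordinate. The only difference is cosmetic—you justify the single-coordinate MGF bound via $e^{c|x|}\leq e^{cx}+e^{-cx}$ and symmetry, whereas the paper simply asserts it.
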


\begin{proof}
  For $Y \sim \mathcal{N}(0, \sigma^2)$, we have, for any $c > 0$,
  $\E{\exp(c\abs{Y})} \leq 2\exp(c^2\sigma^2/2)$.
  Thus we have, for any $c > 0$,
  \begin{flexeq*}
    \Pr{\norm[1]{X} > t} = \Pr{\exp(c\norm[1]{X}) > \exp(ct)}
    \eqbreak \leq \exp(-ct)\prod_{i=1}^n\E{\exp(c\abs{X_i})}
    \eqbreak \leq \exp\inp{-ct + n\log 2 +
      \frac{c^2}{2}\sum_{i=1}^n\sigma_i^2}. \qedhere
  \end{flexeq*}
\end{proof}

\subsection{The lower tail of the $\norm[1]{\cdot}$-norm}
\label{sec:lower-tail-norm1cdot}
We now state two concentration results for the lower tail of the
$\ell_1$ norm of Gaussian vectors with independent but not identically
distributed entries.  The first (\Cref{cor:conc}) deals with the lower
tail for mean $0$ vectors (in other words, this is an upper
  bound on small-ball probability), while the second
(\Cref{thm:l1-concentration}) considers the concentration around the
$\ell_1$ norm of the mean for vectors with non-zero mean.

\begin{lemma}
  \label{lem:gauss-l1}
  Let $X \sim \mathcal{N}\inp{0, \mathop{\mathrm{diag}}\inp{(\sigma_i^2)_{i=1}^n}}$.
  Then, for any $t > 0$ and $c > 0$, we have
  \begin{displaymath}
    \Pr{\norm[1]{X} < t} \leq \exp(ct + \sum_{i=1}^n\nu(c^2\sigma_i^2)),
  \end{displaymath}
  where for $x \geq 0$,
  \begin{flexeq*}
   \nu(x) \defeq \frac{x}{2} - \frac{1}{2}\log \frac{\pi}{2} + \log
   \int\limits_{\sqrt{x}}^{\infty}\exp(-t^2/2)dt
   \eqbreak \leq \frac{1}{2}\min\inb{0,
      -\log \frac{\pi x}{2}}.
  \end{flexeq*}
\end{lemma}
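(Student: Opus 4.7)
The plan is to use a standard Chernoff-type bound for the lower tail, combined with an exact evaluation of the one-sided Laplace transform of $\abs{X_i}$. First I would apply Markov's inequality to the random variable $\exp(-c\norm[1]{X})$: for any $c, t > 0$,
\[
\Pr{\norm[1]{X} < t} = \Pr{\exp(-c\norm[1]{X}) > \exp(-ct)}
\leq \exp(ct)\cdot \E{\exp(-c\norm[1]{X})}.
\]
Independence of the coordinates factors the expectation as $\prod_{i=1}^n \E{\exp(-c\abs{X_i})}$, so the problem reduces to computing $\E{\exp(-c\abs{X_i})}$ for each coordinate.

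Next I would evaluate this single-coordinate expectation in closed form, using the symmetry of the Gaussian density to replace $\abs{x}$ by $x$ on the doubled half-line and then completing the square in the exponent:
\[
\E{\exp(-c\abs{X_i})}
= \frac{2}{\sqrt{2\pi\sigma_i^2}}\int_0^\infty \exp\inp{-cx - \frac{x^2}{2\sigma_i^2}}dx
= \sqrt{\frac{2}{\pi}}\,\exp\inp{\frac{c^2\sigma_i^2}{2}}\int_{c\sigma_i}^\infty \exp(-u^2/2)\,du,
\]
where the second equality uses the substitution $u = x/\sigma_i + c\sigma_i$. Comparing with the definition of $\nu$, this identifies $\E{\exp(-c\abs{X_i})} = \exp(\nu(c^2\sigma_i^2))$. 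Taking the product over $i$ and passing to the logarithm yields the claimed tail bound.

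Finally I would verify the two upper bounds on $\nu$. The inequality $\nu(x) \leq 0$ is immediate: since $\exp(-c\abs{X_i}) \leq 1$ pointwise, $\exp(\nu(c^2\sigma_i^2)) = \E{\exp(-c\abs{X_i})} \leq 1$, which gives $\nu \leq 0$ throughout. For the bound $\nu(x) \leq -\tfrac{1}{2}\log(\pi x/2)$, I would invoke \Cref{lem:gaussian-tail} with $\sigma = 1$ and threshold $\sqrt{x}$, which yields $\int_{\sqrt{x}}^\infty \exp(-t^2/2)\,dt < \tfrac{1}{\sqrt{x}}\exp(-x/2)$. Substituting this estimate into the definition of $\nu$, the $x/2$ terms cancel and the remaining pieces collapse to $-\tfrac{1}{2}\log(\pi x/2)$.

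There is no serious technical obstacle here: this is a standard small-ball estimate for a Gaussian $\ell_1$-norm, and the two halves of the upper bound on $\nu$ arise respectively from the trivial bound on a bounded random variable and from the classical Mills-ratio tail inequality. The only step that requires some care is tracking constants through the substitution in the closed-form computation, so that the answer matches the definition of $\nu$ exactly rather than differing by a factor of $\sqrt{2/\pi}$ or $e^{c^2\sigma_i^2/2}$ misplaced.
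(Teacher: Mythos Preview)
Your proposal is correct and follows essentially the same route as the paper: the Chernoff--Markov bound applied to $\exp(-c\norm[1]{X})$, the same closed-form evaluation of $\E{\exp(-c\abs{X_i})}$ via completing the square, and the same use of \Cref{lem:gaussian-tail} to obtain the $-\tfrac{1}{2}\log(\pi x/2)$ bound on $\nu$. Your justification of $\nu\le 0$ via $\exp(-c\abs{X_i})\le 1$ is a slightly more explicit variant of the paper's one-line remark that ``the definition of $\nu$ implies $\nu(x)\le 0$''.
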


\begin{proof}
  Let $X = (X_1, X_2, \dots, X_n)$ where $X_i \sim \mathcal{N}(0,
  \sigma_i^2)$.  For any $c > 0$, we have
  \begin{flexeq}
    \label{eq:6}
    \Pr{\norm[1]{X} < t} = \Pr{\exp\inp{-c\norm[1]{X}} > \exp(-ct)}
    \eqbreak \leq
    \exp(ct) \cdot \prod_{i=1}^n\E{\exp\inp{-c\abs{X_i}}}.
  \end{flexeq}
  The first claim now follows since for
  $Y \sim \mathcal{N}(0, \sigma^2)$, we have
  \begin{align*}
    \E{\exp\inp{-c\abs{Y}}} &=
    \sqrt{\frac{2}{\pi\sigma^2}}
    \int\limits_0^\infty\exp\inp{-cy - \frac{y^2}{2\sigma^2}}dy \\
    &= \sqrt{\frac{2}{\pi}}\exp(c^2\sigma^2/2) \notbool{j2col}{}{\\}
    \notbool{j2col}{}
    {&\qquad\cdot}
       \int\limits_0^\infty\exp\inp{-(z + c\sigma)^2/2}dz\\
    &= \sqrt{\frac{2}{\pi}}\exp(c^2\sigma^2/2)
      \int\limits_{c\sigma}^\infty\exp\inp{-z^2/2}dz \\
                            &= \exp\inp{\nu(c^2\sigma^2)}.
  \end{align*}
  The definition of $\nu$ implies that $\nu(x) \leq 0$ for all positive
  $x$.  Now, using \Cref{lem:gaussian-tail}, we have
  \begin{flexeq*}
    \nu(x) \leq -\frac{1}{2}\log\inp{\frac{\pi}{2}} + \frac{x}{2} +
    \log\inp{\frac{1}{\sqrt{x}}\exp\inp{-x/2}}
    \eqbreak = -\frac{1}{2}\log
    \inp{\frac{\pi x}{2}}\text{, for all $x \geq 0$.}
  \end{flexeq*}
  Thus, we obtain
  $\nu(x) \leq \frac{1}{2}\min\inb{0, -\log\frac{\pi x}{2}}$.
\end{proof}

\begin{theorem}[\textbf{Lower tail of the $\norm[1]{\cdot}$-norm}]
  \label{cor:conc}
  There exists a positive constant $\gamma$ such that the following is
  true. Let
  $X \sim \mathcal{N}\inp{0,
    \mathop{\mathrm{diag}}\inp{(\sigma_i^2)_{i=1}^n}}$, $S$ an
  arbitrary subset of $[n]$.  Define
  $G \defeq GM\inp{\inp{\sigma_i}_{i\in S}}$ to be the geometric mean
  of the $\sigma_i$ for $i \in S$. Then, for all $\tau \geq 0$ and
  \[t \leq \tau \gamma G |S|\] we have
  \begin{displaymath}
    \Pr{\norm[1]{X} < t} \leq \inp{\tau}^{|S|}.
  \end{displaymath}
  In particular, given $\alpha$, if there exists a set $S$ for which
  $G \geq \alpha/|S|$, then for all $\tau \geq 0$,
  \begin{displaymath}
    \Pr{\norm[1]{X} < \tau\gamma\cdot \alpha} \leq \tau^{|S|}.
  \end{displaymath}
\end{theorem}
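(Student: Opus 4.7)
\textbf{Proof proposal for \Cref{cor:conc}.} The plan is to run the MGF/Chernoff bound given by \Cref{lem:gauss-l1} on the restriction $X_S := (X_i)_{i \in S}$ and then optimize in the free parameter $c$. Dropping coordinates can only shrink the $\ell_1$ norm, so
\begin{equation*}
\Pr{\norm[1]{X} < t} \leq \Pr{\norm[1]{X_S} < t} \leq \exp\!\Big(ct + \sum_{i \in S}\nu(c^2\sigma_i^2)\Big) \quad \text{for every } c > 0,
\end{equation*}
by direct application of \Cref{lem:gauss-l1} to the $|S|$-dimensional Gaussian $X_S$.

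Next I would replace $\nu$ by the clean logarithmic upper estimate from the same lemma. Although it is stated as $\nu(x) \leq \tfrac12 \min\{0,-\log(\pi x/2)\}$, the inequality $\nu(x) \leq -\tfrac12 \log(\pi x/2)$ in fact holds for \emph{every} $x > 0$: when $\pi x/2 \geq 1$ this is the non-trivial case of the min, and when $\pi x/2 < 1$ the right-hand side is non-negative while $\nu \leq 0$ unconditionally. Summing over $i \in S$ and using $\sum_{i \in S}\log\sigma_i = |S|\log G$,
\begin{equation*}
\sum_{i \in S} \nu(c^2\sigma_i^2) \leq -\frac{|S|}{2}\log\!\left(\frac{\pi c^2 G^2}{2}\right),
\end{equation*}
so we obtain $\Pr{\norm[1]{X} < t} \leq \exp(ct)\cdot\bigl(2/(\pi c^2 G^2)\bigr)^{|S|/2}$.

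I would then optimize in $c$. The exponent $ct - |S|\log(cG\sqrt{\pi/2})$ is minimized at $c^\star = |S|/t$, and substituting this value collapses the bound to
\begin{equation*}
\Pr{\norm[1]{X} < t} \leq e^{|S|}\left(\frac{2t^2}{\pi |S|^2 G^2}\right)^{|S|/2} = \left(\frac{e\,t\sqrt{2/\pi}}{|S|\,G}\right)^{|S|} = \left(\frac{t}{\gamma\, |S|\, G}\right)^{|S|},
\end{equation*}
where $\gamma := \sqrt{\pi/2}/e$. Under the hypothesis $t \leq \tau\gamma G |S|$ the parenthesis is at most $\tau$, which is the desired bound $\tau^{|S|}$. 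The ``In particular'' clause then follows by taking $t = \tau\gamma\alpha$: the assumption $G \geq \alpha/|S|$ gives $\tau\gamma\alpha \leq \tau\gamma G|S|$, so the just-proved inequality applies. Edge cases (some $\sigma_i = 0$, hence $G = 0$) are trivial, since then the hypothesis forces $t = 0$ and the event has probability zero.

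There is no serious obstacle: the argument is a textbook Chernoff optimization. The only point that requires care is the observation in the second paragraph that $\nu(x) \leq -\tfrac12\log(\pi x/2)$ may be used uniformly in $x > 0$; without this one would have to split the product $\prod_{i \in S}$ according to whether $c^2\sigma_i^2$ exceeds $2/\pi$, and the bound would no longer factor cleanly through the geometric mean~$G$.
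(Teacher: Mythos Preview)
Your proof is correct and matches the paper's argument essentially line for line: the paper also invokes \Cref{lem:gauss-l1}, uses the logarithmic branch of the bound on $\nu$ for the indices in $S$ (the remaining indices contribute $\nu\leq 0$, which is equivalent to your step of passing to $X_S$), sets $c=|S|/t$, and arrives at the same constant $\gamma=\sqrt{\pi/2}/e$. The only cosmetic difference is that you explicitly optimize over $c$ whereas the paper simply plugs in the optimizer.
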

\begin{proof}
  Set $\gamma = \sqrt{\pi/2}/e$.  We now use \Cref{lem:gauss-l1} and
  the upper bound on the function $\nu$ defined there, after exercising
  our choice for $c$ by setting $c = |S|/t$.  We then have
  \begin{align*}
    \Pr{\norm[1]{X} < t}
    &\leq \exp\big(ct + |S|\log(\sqrt{2/\pi}) \notbool{j2col}{}{\\}
    \notbool{j2col}{}{
    &\qquad\qquad\qquad
      }
      - |S|\log c - |S| \log G\big)\\
    &= \exp\inp{|S|\inp{-\log \gamma - \log |S| + \log t -
      \log |G|}} \\
    &= \inp{\frac{t}{\gamma G |S|}}^{|S|}.
  \end{align*}
  Substituting $t = \tau \gamma |G| S$, we get the claimed result.
\end{proof}

The standard fact below shows that it is sufficient to consider mean
$0$ vectors in the setting of \Cref{cor:conc}.  We include a proof for
completeness.
\begin{fact}[\textbf{Stochastic domination of absolute values of
    Gaussians}]
  Let $X \sim \mathcal{N}(0, \sigma^2)$.  Then the random variable
  $\abs{a + X}$ stochastically dominates the random variable $\abs{b +
    X}$ whenever $\abs{a} > \abs{b}$.
  \label{fct:stochastic}
\end{fact}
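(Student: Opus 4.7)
My plan is to reduce the stochastic domination claim to a monotonicity statement about a one-parameter family of probabilities, and then verify that monotonicity by a short calculus computation that uses only symmetry and unimodality of the Gaussian density. First, since $X \sim \mathcal{N}(0,\sigma^2)$ is symmetric about the origin, $|c + X|$ has the same distribution as $|-c + X|$, so the law of $|c+X|$ depends only on $|c|$. It therefore suffices to assume $a > b \geq 0$ and prove that the function $F(c) \defeq \Pr{\abs{c+X} \leq t}$ is nonincreasing in $c$ on $[0,\infty)$ for every fixed $t \geq 0$. Applying this with $c = a$ and $c = b$ then gives $F(a) \leq F(b)$, which is exactly $\Pr{\abs{a+X} > t} \geq \Pr{\abs{b+X} > t}$ for all $t \geq 0$, the desired stochastic domination.

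To establish the monotonicity, I would write $F(c) = \int_{-t-c}^{\,t-c}\phi(x)\,dx$, where $\phi$ denotes the density of $\mathcal{N}(0,\sigma^2)$, and differentiate using the Leibniz rule:
\[
F'(c) \;=\; -\phi(t-c) + \phi(-t-c) \;=\; \phi(t+c) - \phi(t-c),
\]
where the second equality uses that $\phi$ is even. For $c,t \geq 0$ one has $(t+c)^2 \geq (t-c)^2$, and since $\phi$ is strictly decreasing in $|x|$ this gives $\phi(t+c) \leq \phi(t-c)$, hence $F'(c) \leq 0$. This yields the required monotonicity on $[0,\infty)$ and completes the argument.

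I do not anticipate any real obstacle here: the statement is essentially the standard observation that a symmetric unimodal density assigns larger mass to intervals centered at its mode than to equally long intervals centered farther away, and the Gaussian is both symmetric and unimodal. The only step that needs a moment of care is the initial reduction via symmetry to $a,b \geq 0$; after that the computation is a one-liner, and in fact the same argument would work for any symmetric unimodal density in place of $\phi$.
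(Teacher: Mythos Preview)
Your proof is correct and follows essentially the same approach as the paper: both reduce by symmetry to nonnegative shifts and then differentiate the tail (equivalently, the CDF) of $|c+X|$ with respect to $c$ to establish monotonicity. The paper computes the derivative in terms of the Gaussian tail $G$ and simplifies to the explicit expression $\sqrt{2/\pi}\,e^{-(x^2+y^2)/2}\sinh(xy)\geq 0$, whereas you stop at $\phi(t+c)-\phi(t-c)\leq 0$ and note (correctly) that only symmetry and unimodality of $\phi$ are used; these are the same computation up to a sign and a cosmetic simplification.
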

\begin{proof}
  Without loss of generality, we assume $\sigma = 1$ and $a > b > 0$.
  Now for any fixed $y \geq 0$ and $x \in [b, a]$, we have
  \begin{displaymath}
    \Pr{\abs{x + X} \geq y} = f(x) \defeq G(y + x) + G(y - x),
  \end{displaymath}
  where
  $G(x) \defeq
  \frac{1}{\sqrt{2\pi}}\int\limits_{x}^{\infty}\exp\inp{-t^2/2}dt$ is
  the Gaussian tail.  The claim now follows from the calculation that
  \begin{displaymath}
    f'(x) = \sqrt{\frac{2}{\pi}}\exp\inp{-(x^2 + y^2)/2}\sinh(xy) \geq 0
  \end{displaymath}
  for $x, y \geq 0$.
\end{proof}
A standard coupling argument gives the following corollary.
\begin{corollary}
  \label{cor:stochastic-domination}
  Let
  $\vec{X} \sim \mathcal{N}\inp{0,
    \mathop{\mathrm{diag}}\inp{(\sigma_i^2)_{i=1}^n}}$, and
  $\vec{v} \in \R^{n}$.  For any $t \geq 0$, we have
  \begin{displaymath}
    \Pr{\norm[1]{\vec{v} + \vec{X}} < t} \leq \Pr{\norm[1]{\vec{X}} < t}.
  \end{displaymath}
\end{corollary}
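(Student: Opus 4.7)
The plan is to reduce the statement to a componentwise stochastic-domination inequality and then promote it to a statement about the sum $\norm[1]{\vec{v}+\vec{X}} = \sum_{i=1}^n |v_i + X_i|$ by an independent coupling. Note first that the inequality $\Pr{\norm[1]{\vec{v}+\vec{X}} < t} \leq \Pr{\norm[1]{\vec{X}} < t}$ for all $t \geq 0$ is, by definition, equivalent to the assertion that the (non-negative) random variable $\norm[1]{\vec{v}+\vec{X}}$ stochastically dominates $\norm[1]{\vec{X}}$. So it suffices to establish this domination.

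The first step is to observe that since $X_i \sim \mathcal{N}(0, \sigma_i^2)$ is symmetric, $\abs{v_i + X_i}$ has the same distribution as $\abs{\abs{v_i} + X_i}$ (because we may negate $X_i$ when $v_i < 0$ without changing its law). Applying \Cref{fct:stochastic} with $a = \abs{v_i}$ and $b = 0$, we obtain that for every $i$, the random variable $\abs{v_i + X_i}$ stochastically dominates $\abs{X_i}$.

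The second step is to lift these $n$ marginal dominations to a statement about the sum, which is where the coupling comes in. For each $i$, let $F_i$ and $G_i$ denote the cumulative distribution functions of $\abs{v_i + X_i}$ and $\abs{X_i}$ respectively; the marginal domination established above says $F_i(s) \leq G_i(s)$ for all $s \geq 0$, which translates to $F_i^{-1}(u) \geq G_i^{-1}(u)$ for all $u \in (0,1)$ when we use the usual quantile (generalized inverse) definition. Let $U_1,\dots,U_n$ be independent uniform random variables on $[0,1]$ and define, on this common probability space,
\begin{displaymath}
  Y_i \defeq F_i^{-1}(U_i), \qquad Z_i \defeq G_i^{-1}(U_i).
\end{displaymath}
Then $(Y_1,\dots,Y_n)$ has the same joint law as $(\abs{v_1+X_1},\dots,\abs{v_n+X_n})$ (both are products of the correct marginals by independence of the $U_i$ and of the $X_i$), and similarly $(Z_1,\dots,Z_n)$ has the same joint law as $(\abs{X_1},\dots,\abs{X_n})$. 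Moreover, by construction $Y_i \geq Z_i$ pointwise for every $i$.

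The third and final step is immediate: summing the pointwise inequalities gives $\sum_i Y_i \geq \sum_i Z_i$ almost surely, so $\sum_i Y_i$ stochastically dominates $\sum_i Z_i$, hence $\norm[1]{\vec{v}+\vec{X}}$ stochastically dominates $\norm[1]{\vec{X}}$, which is exactly the claimed inequality. There is no real obstacle here; the one mild subtlety is the reduction in step one to non-negative shifts $\abs{v_i}$ (so that \Cref{fct:stochastic} applies verbatim), and the verification that the quantile coupling correctly reproduces the joint law by independence. Both are routine.
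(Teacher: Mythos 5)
Your proof is correct and is exactly the ``standard coupling argument'' the paper alludes to (the paper gives no details beyond that phrase): reduce to componentwise stochastic domination via Fact~\ref{fct:stochastic}, lift to a pointwise coupling by independent quantile transforms, and sum. The only cosmetic point is that Fact~\ref{fct:stochastic} is stated with a strict inequality $|a|>|b|$, so for indices with $v_i = 0$ one just notes the domination holds with equality; this is trivial and does not affect the argument.
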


\begin{theorem}[\textbf{Lower tail of the $\norm[1]{\cdot}$-norm with
    non-zero means}]
  \label{thm:l1-concentration}
  For any $\gamma \in (0, 1)$, there exists a positive constant
  $c = c(\gamma)$ such that the following is true.  Let
  $\vec{X} \sim \mathcal{N}(0, \inp{\sigma_i}_{i=1}^n)$ be a Gaussian
  random vector with mean $0$ and independent co-ordinates with
  non-zero variance, and let $\vec{a} \in \R^n$ be an arbitrary
  vector.
  \begin{displaymath}
    \Pr{\norm[1]{\vec{X} + \vec{a}}
      < \gamma \norm[1]{\vec{a}}}
    \leq \exp\inp{-c \frac{(\sum_i \sigma_i)^2}{\sum_i \sigma_i^2}}.
  \end{displaymath}
\end{theorem}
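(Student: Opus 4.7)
The plan is to apply Gaussian concentration for Lipschitz functions to $F(\vec{X}) \defeq \norm[1]{\vec{X} + \vec{a}}$, after establishing a quantitative lower bound on $\E{F(\vec{X})} - \gamma\norm[1]{\vec{a}}$. Writing $X_i = \sigma_i Z_i$ with $\vec{Z} \sim \mathcal{N}(0, I_n)$, the function $F$ viewed as a function of $\vec{Z}$ has (almost-everywhere) gradient whose $i$-th coordinate is $\sigma_i\,\mathrm{sign}(\sigma_i Z_i + a_i)$, so its $\norm[2]{\cdot}$-Lipschitz constant is at most $L \defeq \sqrt{\sum_i \sigma_i^2}$. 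The standard Gaussian concentration inequality (Borell--Tsirelson--Ibragimov--Sudakov) then yields $\Pr{F(\vec{X}) < \E{F(\vec{X})} - t} \leq \exp\inp{-t^2/(2L^2)}$ for every $t > 0$.

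The heart of the argument is to show that $\E{F(\vec{X})} - \gamma\norm[1]{\vec{a}} \geq c'(\gamma) \sum_i \sigma_i$ for some positive constant $c'(\gamma)$; once this gap is in hand, setting $t = c'(\gamma)\sum_i \sigma_i$ in the concentration bound above immediately yields the theorem with $c(\gamma) = c'(\gamma)^2/2$. Jensen's inequality applied coordinatewise gives $\E{\abs{X_i + a_i}} \geq \abs{a_i}$, but this must be sharpened in the ``small mean'' regime $\abs{a_i} \leq \sigma_i$. In that regime, a direct evaluation of the folded-normal mean shows $\E{\abs{X_i + a_i}} - \abs{a_i} \geq c_0\,\sigma_i$ for some absolute constant $c_0 > 0$, since the scaled excess $t \mapsto \sqrt{2/\pi}\,e^{-t^2/2} + t\,\mathrm{erf}(t/\sqrt{2}) - t$ is continuous and strictly positive on $[0, 1]$.

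I will then partition $[n]$ into $I \defeq \inb{i : \abs{a_i} \leq \sigma_i}$ and its complement. Summing the two coordinatewise bounds gives $\E{F(\vec{X})} - \norm[1]{\vec{a}} \geq c_0 \sum_{i \in I}\sigma_i$, while at the same time $(1 - \gamma)\norm[1]{\vec{a}} \geq (1 - \gamma)\sum_{i \notin I}\abs{a_i} \geq (1 - \gamma)\sum_{i \notin I}\sigma_i$ by the definition of $I$. Adding these two inequalities produces $\E{F(\vec{X})} - \gamma\norm[1]{\vec{a}} \geq \min(c_0, 1-\gamma)\sum_i \sigma_i$, so it suffices to take $c'(\gamma) = \min(c_0, 1-\gamma)$.

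The only step with real content is the sharpening of the coordinatewise Jensen inequality in the small-mean regime, and I expect this to be the main obstacle in that it requires an elementary but concrete computation with the folded-normal distribution; everything else is a routine two-case split combined with standard Gaussian Lipschitz concentration. The non-smoothness of the absolute value is not an issue here, since $F$ is globally Lipschitz and Borell's inequality applies to all Lipschitz functions.
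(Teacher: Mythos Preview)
Your proof is correct and follows essentially the same approach as the paper: both apply Gaussian Lipschitz (sub-gaussian) concentration to $F(\vec{X}) = \norm[1]{\vec{X}+\vec{a}}$ and then lower-bound $\E{F(\vec{X})} - \gamma\norm[1]{\vec{a}}$ by a constant times $\sum_i\sigma_i$. The one difference is that the paper avoids your case split by observing (via stochastic domination, \Cref{fct:stochastic}) that $\E{\abs{X_i+a_i}} \geq \E{\abs{X_i}} = \sigma_i\sqrt{2/\pi}$ holds for \emph{every} $i$, so that $\E{F(\vec{X})} \geq \max\inb{\norm[1]{\vec{a}},\, \sqrt{2/\pi}\sum_i\sigma_i}$ and hence $\E{F(\vec{X})} - \gamma\norm[1]{\vec{a}} \geq (1-\gamma)\E{F(\vec{X})} \geq (1-\gamma)\sqrt{2/\pi}\sum_i\sigma_i$ directly.
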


\begin{proof}
  A direct calculation (or the fact that the map
  $X \mapsto \abs{X + a}$ is 1-Lipschitz and the
  Cirel'son-Ibragimov-Sudakov inequality~(\cite{76:_cirel_b}, as
  stated in \cite[Theorem 3.2.2]{raginsky_concentration_2013}) implies
  that each $\abs{X_i + a_i}$ is a sub-gaussian random variable with
  mean $\mu_i = \E{\abs{X_i + a_i}}$ and sub-gaussian parameter
  $\sigma_i$.  Further, note that $\mu_i \geq \abs{a_i}$ (due to
  Jensen's inequality) and
  $\mu_i \geq \E{\abs{X_i}} = \sigma_i\sqrt{2/\pi}$ (by
  \Cref{fct:stochastic}).

  Since the $X_i$ are independent, this implies that
  $Z \defeq \norm[1]{X + a} = \sum_i \abs{X_i + a_i}$ is also a
  sub-gaussian random variable with mean $\sum_i{\mu_i}$ and
  sub-Gaussian parameter $\sqrt{\sum_i \sigma_i^2}$.  Further, since
  $\mu_i \geq \abs{a_i}$, we have $\E{Z} \geq \norm[1]{a}$, so that
  \begin{flexeq*}
    \Pr{Z \leq \gamma \norm[1]{a}} \leq \Pr{Z \leq \gamma E[Z]}
    \eqbreak \leq \exp\inp{-c'(1-\gamma)^2 \E{Z}^2/\inp{\sum_i
        \sigma_i^2}},
  \end{flexeq*}
  where the second inequality uses the fact that $Z$ is sub-gaussian
  with sub-gaussian parameter $\sqrt{\sum_i \sigma_i^2}$.  The claim
  now follows once we recall that $\mu_i \geq \sigma_i\sqrt{2/\pi}$ so
  that $\E{Z} \geq \sqrt{2/\pi}\sum_i \sigma_i$.
\end{proof}

\section*{Acknowledgments}
We thank anonymous reviewers for several helpful comments and suggestions.

\bibliographystyle{alphaurl}
\bibliography{encodings}

\end{document}